\newcommand{\BR}[1]{BR$_ #1$}
\newcommand{\KR}[1]{KR$_ #1$}
\declaretheorem[name=Theorem, numberwithin=section]{theorem} 
\declaretheorem[name=Lemma, sibling=theorem]{lemma}
\declaretheorem[name=Definition, sibling=theorem]{definition}
\declaretheorem[name=Corollary, sibling=theorem]{corollary}
\declaretheorem[name=Claim, sibling=theorem]{claim}
\declaretheorem[name=Remark, style=remark, sibling=theorem]{remark}
\declaretheorem[name=Question, style=remark, sibling=theorem]{question}
\newtheorem*{main-thm}{Main Theorem}
\newcommand{\algopart }{\texttt{Partitionning-Algorithm}~ }
\newcommand{\RR}{\mathbb{R}}
\newcommand{\G}{\mathcal{G}}
\newcommand{\D}{\mathcal{D}} %disk
\newcommand{\T}{\mathcal{T}} 
\newcommand{\I}{\mathcal{I}}
\newcommand{\mC}{\mathcal{MC}} 
\newcommand{\eps}{\varepsilon}
\newcommand{\mS}{\mathcal{S}}
\newcommand{\mR}{{\mathcal{R}}}
\newcommand{\dM}{d_M}
\newcommand{\pom}{\partial_{\overline{M}}}
\newcommand{\bom}{b_{\overline{M}}}
\newcommand{\mT}{\mathcal{T}}
\newcommand{\mH}{\mathcal{H}}
\newcommand{\KttTilde}{\widetilde{K_{t,t}}}
\renewcommand{\d}{d}
\newcommand{\f}{\tau}
\newcommand{\tdp}{T_{DP}}
\newcommand{\CDP}{C_{DP}}
\newcommand{\intv}[2]{\left \{ #1, \dots, #2 \right \}}
\DeclareMathOperator\grid{\boxplus}
\DeclareMathOperator\tw{\textsf{tw}}
\DeclareMathOperator\poly{\textsf{poly}}
\DeclareMathOperator\db{\mu}
\DeclareMathOperator\TO{\Tilde{\O}}
\newcommand{\fvs}{\textsc{Feedback Vertex Set}\xspace}
\newcommand{\afvs}{\textsc{$(r,\G)$-Annotated Feedback Vertex Set}\xspace}
\newcommand{\FVS}{\textsc{FVS}\xspace}
\newcommand{\AFVS}{\textsc{$(r,\G)$-Ann-FVS}\xspace}
\newcommand{\ruleref}[1]{\hyperref[#1]{\ref*{#1}}}
\renewcommand{\O}{\mathcal{O}}
\def\cqedsymbol{\ifmmode$\lrcorner$\else{\unskip\nobreak\hfil
\penalty50\hskip1em\null\nobreak\hfil$\lrcorner$
\parfillskip=0pt\finalhyphendemerits=0\endgraf}\fi} 
\newcommand{\cqed}{\renewcommand{\qed}{\cqedsymbol}}
\newcommand{\apx}[1]{\hyperref[#1]{\LeftScissors}}
\title{
Pushing the frontiers of subexponential FPT time for Feedback Vertex Set}
\author[G.~Berthe]{Gaétan Berthe}
\author[M.~Bougeret]{Marin Bougeret}
\author[D.~Gonçalves]{Daniel Gonçalves}
\address[G.~Berthe, M.~Bougeret, D.~Gonçalves]{LIRMM, Université de Montpellier, CNRS, Montpellier, France.}
\email{firstname.lastname@lirmm.fr}
\author[J.-F.~Raymond]{Jean-Florent Raymond}
\address[J.-F.~Raymond]{CNRS, ENS de Lyon, Université Claude Bernard Lyon 1, LIP, UMR5668,
  Lyon, France.}
\email{jean-florent.raymond@cnrs.fr}
\date{April 2025}
\begin{document}

\maketitle

\begin{abstract}
The paper deals with the \fvs problem parameterized by the solution size. Given a graph $G$ and a parameter $k$, one has to decide if there is a set $S$ of at most $k$ vertices such that $G-S$ is acyclic.
Assuming the Exponential Time Hypothesis, it is known that \FVS cannot be solved in time $2^{o(k)}n^{\O(1)}$ in general graphs. To overcome this, many recent results considered \FVS restricted to particular intersection graph classes and provided such $2^{o(k)}n^{\O(1)}$ algorithms.

In this paper we provide generic conditions on a graph class for the existence of an algorithm solving \FVS in subexponential FPT time, i.e. time $2^{k^\eps} \mathop{\rm poly}(n)$, for some $\eps<1$, where $n$ denotes the number of vertices of the instance and $k$ the parameter.
On the one hand this result unifies algorithms that have been proposed over the years for several graph classes such as planar graphs, map graphs, unit-disk graphs, pseudo-disk graphs, and string graphs of bounded degree. On the other hand, it extends the tractability horizon of \FVS to new classes that are not amenable to previously used techniques, in particular intersection graphs of ``thin'' objects like segment graphs or more generally $s$-string graphs.
\end{abstract}

\section{Introduction}
\label{sec:intro}
%\todo[inline]{JF: Pour la prochaine passe il y a les choses suivantes à faire:
%- régler mes todos (chercher "todo" et "textcolor"
%- faire quelque chose de ce qui est en annexe (intégrer dans l'intro ou virer)
%- bien relire les calculs de complexité
%}

\subsection{Context}

Given an $n$-vertex graph $G$ and a parameter $k\in \mathbb{N}$, the \fvs problem (\FVS for short) asks whether there exists a set $S$ of at most $k$ vertices such that $G-S$ has no cycle. 
This is a fundamental decision problem in graph theory and one of Karp's 21 NP-complete problems. Because of its hardness in a classical setting, the problem has been widely studied within the realm of parameterized complexity.
This line of research aims to investigate the existence of \emph{FPT algorithms}, i.e., algorithms that run in time $f(k)\cdot n^{\O(1)}$, for some computable function $f$. Such algorithms provide a fine-grained understanding on the time complexity of a problem and describe regions of the input space where the problem can be solved in polynomial time. Note that it is crucial here to obtain good bounds on the function $f$ since the (potentially) super-polynomial part of the running time is confined in the $f(k)$ term. In this direction it was proved that under the Exponential Time Hypothesis of Impagliazzo, Paturi and Zane, FVS does not admit an algorithm with running time $2^{o(k)}n^{\O(1)}$ (see \cite{Cygan2015Book}). Nevertheless certain classes of graphs (typically planar graphs) have been shown to admit algorithms with running times of the form $2^{\O(k^{\eps})} n^{\O(1)}$ (for some $\eps<1$), i.e., where the contribution of the parameter $k$ is subexponential. Such algorithms are called \emph{subexponential parameterized algorithms} and they are the topic of this paper.
Among the numerous existing results on this theme, there are two main directions of research: to improve the running times in the classes where an algorithm is already known, or to extend the tractability horizon of \FVS by providing more general settings where subexponential FPT algorithms exist.

We are here interested by the second direction, which can be summarized by the following question.
\begin{question}\label{q:main}
What are the most general graph classes where \FVS admits a subexponential parameterized algorithm ?
\end{question}

Historically, a primary source of graph classes studied to make progress on the above question was geometric intersection graphs.
In an \emph{intersection graph}, each vertex corresponds to a subset of some ambient space, and two vertices are adjacent if and only if the subsets intersect. Taking the Euclidean plane as the ambient space, many graph classes can be defined by setting restrictions on the subsets used to represent the vertices. One can for instance consider intersection graphs of disks in the plane, or segments, or Jordan arcs.\footnote{In the following, those are called strings.}
With such subsets, one defines the class of {\em disk graphs}, {\em segment graphs}, and {\em string graphs} respectively. It is also often the case that there are conditions dealing with all the $n$ subsets representing the vertices of a given graph. For example, if we consider disks (resp. segments) one can ask those to have the same diameter (resp. to use at most $d$ different slopes), and this defines the class of {\em unit disk graphs} (resp. {\em $d$-DIR graphs}). When considering strings, one possible property is that any string has at most $d$ points shared with the other considered strings. This defines string graphs with {\em edge-degree at most~$d$}. It was recently proved that for a string graph, having bounded edge-degree is in fact equivalent to having bounded degree~\cite{karol2025}.
A weaker condition is to ask every pair of the considered strings to intersect on at most $s$ points, this defines {\em $s$-string graphs}.
This class generalizes several natural classes such as planar graphs, map graphs, unit-disk graphs, segment graphs, string graphs of bounded degree, and intersection graphs of $\alpha$-convex bodies that exclude a fixed subgraph (see \cite{matouvsek2014string, kratochvil1994intersection, baste2022contraction}).

For all these graph classes, \FVS is NP-complete, and actually under ETH none of them admits a $2^{o(\sqrt{n})}$-time algorithm. Indeed, this lower bound was given in~\cite{deberg_unit} for induced grid graphs, which form a subclass of unit disk graphs and 2-DIR graphs. On the other hand, for each of the aforementioned classes there is an algorithm solving \FVS in subexponential time. More precisely, this algorithm applies to any string graph and runs in time $2^{\TO(n^{2/3})}$~\cite{bonnet2019optimality}.\footnote{The notation $\TO$ ignores polylogarithmic factors, i.e. we write $g(x) = \TO(f(x))$ if for some $c$ we have $g(x) = \O(f(x) \cdot \log^c x)$.}

Regarding subexponential parameterized algorithms, the case of unit disk graphs was settled with an algorithm whose running time matches the ETH lower bound~\cite{an_unit_21}. This result uses the fact that these graphs admit vertex partitions into cliques such that each of these cliques is adjacent to only a constant number of the other cliques. Such a property does not hold for the other graph classes mentioned above. 
However, other techniques have been developed to deal with the other aforementioned classes such as the classes of bounded edge-degree string graphs~\cite{baste2022contraction}, contact-segment and square graphs~\cite{berthe24ASQGM}, disk graphs~\cite{lokSODA22,Faster2023Shinwoo}, or the pseudo-disk graphs~\cite{FVS-WG}.
Note that when dealing with classes of intersection graphs, the representation of the input (if known) could be used by the algorithm.
Some of these algorithms are {\em robust}, meaning that the input graph $G$ is provided using one of the classical graph data structures, where there is no indication of the intersection model of $G$. Because the recognition problem is difficult for most of the classes discussed above, robustness is a substantial advantage.

\subsection{Our contribution}
Toward answering \autoref{q:main}, we identify sufficient conditions for a graph class (then said to be \emph{nice}) to admit a subexponential parameterized algorithm for \FVS. 
As we will see later these conditions are satisfied by several natural graph classes, some of which were not known to admit a subexponential parameterized algorithm prior to this work.

Let us now provide some intuition behind the conditions we require for a \emph{nice} graph class. We discuss here the similarities between these conditions and classical studied properties, while the reasons why these conditions help to get a subexponential parameterized algorithm for \FVS are examined in \autoref{sec:techniques}.
A starting point is to review known results about the class of string graphs, which constitutes a good candidate to answer the previous question. In particular, the following results are known for string graphs. For a graph $H$, let us say that a graph is \emph{$H$-free} if it does not contain $H$ as a subgraph.
\begin{theorem}[\cite{lee2016separators}]\label{th:Lee}
    There exists a constant $c$ such that for $r>0$ it holds that every $K_{r,r}$-free string graph on $n$ vertices has at most $cr(\log r) n$ edges.
\end{theorem}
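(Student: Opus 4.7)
My plan is to derive the edge bound from a separator theorem for string graphs by a recursive/inductive argument that exploits the $K_{r,r}$-free hypothesis to bound the edges at the separator. The starting ingredient is the string-graph separator theorem (Matoušek; Fox–Pach; refined by Lee): every string graph $G$ with $m$ edges admits a balanced vertex separator $S$ with $|S| = \O(\sqrt{m})$, whose removal breaks $G$ into components of size at most $\alpha n$ for some constant $\alpha < 1$.

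I would proceed by strong induction on $n$. Let $f(n)$ be the maximum number of edges of a $K_{r,r}$-free string graph on $n$ vertices; aim to prove $f(n) \le c\, r\log r \cdot n$ for $c$ large enough. Given an extremal $G$ with $n$ vertices and $m = f(n)$ edges, apply the separator theorem to obtain $S$ with $|S| = \O(\sqrt{m})$ and components $A,B$ with $|A|,|B| \le \alpha n$. Partition the edges as
\[
 m \;=\; e(G[A]) + e(G[B]) + e_S,
\]
where $e_S$ counts edges with at least one endpoint in $S$. By the induction hypothesis applied to $G[A]$ and $G[B]$ (which are still $K_{r,r}$-free string graphs), the first two terms together are at most $c\, r \log r \cdot (n-|S|)$. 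What remains is to bound $e_S$ and to close the induction by choosing $c$ appropriately.

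The main obstacle, and the heart of the argument, is bounding $e_S$ in terms of $r \log r$ and $|S|$ (roughly $\O(r \log r \cdot |S|)$ would suffice). A direct Kővári–Sós–Turán bound on the $K_{r,r}$-free bipartite graph between $S$ and $V\setminus S$ only yields $\O((|S|\cdot n)^{1-1/r})$, which is far too weak to close the induction. The plan is to bootstrap: note that the neighbourhood $N[S]$ induces a smaller $K_{r,r}$-free string graph, so recursing on $N[S]$ (and iterating the separator theorem within $S$) lets one charge the cut edges against either the separator's own degree budget or an inductive edge count on a strictly smaller vertex set. Combined with $|S| = \O(\sqrt{m})$, the inequality $m \le c\, r\log r \cdot (n-|S|) + \O(r \log r \cdot \sqrt{m})$ is self-improving and forces $m = \O(r\log r \cdot n)$, yielding the claim with $c$ chosen slightly larger than the inductive constant to absorb the lower-order $\sqrt{m}$ term.
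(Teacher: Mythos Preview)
The paper does not prove this theorem: it is quoted verbatim from \cite{lee2016separators} (with the footnote noting that the original proof contains an error, correctable for string graphs, and that earlier work yields the same up to extra logarithmic factors). There is therefore no ``paper's own proof'' to compare against; I evaluate your sketch on its own merits.

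Your overall shape---combine the $\O(\sqrt{m})$ separator theorem for string graphs with an induction on $n$---is indeed the route taken by Fox--Pach and by Lee. The gap is exactly where you flag it: the bound on $e_S$. You claim $e_S=\O(r\log r\cdot |S|)$, i.e.\ that the separator vertices have average degree $\O(r\log r)$ in $G$. Nothing in the separator theorem guarantees this; $S$ may consist of the highest-degree vertices of $G$. Your proposed bootstrap (``recurse on $N[S]$'') does not work as stated: $N[S]$ can have up to $n$ vertices, so the induction hypothesis is unavailable, and even a bound $e(G[N[S]])\le c\,r\log r\,|N[S]|$ would not give $e_S\le c\,r\log r\,|S|$. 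In effect, the statement $e_S=\O(r\log r)\cdot|S|$ \emph{is} the degeneracy bound you are trying to prove, so the argument is circular.

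The actual proofs avoid bounding $e_S$ directly. One standard variant includes $S$ in both recursive pieces, obtaining $m\le f(|A|+|S|)+f(|B|+|S|)\le c\,r\log r\,(n+|S|)\le c\,r\log r\,n + c\,c_0\,r\log r\,\sqrt{m}$, and then either solves the resulting quadratic in $\sqrt{m}$ with a strengthened induction hypothesis, or rephrases everything as a degeneracy statement (pass first to a subgraph of large minimum degree, then show the recursive separator removal forces that minimum degree to be small). Getting the sharp $r\log r$ dependence, rather than $r(\log r)^{\O(1)}$, relies on Lee's improved (log-free) separator bound together with a careful accounting; your sketch does not indicate how either step is carried out.
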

\begin{theorem}[\cite{lee2016separators,DVORAK2019137}]\label{th:tw}
$K_{r,r}$-free string graphs on $n$ vertices have treewidth $\O(\sqrt{nr\log r})$.
\end{theorem}

These results\footnote{Note that an error was found in the proof of the above results in~\cite{lee2016separators}, but a claim by the author was made that the proof can be corrected in the case of string graphs (see~\cite{Bonnet2024}). Moreover the earlier bound in~\cite{Matousek14} yields similar results, up to logarithmic factors.
} are interesting in our case, as a simple folklore branching allows us to reduce the problem to the case where the instance $(G,k)$ of \FVS is $K_{r,r}$-free for $r=\lceil k^\eps \rceil$.
Thus, among the conditions required for a graph class to be nice,
two of them correspond to a relaxed version of the above theorems.

%\todo[inline]{Il y a 9 occurences de $|\{N... \}|$, pour les neighborhood complexity de différents genres. Est-ce qu'on peut définir $\#N_X(\mathcal{S})$ pour $X\subseteq V(G)$ et $\mathcal{S} \subseteq 2^{V(G)}$, par $\#N_X(\mathcal{S}) = |\{N_X(S), s\in \mathcal{S}\}|$.}

Our last main condition is related to neighborhood complexity.
A graph class $\mathcal{G}$ has \emph{linear neighborhood complexity (with ratio $c$)} if for any graph $G\in \mathcal{G}$ and any $X \subseteq V(G)$, we have that $|\{N(v) \cap X, v \in V(G)\}| \le c|X|$. 
It is known that graph classes of bounded expansion have linear neighborhood complexity~\cite{reidl2019characterising} as well as graph classes of bounded twin-width~\cite{bonnet2024neighbourhood}.
In previous works on parameterized subexponential algorithms \cite{Lokshtanov23Approx,Faster2023Shinwoo,FVS-WG,berthe24ASQGM}, it appeared useful that the considered graphs have the property that,
if $G$ is $K_{r,r}$-free (or even $K_r$-free), then for any $X \subseteq V(G)$, we have that $|\{N(v) \cap X, v \in V(G)\}| \le r^{\O(1)}|X|$.
Notice that this is slightly stronger than requiring that a class that is $K_{r,r}$-free (or $K_r$-free) for a fixed $r$ has linear neighborhood complexity, as it is important for our purpose that the dependency in $r$ is polynomial. 
We point out that $K_{r,r}$-free string graphs have bounded-expansion (by~\autoref{th:tw} and~\cite{Esperet2018}), hence linear neighborhood complexity, however this does not imply that the dependency in $r$ is polynomial.
Thus, our last main condition (called \emph{bounded tree neighborhood complexity}) can be seen as a slightly stronger version of this ``polynomially dependent'' neighborhood complexity.
Let us now proceed to the formal definitions.

\begin{definition}\label{def:NCtrees}
We say that a graph class $\mathcal{G}$ has \emph{bounded tree neighborhood complexity} (with parameters  $\alpha,f_1,f_2$) if there exist an integer $\alpha$ and two polynomial functions $f_1,f_2$ such that the following conditions hold.
For every $r$, every $K_{r,r}$-free graph $G \in \mathcal{G}$, every set $A\subseteq V(G)$ and every family $\mT$ of disjoint non-adjacent\footnote{Two vertex subsets are \emph{non-adjacent} in a graph if there is no edge from one to the other, see \autoref{subsec:basic}.} vertex subsets of $G-A$, each inducing a tree:
\begin{enumerate}
    \item $|\{N_A(T),~T\in \mT \}|\leq f_1(r) |A|^{\alpha}$, where $N_A(T)$ denotes the neighbors of the vertices of $T$ in $A$, and%
    \label{def:NCtrees:item1}
    \item $|\{N_A(T),~T\in \mT\}|\leq f_2(r,p,m)|A|$, where $p$ and $m$ denote the maximum  over all $T\in \mT$ of $|N_A(T)|$ and $|T|$ respectively. \label{def:NCtrees:item2}
\end{enumerate}
\end{definition}

\begin{definition}\label{def:prop}
\renewcommand{\theenumi}{C\arabic{enumi}}
We say that an hereditary graph class $\mathcal{G}$ is \emph{nice} (for parameters $\alpha,f_1,f_2, \delta, \f, \d$) if all the following conditions hold: 
\begin{enumerate}
    \item $\mathcal{G}$ is stable by contraction of an edge between degree-two vertices that do not belong to a triangle.\label{def:prop:deg2}
    \item $\mathcal{G}$ has bounded tree neighborhood complexity (for some parameters $\alpha,f_1,f_2$).\label{def:prop:nc}
    \item There exist a constant $\delta < 1$ and a polynomial function $\f(r)$ such that for any $K_{r,r}$-free graph $G \in \mathcal{G}$, $\tw(G) \le \f(r)\cdot n^{\delta}$.\label{def:prop:tw}
    \item The density $|E(G)|/|V(G)|$, of any $K_{r,r}$-free graph $G \in \mathcal{G}$, is upper bounded by a polynomial function $\d(r)$. Without loss of generality we will assume $\d(r)\geq r$.\label{def:prop:sparse}
\end{enumerate}
\end{definition}

Our main result is the following. 
\begin{main-thm}\label{main-thm}
For every nice hereditary graph class $\mathcal{G}$ there is a constant $\eta<1$ such that FVS can be solved in $\mathcal{G}$ in time $2^{\O(k^\eta)}\cdot  n^{\O(1)}$.
\end{main-thm}

Actually, we describe a single algorithm parameterized by the parameters of the nice classes it applies to. These parameters are used to make choices during the execution of the algorithm and they are used to define $\eta$.
The techniques used to prove the above result are discussed in \autoref{sec:techniques}. For the time being, let us focus on consequences.
As hinted above, being nice is a natural property shared by several well-studied classes of graphs. In particular we show that it is the case for $s$-string graphs and pseudo-disk graphs, hence we have the following applications.

\begin{restatable}{corollary}{maincor}\label{cor:main}
There exists $\eta<1$, such that for all $s$ there is a robust algorithm solving \FVS in time $2^{\TO \left (s^{\O(1)}k^{\eta}\right )}n^{\O(1)}$ for $n$-vertex $s$-string graphs.
\end{restatable}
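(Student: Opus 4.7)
The plan is to derive the corollary from the Main Theorem by showing that, for every $s\in\mathbb{N}$, the class of $s$-string graphs is nice with parameters that are polynomial in $s$ and with exponents $\alpha$ and $\delta<1$ that are independent of $s$. Hereditariness is immediate, since deleting a vertex corresponds to deleting the associated curve in a representation. For condition~\ref{def:prop:deg2}, a standard curve-concatenation argument suffices: if $uv$ is an edge whose endpoints both have degree $2$ and lie in no triangle, pick a point $p\in C_u\cap C_v$ and concatenate $C_u$ and $C_v$ at $p$ into a single curve $C_{uv}$; the triangle-freeness implies that every other curve $C_w$ meets at most one of $C_u,C_v$, so $|C_w\cap C_{uv}|\le s$, and we obtain an $s$-string representation of the contracted graph. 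Conditions~\ref{def:prop:tw} and~\ref{def:prop:sparse} follow from Theorems~\ref{th:tw} and~\ref{th:Lee} applied to the larger class of string graphs: we can take $\delta=1/2$ with $\f_r=\TO(\sqrt{r})$, and $\d_r=\TO(r)$, with no dependence on $s$.

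The main obstacle is condition~\ref{def:prop:nc}, i.e.\ bounded tree neighborhood complexity for $K_{r,r}$-free $s$-string graphs. My plan is to reduce to an edge count in an auxiliary bipartite structure. Given the family $\mT$, let $H$ be the bipartite graph with sides $A$ and $\mT$ and an edge $Ta$ whenever $a\in N_A(T)$. Concatenating, for each $T\in\mT$, the curves of $T$ along a spanning tree of their intersection points yields a single curve; these concatenated curves together with the curves of $A$ form a string representation of a graph $G'$ that is still $K_{r,r}$-free (since the trees of $\mT$ are disjoint and non-adjacent, contracting them in $G$ preserves $K_{r,r}$-freeness), and whose $A$-to-$\mT$ edges are exactly those of $H$. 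Theorem~\ref{th:Lee} applied to $G'$ therefore bounds $|E(H)|=\TO(r)(|A|+|\mT|)$. Plugging this into a counting of the type used in bounded-expansion neighborhood-complexity arguments gives condition~\ref{def:NCtrees:item1} with a small absolute $\alpha$ and $f_1(r)$ polynomial in $r$; refining with the per-tree degree cap $p$ and size cap $m$ yields the linear-in-$|A|$ bound of condition~\ref{def:NCtrees:item2} with $f_2(r,p,m)$ polynomial in its arguments. The delicate point is to ensure that the dependence on $r$ remains polynomial, rather than just some function of $r$ (the latter would follow from bounded expansion alone); this is where the specific edge bound of Theorem~\ref{th:Lee} is essential, and where the dependence on $s$ enters through the polynomial factors in the auxiliary counting.

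For robustness, the algorithm of the Main Theorem takes only the abstract graph as input (its hypotheses are purely combinatorial) and therefore does not require an $s$-string representation. Substituting the parameters above into the Main Theorem yields a running time of the form $2^{\TO(s^{\O(1)}k^{\eta})}n^{\O(1)}$, where $\eta<1$ depends only on $\alpha$ and $\delta$, and is in particular independent of $s$, as claimed.
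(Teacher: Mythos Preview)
Your treatment of hereditarity, condition~\ref{def:prop:deg2}, condition~\ref{def:prop:tw}, and condition~\ref{def:prop:sparse} is essentially the same as the paper's (your contraction argument is a bit loose: if $C_u$ and $C_v$ meet in several points the naive concatenation is not a Jordan arc, so one must pass to subarcs as in the paper's proof of \autoref{lm:sstringnice}; but this is easily repaired).

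The real gap is in condition~\ref{def:prop:nc}. You correctly build the auxiliary string graph $G'$ on $A\cup\mT$ and invoke \autoref{th:Lee} to bound $|E(H)|=\TO(r)(|A|+|\mT|)$, but an edge bound alone does \emph{not} control the number of distinct neighborhoods $|\{N_A(T):T\in\mT\}|$. A bipartite $K_{r,r}$-free graph with $|A|$ vertices on one side can have exponentially many (in $|A|$) distinct neighborhoods of bounded size on the other side while still having linearly many edges; nothing in \autoref{th:Lee} rules this out. You acknowledge that bounded-expansion style arguments only give a constant depending on $r$ in some unspecified way, and then assert that the edge bound of \autoref{th:Lee} fixes this, but you do not say how; indeed the paper explicitly warns that ``$K_{r,r}$-free string graphs have bounded expansion, hence linear neighborhood complexity, however this does not imply that the dependency in $r$ is polynomial.'' Your proposal does not bridge that gap. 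A symptom: in your route no dependence on $s$ ever appears, yet you claim it ``enters through the polynomial factors in the auxiliary counting'' without identifying any step where $s$ is used.

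The paper takes a genuinely different, geometric route for \ref{def:prop:nc} (\autoref{lm:sstringtreeneigh}): it thickens each tree $T$ into a region homeomorphic to a disk, and on the $A$-side it \emph{cuts} each string of $A$ at every crossing with another string of $A$; by \autoref{th:Lee} there are at most $s\,d_r|A|$ such cuts, so $A$ is replaced by a family $C$ of $O(s\,d_r|A|)$ pairwise disjoint arcs, which (after thickening) form a pseudo-disk system. Then the pseudo-disk hypergraph bound of \autoref{hyperpseudo} directly yields $|\{N_C(T):T\in\mT\}|=O(x^3|C|)$, from which both items of \autoref{def:NCtrees} follow with $\alpha=4$, $f_1(r)=\TO(s^4 r)$, $f_2(r,p,m)=\TO((sr)^4(p+m)^3)$. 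This is where both the polynomial-in-$r$ dependence and the polynomial-in-$s$ dependence genuinely come from; your edge-count argument does not substitute for it.
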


\begin{restatable}{corollary}{maincortwo}\label{cor:main2}
There exists $\eta<1$, such that there is a robust algorithm solving \FVS in time $2^{\O(k^{\eta})}n^{\O(1)}$ for $n$-vertex pseudo-disk graphs.
\end{restatable}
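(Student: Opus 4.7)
The plan is to verify that the class of pseudo-disk graphs satisfies all four conditions of Definition \ref{def:prop} and then invoke the Main Theorem. Since the generic algorithm given by the Main Theorem operates on the abstract graph and never queries a geometric representation, the resulting algorithm is automatically robust.

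Pseudo-disk graphs are hereditary. For Condition \ref{def:prop:deg2}, given a pseudo-disk representation and an edge $uv$ between two degree-two vertices not lying in a triangle, I would exhibit a pseudo-disk representation of the contracted graph by replacing the two pseudo-disks of $u$ and $v$ with (a small perturbation of) their union. The non-triangle hypothesis guarantees that this merged region does not create any new pairwise crossings with the at most two remaining neighbors of $\{u,v\}$, and leaves all other pseudo-disks unchanged.

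For Conditions \ref{def:prop:tw} and \ref{def:prop:sparse}, I would invoke classical results on pseudo-disk graphs. The edge bound of Condition \ref{def:prop:sparse} is a consequence of the bounded VC-dimension of pseudo-disk incidence structures combined with the Kővári–Sós–Turán theorem, yielding $\O(r^{\O(1)} n)$ edges for $K_{r,r}$-free pseudo-disk graphs. Condition \ref{def:prop:tw} then follows from known sublinear balanced separators for pseudo-disk graphs, giving treewidth $\O(r^{\O(1)} \sqrt{n})$ with $\delta = 1/2$.

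The main obstacle is Condition \ref{def:prop:nc} (bounded tree neighborhood complexity). The key geometric observation is that, for each tree $T \in \mT$, the union $R_T$ of the pseudo-disks representing the vertices of $T$ is a connected region of the plane, and because the $T$'s are pairwise disjoint and non-adjacent, the regions $\{R_T\}_{T \in \mT}$ are pairwise disjoint as well. Thus $N_A(T)$ corresponds precisely to the set of pseudo-disks of $A$ that the connected region $R_T$ meets, so the set system $\{N_A(T) : T \in \mT\}$ is a family of neighborhoods of pairwise disjoint connected pseudo-objects. For item \ref{def:NCtrees:item1}, I would bound the number of distinct such neighborhoods via the bounded VC-dimension of the relevant set system together with the Sauer–Shelah lemma, obtaining $f_1(r)|A|^\alpha$ for some constant $\alpha$. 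For item \ref{def:NCtrees:item2}, the stronger bound linear in $|A|$ requires a shallow-cell-complexity argument for connected regions in pseudo-disk arrangements, exploiting the restrictions $|N_A(T)|\le p$ and $|T|\le m$ to yield a bound of the form $f_2(r,p,m)|A|$; this is where the delicate geometric combinatorics of pseudo-disk unions has to be carefully handled.
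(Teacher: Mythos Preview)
Your overall plan---verify the four conditions of Definition~\ref{def:prop} and invoke the Main Theorem---is exactly what the paper does, and your treatment of Condition~\ref{def:prop:deg2} matches the paper's. For Conditions~\ref{def:prop:tw} and~\ref{def:prop:sparse} the paper simply cites Theorems~\ref{th:tw} and~\ref{th:Lee} (Lee's results on $K_{r,r}$-free string graphs, which contain pseudo-disk graphs), so your alternative route via VC-dimension and separators is unnecessary detour, though not wrong in spirit.

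The substantive gap is in Condition~\ref{def:prop:nc}. You correctly note that each $R_T$ is connected and that the $R_T$'s are pairwise disjoint, but you miss the stronger fact that actually drives the argument: because $T$ is a \emph{tree} and the objects are pseudo-disks, $R_T$ is simply connected, hence homeomorphic to a disk. Consequently $\{R_T : T\in\mT\}$ is itself a family of (pairwise disjoint, hence trivially) pseudo-disks. At that point the paper invokes a single off-the-shelf result, Theorem~\ref{hyperpseudo}: for two pseudo-disk families $\EuScript F$ and $\EuScript B$, the number of hyperedges of cardinality at most $k$ in the incidence hypergraph $H(\EuScript B,\EuScript F)$ is $\O(|\EuScript B|k^3)$. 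Taking $\EuScript B=(\D_v)_{v\in A}$ and $\EuScript F=(R_T)_{T\in\mT}$ immediately gives $f_2(r,p,m)=\O(p^3)$ for item~\ref{def:NCtrees:item2}, and, using the trivial bound $p\le|A|$, also $f_1(r)=\O(1)$ with $\alpha=4$ for item~\ref{def:NCtrees:item1}. Note that neither $r$ nor $m$ appears anywhere.

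Your proposed route via bounded VC-dimension plus Sauer--Shelah for item~\ref{def:NCtrees:item1} and ``shallow-cell complexity'' for item~\ref{def:NCtrees:item2} is not obviously wrong, but as written it is not a proof: you have not identified which set system has bounded VC-dimension or why, and ``shallow-cell complexity for connected regions in pseudo-disk arrangements'' is a description of a hoped-for lemma rather than a reference to one. In particular, your plan to ``exploit the restriction $|T|\le m$'' suggests you do not yet see the mechanism, since the bound on $|T|$ plays no role in the pseudo-disk case. Once you observe that the $R_T$'s are themselves pseudo-disks, both items follow from one theorem with no further geometric work; this observation is the step your sketch is missing.
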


Observe that the two corollaries above encompass a wide range of classes of geometric intersection graphs for which subexponential parameterized algorithms have been given in previous works such as planar graphs, map graphs, unit-disk graphs, disk graphs, or more generally pseudo-disk graphs, and string graphs of bounded degree. In this sense our main result unifies the previous algorithms.

Also, it captures new natural classes such as segment graphs, or more generally $s$-string graphs, where previous tools were not sufficient (as discussed in \autoref{sec:techniques}). We point out that before this work, the existence of subexponential parameterized algorithm for \FVS was open even for the very restricted class of $2$-DIR graphs.

%Generality has a cost and the running time bound we obtain for pseudo-disk graphs
%is worst than the two presented in \cite{FVS-WG}, a robust and a non-robust one.

\subsection{Basic notation}\label{subsec:basic}
In this paper logarithms are binary and all graphs are non-oriented and simple.
Unless otherwise specified we use standard graph theory terminology, as in \cite{diestel2005graph} for instance. For a graph $G$, and $v\in V(G)$, we denote $N_G(v)$ the neighbors of $v$.
% If we are asked to change this, it is used with N_G N_{T... or N_{G... 
We omit the subscript when it is clear from the context. For $A\subseteq V(G)$, we use the notation $N(A)=\left(\cup_{v\in A}N(v)\right)\setminus A$ and denote $G[A]$ the subgraph induced by $G$ on $A$. For $v\in V(G)$ and $B\subseteq V(G)$, we denote $N_B(v)=N(v)\cap B$ and for $A\subseteq V(G)$ we denote $N_B(A)=N(A)\cap B$, with the additional notation $d_B(A)=|N_B(A)|$. For a graph $H$ we say that $G$ is \emph{$H$-free} if $H$ is not a subgraph of $G$.
Two disjoint vertex subsets or subgraphs $Z,Z'$ of a graph $G$ are said to be \emph{non-adjacent} (in $G$) if there is no edge in $G$ with one endpoint in $Z$ and the other in $Z'$.

\subsection{Organisation of the paper}
In \autoref{sec:techniques} we explain why the approaches developed for other intersection graph classes in the papers \cite{Faster2023Shinwoo,baste2022contraction,FVS-WG,lokSODA22} do not apply here and present the main ideas behind our algorithm. 
 \autoref{sec:algo} and \autoref{sec:analysis} are devoted to the description and analysis of the algorithm. In \autoref{sec:applications} we provide applications of our main theorem by showing in particular that $s$-string graphs are nice.  Finally, in \autoref{sec:ccl} we discuss open problems and possible extensions of the approach developed here.

\section{Our techniques}
\label{sec:techniques}

\subsection{Why bidimensionality fails and differences with classes of ``fat'' objects}
Even if our goal is to abstract from a specific graph class, let us consider in this overview the class of $2$-DIR graphs, corresponding to the intersection graphs of vertical or horizontal segments in the plane. As these objects are non ``fat''\footnote{A regions $R$ of the plane is said to be \emph{$\alpha$-fat} if the radius of smallest disk enclosing $R$ is at most $\alpha$ times larger than the radius of the largest disk enclosed in $S$. A family of regions of the plane is then said to be \emph{fat} if there exists $\alpha$ such that all the elements of the family are $\alpha$-fat.} and can cross (unlike pseudo-disks), this class constitutes a good candidate to exemplify the difficulties.

A commonly used approach is as follows.
Given an instance $(G,k)$ of \FVS, we compute first in polynomial time a $2$-approximation, implying that we either detect a \textsc{No}-instance, or define a set $M$ with $|M|\le 2k$ and such that $G-M$ is a forest.
The goal is then to reduce the input, using kernelization or subexponential branching rules, to equivalent instances $(G',k')$ with small treewidth, that is $\tw(G')=\O(k^{1-\eps})$. As \FVS can be solved in $2^{\TO(\tw(G'))}n^{\O(1)}$ using a classical dynamic programming approach, we get a subexponential parameterized algorithm.
Thus, one has to find a way to destroy in $G$ the obstructions preventing a small treewidth.
A first type of obstructions is $K_r$ and $K_{r,r}$, which are easy to handle as there are folklore subexponential branchings when $r=k^{\eps}$.
Now, one can see the hard part is destroying a $K_{r,r}$
hidden (as a minor for example) (see \autoref{fig:introKrr}). 

\begin{figure}[!ht]
    \centering
    \includegraphics[width=\textwidth]{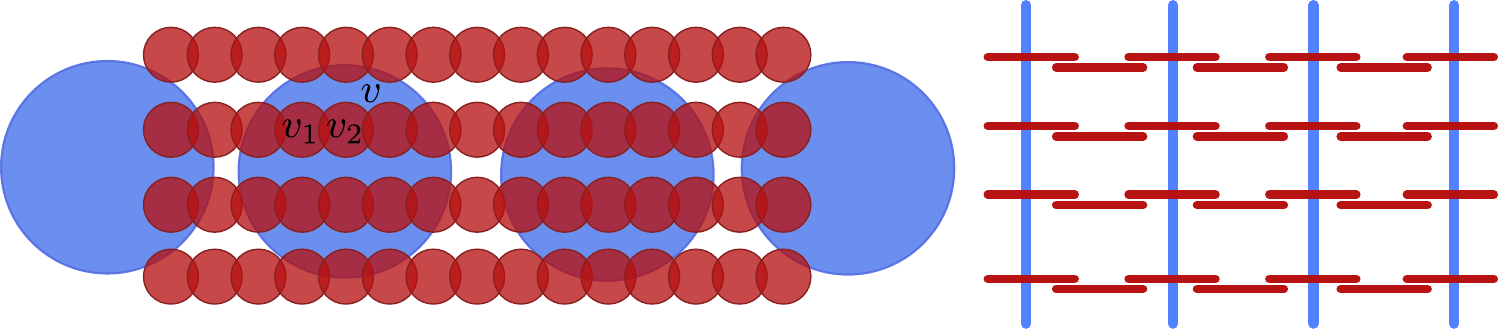}
    \caption{Example of a $K_{r,r}$ contained as a minor for $r=4$ in a disk graph (left) and a $2$-DIR graph (right). In the case of disk graph, $v$ has a matching of size $r-2$ in its neighborhood, forming a triangle bundle, which can be exploited to branch. The set $M$ are depicted in blue. For the $2$-DIR graph, the vertices of the long paths are represented by segments with small variation in their height and not intersecting for better clarity, but are in fact on the same level and intersecting. }
    \label{fig:introKrr}
\end{figure}

A point that seems crucial to us is the following.
In intersection graphs of ``fat objects'' (like disks, squares, or pseudo-disks more generally), 
the ``locally non planar structure'' when an object (vertex $v$ in \autoref{fig:introKrr}) is ``traversed'' (by $v_1$, $v_2$ in \autoref{fig:introKrr})
comes to the price of an edge ($\{v_1,v_2\}$) in the neighborhood of $v$.
Thus, the presence of a large $K_{r,r}$ as a minor implies that a large matching $E_v$ (of size $\Omega(r)$) will appear in the neighborhood of a vertex $v$.
However, as $G[\{v\} \cup E_v]$ (called a triangle bundle in \cite{Lokshtanov23Approx}) contains $r$ triangles pairwise intersecting on exactly one vertex $v$, the set $\{v\} \cup E_v$ is a good structure to perform a subexponential branching for \FVS. Indeed, \cite{lokSODA22} proposed a ``virtual branching'' to handle this structure by either taking $v$ in the solution, or absorbing $E_v$ in $M$, implying then that the parameter virtually decreases by $|E_v|$ as a solution which does not contain $v$ has to hit all these edges, even if we cannot branch to determine which are exactly the vertices in the solution.

Once no more virtual branching is possible on large triangle bundles, they obtain by some additional specialized techniques that any vertex in $M$ is such that $N_{V(G-M)}(v)$ is an independent set. Then, it is proved (~\cite{lokSODA22}, Corollary 1.1) that in a disk graph where for any $v \in M$, $N_{V(G-M)}(v)$ is an independent set, and where there does not exist a vertex in $V(G-M)$ whose neighborhood is
contained in $M$, then $\tw(G)=\O\left(\sqrt{|M|}\omega(G)^{2.5}\right)$, where $\omega(G)$ denotes the maximum size of a clique in $G$. This no longer holds for $2$-DIR graphs: the family of pairs $(G,M)$ depicted on the right of \autoref{fig:introKrr} is indeed a counter example as they respect the conditions, have $\omega(G)=2$, but $\tw(G)=\Omega(|M|)$.

More generally, the role of the size of a matching in the neighborhood was studied in~\cite{berthe24ASQGM} which shows how subexponential parameterized algorithms can be obtained for graph classes having the ``almost square grid minor property'' (ASQGM), corresponding more or less\footnote{In the correct definition $\mu_N(G)$ is replaced by a slightly more technical parameter.} to $\tw(G) = \O(\omega(G)^{\O(1)}\mu_N(G)\grid(G))$ where $\mu_N(G)$ is the maximum size of a matching in a neighborhood of a vertex, and $\grid(G)$ is the largest size of a grid contained as a minor in $G$.
The previous counterexample shows that $2$-DIR does not have the ASQGM property, implying that we need another approach to handle them.

\subsection{A simpler case study: when trees are only paths}\label{sec:intro:paths}
To simplify the arguments, but still understand why properties in \autoref{def:prop} of a nice graph class are needed, let us assume that the forest $G-M$ only contains paths $(P_i)_i$, and consider the example of segment graphs.
This case remains challenging as a large $K_{r,r}$ can still be hidden as a minor (as in \autoref{fig:introKrr}), and we need to destroy it in order to reduce the treewidth of the graph. 
To keep notations simple, we use the notation $\poly(.)$ to denote a polynomial dependency on the parameter, and thus we do not try to compute tight formulas depending on the polynomial $f_1,f_2,f,d$ given in the definition of a nice class.
We assume that we performed folklore branching and that we are left with a $K_{r,r}$-free graph $G$ for $r=k^{\eps}$.
It is known for string graphs (and thus segment graphs) that in this case $\tw(G)=  \poly(k^{\eps})n^{1/2}$ (corresponding to \autoref{def:prop:tw} of the definition of nice). Thus, our goal is to reduce $|V(G)|$ to $\O(k^{2-\eps'})$ for some $\eps'$.
A first obvious rule is to iteratively contract edges of the $P_i$'s whose endpoints have no neighbors in $M$ (corresponding hereafter to \ruleref{rl:deg2}). This explains the property of \autoref{def:prop:deg2} of the definition of nice. Note that after if this rule does not apply, at least half of the vertices of the $P_i$'s have neighbors in $M$.
%(Actually, we could only require that paths with internal vertices of degree 2 can be replaced with bounded size path without leaving the class, which could be useful for dealing with parity-constrained problems such as OCT.)

Let us now explain how 
%the property of \autoref{def:prop:nc} (bounded tree neighborhood complexity) 
this rule and another rule, lead to the following ``degree-related size property'': for any subpath $P$ of a $P_i$, $|P| = \O(d_M(P)^2)$. Consider the subset of vertices of $P$ having neighbors in $M$, and note that it has size at least roughly $|P|/2$.
This implies that there are at least $|P|/2$ edges between $P$ and $N_M(P)$.
We now need a new kernelization rule (corresponding to \ruleref{rl:neighborhood}): If there exists a vertex $u\in M$  adjacent to $x$ vertices in $P$, for $x\approx 2 d_M(P)$, then it is optimal to force $u$ in the solution. After applying this rule, for any subpath $P$ of a $P_i$, every vertex of $N_M(P)$ is adjacent to at most $2 d_M(P)$ vertices in $P$, hence there are at most $2 d_M(P)^2$ edges between $N_M(P)$ and $P$. Thus $|P|/2 \le 2 d_M(P)^2$.

Before the next step we need to apply the following ``large degree rule'' (corresponding to \ruleref{rl:big}): if there is a vertex $v$ in a $P_i$
such that $d_M(v) > t$, then add $v$ to $M$.
One can prove that by taking $t=2\d(r)$, with $\d(r)=\poly(r)$ the constant defined in \autoref{def:prop:sparse} of the definition of a nice class, $M$ does not grow too much after applying this rule exhaustively: by denoting $A\subseteq M$ the set of vertices already in $M$ before applying this rule, we always have $|A|=\poly(r)|M\setminus A|$. This claim will be discussed in \autoref{ssec:path2tree}.

Observe that at this stage we may still have a large $K_{r,r}$ as a minor, with for example graphs as in \autoref{fig:introKrr} where no rule applies.
It remains to define a crucial rule to destroy these $K_{r,r}$-minors.
Let us now present an algorithm that partitions the $P_i's$, whose analogue in the general case is called \algopart in \autoref{sssec:partition-algo}.
For any connected component $P_i$ in $G-M$, we start (see \autoref{fig:introPart}) from an endpoint of $P_i$ and collect greedily vertices until we find a subpath $P_i^1$ such that $d_M(P_i^1) \ge t$, or that there is no more vertices in $P_i$. If $d_M(P_i^1) \ge t$, then restart a new path starting from the next vertex to create $P_i^2$, and so on. Call $x(P_i)$ the number of subpaths defined this way from $P_i$.
This defines a partition $P_i=\bigcup_{1\leq \ell \leq x(P_i)}(P_i^\ell)$, where $d_M(P_i^\ell) \ge t$ for any $\ell \in [1,x(P_i)-1]$ and no lower bound for $d_M\left(P_i^{x(P_i)}\right)$.
As we applied the large degree rule, we also know that $d_M(P_i^\ell) \le 2t$ for any $\ell \in [1,x(P_i)]$, because collecting at each step a new vertex in the path $P_i^\ell$ can increase $d_M(P_i^\ell)$  by at most $t$.
This implies, using the degree-related size property introduced above, that $|P_i^\ell| = \O(t^2) \le \poly(r)$ for any $\ell \in [1,x(P_i)]$.
Let us denote $\mT^+$ the set of those $P_i^\ell$'s such that $d_M(P_i^\ell) \ge t$, i.e., the ``large-degree subpaths''. Observe that the last considered subpath $P_i^{x(P_i)}$ of each connected component $P_i$ (on the right of each path of $G-M$ in \autoref{fig:introPart}) may have $d_M\left(P_i^{x(P_i)}\right)<t$ as there was no more vertices to complete it, hence it is not contained in $\mT^+$. We denote  $\mT^-$ those remaining ``small-degree subpaths''.

\begin{figure}[!ht]
    \centering
    \includegraphics[scale=1.2]{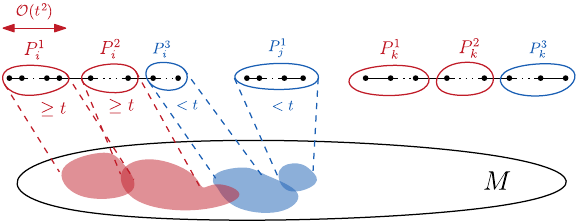}
    \caption{Example of partition where $P_i$ is partitioned into $x(P_i)=3$ subpaths, with $P_i^1$ and $P_i^2$ in $\mT^+$ and $P_i^3 \in \mT^-$.}
    \label{fig:introPart}
\end{figure}

\begin{figure}[!ht]
    \centering
    \includegraphics[scale=1.2]{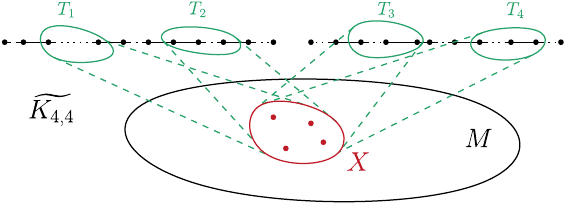}
    \caption{Example of $\KttTilde$ for $t=4$. Here we have $X\subseteq N_M(T_i)$ for $1\leq i \leq 4$.}
    \label{fig:introKtttilde}
\end{figure}

Let us now explain how the bounded tree neighborhood complexity property (\autoref{def:NCtrees:item2}) allows to obtain 
the following ``small number of large-degree subpaths'' property. By removing half of the $\mT^+$'s, we can get a set $\mT^{+'}$ of non-adjacent trees (meaning with no edge between the $T_i's$) such that $|\mT^{+'}|\geq \frac 12|\mT^+|$. We can then apply the bounded tree neighborhood complexity property with $A=M$, $\mT=\mT^{+'}$ and $p=m=\poly(r)$ to obtain
$|\{N_M(T),~T \in \mT^{+'} \}| \leq \poly(r)|M|$.
Thus, if $|\mT^{+'}| \ge x \cdot \poly(r)|M|$, we found $x$ large-degree subpaths (denoted $T'_i$) having the same neighborhood $X'$ in $M$ with $|X'|\geq t$.  By choosing $x=t$ and considering $X\subseteq X'$ with $|X|=t$, we found a structure that we call a $\KttTilde$ (see \autoref{fig:introKtttilde}), formally defined as a pair $\left(X, \{T_i\}_{1\leq i \leq t}\right)$ where
\begin{itemize}
    \item $X\subseteq M$ has size $t$, 
    \item  $\{T_i\}_i$ a family of $t$ vertex-disjoint non-adjacent subtrees (paths here) of $G - M$ such that for all $1\leq i \leq t$, $X\subseteq N_M(T_i)$, and
    \item for any $T_i$, $|T_i| \le \poly(r)$.
\end{itemize}
Now, inspired by the aforementioned ``virtual branching rule'' of \cite{Lokshtanov23Approx} for triangles bundles, we introduce a branching rule (corresponding to \ruleref{rl:KrrTilde} hereafter) that either deletes almost all vertices in $X$, or is adding to $M$ a subset of $t-1$ of the $T_i's$. The complexity behind this rule is fine, as in the second branch, the parameter virtually decreases by $t-1$, and $M$ grows by $(t-1)\max_i\{|T_i|\} = \poly(r)$.
This explains how we deal with the $K_{t,t}$-minors.

Finally, if this $\KttTilde$ rule cannot be applied, it remains to bound $|V(G-M)|$. 
Recall that any $P \in \mT^+ \cup \mT^-$ is such that $|P| \le \poly(r)$, and thus we only need to bound $|\mT^+ \cup \mT^-|$.
As we cannot apply the previous rule, we know that the number of big paths is small: $|\mT^+| \le \poly(r)|M|$. Now, to bound 
$|\mT^-|$, observe that we can partition $\mT^-=\mT^-_1 \cup \mT^-_2$, where
\begin{itemize}
    \item $\mT^-_1$ is the set of small-degree paths $P_i^\ell$ for some $\ell > 1$ (belonging to the same path $P_i$ than a large-degree path $P_i^{\ell-1}$).
    \item $\mT^-_2$ is the set of small-degree paths which are entire connected components of $G-M$.
\end{itemize}
As $|\mT^-_1| \le |\mT^+|$, it only remains to bound $|\mT^-_2|$.

Let us now explain how to obtain  $|\mT^-_2| \le \poly(r)|M|$.
According to \autoref{def:NCtrees:item2} of the bounded tree neighborhood complexity property with $\mT = \mT^-_2$, $A=M$, $p=t$, and $m=\poly(r)$, we get that  
$|\{N_M(T), T \in \mT^-_2 \}| \le f_2(r,p,m)|M|$. 
Thus, by pigeonhole principle, if 
$|\mT^-_2| \ge (t+2)f_2(r,p,m)|M|$, then there exists $t+2$ paths of $\mT^-_2$ having the same neighborhood $X$ in $M$.
This case corresponds to a
$\widetilde{K_{t_1,t_2}}$ where $t_1 = |X|$ and $t_2 = t+2$. However, as $t_1$ may be arbitrarily small, we cannot branch as we did on $\KttTilde$ 
as the branches where we delete almost all vertices of $X$ do not decrease the parameter $k$ by a large amount. 
However, in this case, paths of $\mT^-_2$
 are just connected components of $G-M$, and this helps to obtain a last rule (\ruleref{rl:same_neighborhood_prime}) that identifies paths that can be safely removed.
 This rule, restated in our context, says that if $|X|+2$ paths in $G \setminus M$ have the same neighborhood $X$ in $M$, then one of them can be removed. As in our case we found $t+2$ paths having the same neighborhood $X$, and $t > |X|$, we can indeed apply the rule.
Thus, once \ruleref{rl:same_neighborhood_prime} cannot be applied, we get 
$|\mT^-_2| < (t+2)f_2(r,p,m)|M| \le \poly(r)|M|$.

 This concludes the sketch of proof for this restricted setting where the connected components of $G-M$ are paths, as we obtain by taking $\eps$ small enough
 $|V(G-M)| \le \poly(r)|M| \le \poly(k^{\eps})k= \O(k^{2-\eps'})$ as required.
 
 \subsection{Challenges to lift the result from paths to trees}\label{ssec:path2tree}
 We now consider the general setting where given an instance $(G,k)$, with $G$ being $K_{r,r}$-free for $r=k^\eps$, and given $M$ a feedback vertex set of size at most $2k$, we want to reduce the graph to obtain $|V(G-M)|= \O(k^{2-\eps'})$. The approach still consists in partitioning $G-M$ in an appropriate way (called a $t$-uniform partition).

 A first problem when trying to adapt the approach described in \autoref{sec:intro:paths} above is the degree-related size property. Indeed, after the first two sections \autoref{sec:propertiesKernelized} and \autoref{sec:bigT}, we are now only able to obtain that for any subtree $T$ of $G-M$,  $|T| \le \poly(r)\db(T)^{\O(\alpha)}$ where $\db(T)=\max(d_M(T),\bom(T))$ and $\bom(T)=|\{v\in T,~N(v) \not\subseteq M\cup T \}|$ is the size of the ``border of $T$''.  Observe that $\bom(P)$ is at most $2$ for any subpath $P$ of path $P_i$, whereas $\bom(T)$ can only be bounded by $|T|$ for a subtree $T$.
 Informally, in the path case $|P|$ could be bounded by a polynomial function of $d_M(P)$ only, while now
 $|T|$ also depends (polynomially) on $\bom(T)$.
 
 A second problem is the large degree rule. 
 Suppose that this rule no longer applies (meaning that for every $u\in V(G-M)$, we have $d_M(u)\leq t$), and suppose now that because of another rule a vertex $v \in V(G-M)$ is added to $M$, denoting $M'=M \cup \{v\}$. Then this can create a new large degree vertex 
 $v'$ with $d_{M'}(v')>t$ (and so $d_{M}(v') = t$). Then $v'$ would need to be added and the problem may arise again for another vertex $v''$. This ``cascading'' could easily be prevented if $G-M$ is a union of paths: it suffices to apply the rule a first time at the start of the algorithm, but with $t'=t-2$. We then have for each $v\in V(G-M)$ the bound $d_M(v)\leq t-2$, and we do not need to apply the rule again after as adding vertices to $M$ may increase $d_M(v)$ by at most $2$ ensuring the wanted bound $d_M(v)\leq t$ for $v\in V(G-M)$.
 However, in the case when $G-M$ is a forest, it may contain a vertex of arbitrarily large degree, so we cannot apply the same solution.
 The problem is treated with the help of a technical lemma (that we prove at the end of the proof, see \autoref{lm:sizeM}) which ensures that throughout the execution of the algorithm we keep $|M|=\poly(r)k$.
 
 Finally, a third problem is the definition of the partition. As in the case of paths we want to partition $G-M$ into a ``$t$-uniform partition'' $\mT$, where in particular we have $\mT=\mT^+ \cup \mT^-$, and for any $T \in \mT$, $d_M(T)\le 2t$ and $|T|\leq \poly(r)\db(T)^{\O(\alpha)}$ (see \autoref{def:unif} for the complete definition).
 The greedy approach presented for the case of paths is now more involved, as we have to cut each tree of $G-M$ into subtrees that have small border $\bom(T)$, as otherwise the previous bound $|T| \le \poly(r)\db(T)^{\O(\alpha)}$ becomes useless when $\db(T)$ is too large.
 
 This partitioning procedure is defined in \autoref{sssec:partition-algo}. 
 It can either: 
 \begin{itemize}
     \item Fail and find a subtree $T$ with $|T|>\poly(r)\db(T)^{C\alpha}$ for some constant $C$, implying that our degree-related size rule can be applied.
     \item Fail and find too many subtrees $T_i \in \mT^+$ with large degree, implying that we found a $\KttTilde$, and that our $\KttTilde$ rule can be applied.
     \item Produce a $t$-uniform partition with $|\mT^+| \le \poly(r)|M|$.
 \end{itemize}
 The third case is treated in \autoref{ssec:step4}, where we either find another way to apply one more time a reduction rule, or prove that $|V(G-M)|= \O(k^{2-\eps'})$.

\section{FVS in subexponential FPT time in nice graph classes}\label{sec:algo}

\subsection{Preliminary branching to remove \texorpdfstring{$K_{r,r}$}{Krr}}\label{ssec:Krr_branch}
To avoid confusion, we refer to the initial instance with $(G_0,k_0)$. In this section we use a folklore branching for \FVS to remove the large bicliques $K_{r,r}$, where $r=k_0^{\eps}$ with $\eps$ to be set later depending on the considered graph class $\G$.
Before performing any branching, we compute a 2-approximation of a minimum feedback vertex set of $G_0$ using the following result, and denote it by~$M_0$.

\begin{theorem}[\cite{2approx-fvs-1,2approx-fvs-2}]\label{th:fvsapprox}
    A $2$-approximation of a minimum feedback vertex set can be constructed in polynomial time.
\end{theorem}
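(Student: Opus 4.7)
I would follow the local-ratio approach of Bafna, Berman and Fujito. Maintain a non-negative weight function $w:V(G)\to\RR_{\geq 0}$, initially identically one, and a tentative solution $S$, initially empty. First I would install cleanup reductions that preserve optima: delete isolated and pendant vertices, place into $S$ any vertex carrying a self-loop, and short-cut (bypass) any degree-$2$ vertex whose two neighbors do not form a triangle with it. These are easy to justify from first principles and leave either an acyclic graph (return $S$) or a graph $G'$ of minimum degree at least $2$ in which every degree-$2$ vertex lies on a triangle.

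In the main loop I would branch on whether $G'$ contains a \emph{semidisjoint} cycle $C$: a cycle all of whose vertices except possibly one have degree exactly two in $G'$. If so, set $\eps=\min_{v\in C} w(v)$, subtract $\eps$ from $w(v)$ for every $v\in C$, add a minimizer to $S$, delete it, and recurse. Otherwise every vertex of $G'$ has degree at least three; set $\eps=\min_v w(v)/(d(v)-1)$, update $w(v)\leftarrow w(v)-\eps(d(v)-1)$ for every $v$, add every new zero-weight vertex to $S$, delete them, and recurse. Each iteration strictly decreases the number of vertices, so the procedure terminates in polynomial time.

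For the approximation guarantee I would invoke the local-ratio lemma inductively: each iteration subtracts a specific \emph{base} weight function (either an indicator on $C$, or the function $d(v)-1$) from $w$, and it suffices to verify a $2$-approximation separately for these base functions. The semidisjoint branch is immediate: any FVS meets $C$, hence costs at least $\eps$ on the base function, while we charge exactly $\eps$. The min-degree-$\geq 3$ branch is the heart of the matter, and where I expect the main obstacle: one needs to show that for every FVS $X$ of $G'$,
\[ \sum_{v\in X}\bigl(d(v)-1\bigr)\geq \tfrac{1}{2}\sum_{v\in V(G')}\bigl(d(v)-1\bigr). \]
Using the identity $\sum_v(d(v)-1)=2|E(G')|-|V(G')|$ together with the observation that $G'-X$ is a forest on $|V(G')|-|X|$ vertices (so contributes at most that many edges), this reduces to a counting argument that exploits min-degree $\geq 3$ and the absence of semidisjoint cycles, which is precisely why the cleanup rules and the semidisjoint branch were installed. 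Composing the per-step ratios through the local-ratio lemma then yields the final bound of $2\cdot\mathrm{OPT}$.
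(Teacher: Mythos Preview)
The paper does not prove this theorem; it simply cites it from \cite{2approx-fvs-1,2approx-fvs-2} and uses it as a black box. Your sketch is essentially the Bafna--Berman--Fujito local-ratio algorithm, which is one of those references, so you are reconstructing the cited result rather than offering an alternative route.

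That said, your sketch has a genuine gap in the non-semidisjoint branch. The displayed inequality
\[
\sum_{v\in X}\bigl(d(v)-1\bigr)\ \ge\ \tfrac12\sum_{v\in V(G')}\bigl(d(v)-1\bigr)\quad\text{for every FVS }X
\]
is false. Take two copies of $K_4$ joined by a single bridge: every degree is at least $3$, there is no semidisjoint cycle, $\sum_v(d(v)-1)=18$, yet the FVS consisting of two non-bridge vertices from each copy has $\sum_{v\in X}(d(v)-1)=8<9$. What Bafna--Berman--Fujito actually prove is that for any \emph{minimal} FVS $S$ one has $\sum_{v\in S}(d(v)-1)\le 2\sum_{v\in X}(d(v)-1)$ for every FVS $X$; this is why their algorithm ends with a reverse-deletion pass that prunes the tentative solution to a minimal FVS. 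Your outline omits this step (``add every new zero-weight vertex to $S$'' can produce a non-minimal set), and without it the local-ratio accounting does not close. A secondary issue: the claim that the else-branch forces minimum degree $\ge 3$ is also off---after your cleanup a single degree-$2$ vertex sitting on a triangle whose other two vertices have higher degree is neither bypassed nor part of a semidisjoint cycle---but the correct analysis only needs min-degree $\ge 2$ together with the absence of semidisjoint cycles, so this part is easily repaired once the minimality step is in place.
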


If $|M_0|\le k_0$ or if $|M_0|> 2k_0$ we can immediately conclude that the instance is positive or negative. We thus are left with the case where $k_0 < |M_0| \le 2k_0$.

Let us now describe a branching routine
leading to a new set of instances $\I$, whose properties are discussed below.
The routine initializes $\I$ to $\{(G_0, k_0, M_0)\}$ and applies the following branching rule to the elements of $\I$ as long as possible.

\begin{enumerate}[label=(\BR{{\arabic*}})]
\setcounter{enumi}{0}
\item \label{rl:Krr} Given an instance $(G, k, M)\in \I$, if $G$ contains a $K_{r,r}$-subgraph with parts $A$ and $B$,
either $k < r - 1$ and the instance is negative so we remove it from $\I$, or $k\geq r-1$ and we replace the considered instance by the $2r$ instances of the form $(G -  X, k-(r-1), M\setminus X)$, for every set $X$ of size $r-1$ that is either subset of $A$, or subset of $B$.
\end{enumerate}

\begin{lemma}\label{lem:findkrr}
   There is a  $2^{\O\left (r\log |M| \right )}|V(G)|^{\O(1)}$-time algorithm that, given an instance $(G, k, M)\in \I$, applies rule~\ruleref{rl:Krr} or correctly concludes that $G$ is $K_{r,r}$-free.
\end{lemma}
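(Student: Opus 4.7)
The plan is to exploit the fact that, for every instance in $\I$, the set $M$ is a feedback vertex set of $G$, so $G - M$ is a forest. In particular, $G - M$ contains no $C_4$ (i.e., no $K_{2,2}$-subgraph). Therefore, in any $K_{r,r}$-subgraph of $G$ with bipartition $(A,B)$, we cannot have $|A\setminus M|\geq 2$ and $|B\setminus M|\geq 2$ simultaneously: otherwise two vertices from $A\setminus M$ together with two vertices from $B\setminus M$ would induce a $K_{2,2}$ in the forest $G-M$. Hence, up to swapping $A$ and $B$, at least $r-1$ of the $r$ vertices on one side of the bipartition belong to $M$.

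Building on this observation, the algorithm enumerates all $(r-1)$-subsets $X\subseteq M$ as candidates for this ``large trace'' of the biclique in $M$. For each such $X$, it computes
\[
Y \;=\; \Bigl(\bigcap_{v\in X} N_G(v)\Bigr)\setminus X,
\]
i.e.\ the set of common neighbors of $X$ outside $X$, in time $\O(r\cdot |V(G)|)$. Then, for each $b\in V(G)\setminus X$, it tests in polynomial time whether $|N_G(b)\cap Y|\geq r$. If some pair $(X,b)$ passes this test, picking any $r$-subset $A\subseteq N_G(b)\cap Y$ (note that $b\notin N_G(b)$, so $b\notin A$, and $A\subseteq Y$ ensures $A\cap X=\emptyset$) yields a $K_{r,r}$-subgraph with bipartition $(A,\,X\cup\{b\})$, and the algorithm applies \ruleref{rl:Krr} accordingly. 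If no pair $(X,b)$ works, the algorithm reports that $G$ is $K_{r,r}$-free.

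For correctness, I would argue in the contrapositive: if $G$ contains a $K_{r,r}$-subgraph with bipartition $(A,B)$, then by the forest argument above we may assume $|B\cap M|\geq r-1$; choosing $X\subseteq B\cap M$ with $|X|=r-1$ and letting $b$ be any vertex of $B\setminus X$ gives $A\subseteq N_G(b)\cap Y$, so the test succeeds. The running time is bounded by
\[
\binom{|M|}{r-1}\cdot |V(G)|^{\O(1)} \;\leq\; |M|^{r-1}\cdot |V(G)|^{\O(1)} \;=\; 2^{\O(r\log|M|)}\,|V(G)|^{\O(1)},
\]
matching the claimed bound.

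I do not anticipate a serious obstacle: the one point that needs a tiny bit of care is checking that the $K_{r,r}$ we reconstruct from the witness $(X,b)$ is genuinely a biclique with disjoint sides and size $r$ on each side, but this follows immediately from $b\notin X$, $A\subseteq Y\subseteq V(G)\setminus X$, $b\notin A$ (as $A\subseteq N_G(b)$), and the definition of $Y$ as common neighborhood of $X$ combined with $A\subseteq N_G(b)$. A minor edge case is $r=1$, where $X=\emptyset$, $Y=V(G)$, and the test reduces to detecting an edge in $G$, which is consistent with the statement.
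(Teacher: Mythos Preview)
The proposal is correct and takes essentially the same approach as the paper: both exploit that $G-M$ is a forest to deduce one side of any $K_{r,r}$ has at least $r-1$ vertices in $M$, then enumerate the $\binom{|M|}{r-1}$ choices for these vertices together with one extra vertex and check common neighborhoods in polynomial time. Your presentation (precomputing $Y$ and then testing each $b$) is just a mild reorganisation of the paper's enumeration of all of $A=X\cup\{a\}$ followed by counting common neighbours, and the running-time analysis is identical.
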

\begin{proof}
    As $G-M$ is acyclic, for any copy of $K_{r,r}$ with parts $A$ and $B$, one part of this copy is almost fully contained in $M$. More formally and without loss of generality we can assume that $|A\cap M|\ge r-1$.
    Therefore, to iterate over all possible choices of $A$ we can consider all the subsets of $V(G)$ consisting of $r-1$ vertices of $M$ and adding to this a vertex of $V(G)$. For each such choice we can consider each vertex $v$ of $V(G)\setminus A$ and check if $A\subseteq N(v)$. If there are $r$ such vertices, then together with $A$ they for a copy of $K_{r,r}$. It is then trivial to produce the $2r$ instances described in \ruleref{rl:Krr}. All this takes $\O(|M|^r) \cdot |V(G)|^{\O(1)}=2^{\O(r\log |M|)}|V(G)|^{\O(1)}$ steps.
\end{proof}

We summarize the properties obtained after this series of branchings with the following lemma:
\begin{lemma}\label{lem:step2}
    At the end of the preliminary branching, the set $\I$ satisfies the properties:
    \begin{enumerate}
        \item The instance $(G_0, k_0)$ is a \textsc{Yes}-instance if and only if $\I$ contains a $YES$-instance.
        \item For any $(G,k,M)\in \I$ the graph $G$ is a $K_{r,r}$-free induced subgraph of $G_0$.
        \item For any $(G,k,M)\in \I$, $M$ is a feedback vertex set of $G$ with $|M|\leq 2k_0$.
        \item The total time to generate $\I$ is in $2^{\O\left (r\log k_0 \right )}|V(G_0)|^{\O(1)}(2r)^{\frac{k_0}{r-1}}$
        and $|\I|=\O\left((2r)^{\frac{k_0}{r-1}}\right)$.
    \end{enumerate}
\end{lemma}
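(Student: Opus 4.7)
The plan is to verify the four listed properties in order, with the main substance lying in correctness (item~1), while items 2--4 follow from the definitions and from \autoref{lem:findkrr}.

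For the correctness of the branching rule~\ruleref{rl:Krr}, I would argue exhaustiveness: suppose $(G,k,M)$ contains a $K_{r,r}$ with parts $A,B$ and let $S$ be a feedback vertex set of $G$ of size at most $k$. Since the graph $K_{r,r}$ minus a vertex set is acyclic only when one side has at most one remaining vertex, we must have $|S\cap A|\geq r-1$ or $|S\cap B|\geq r-1$. WLOG the former, and pick any $X\subseteq S\cap A$ with $|X|=r-1$. Then $X$ is one of the $r$ subsets of $A$ of size $r-1$ considered by the rule, and $S\setminus X$ is a feedback vertex set of $G-X$ of size at most $k-(r-1)$. Conversely every branch only removes vertices, so any YES-instance produced by the rule lifts back to a YES-instance of the parent by adding $X$ to the solution. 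An induction on the branching tree then yields property~1.

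Properties~2 and 3 are preserved inductively along the tree. Being an induced subgraph of $G_0$ is stable under vertex deletion, and the final leaves of the branching process are exactly those where \autoref{lem:findkrr} reports no $K_{r,r}$, so item~2 holds. For item~3, if $M$ is a feedback vertex set of $G$ then $(G-X)-(M\setminus X)$ is an induced subgraph of $G-M$, hence acyclic, so $M\setminus X$ is a feedback vertex set of $G-X$; and $|M\setminus X|\le|M_0|\le 2k_0$ throughout.

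For the complexity bound in item~4, the branching tree has out-degree at most $2r$ (by the rule) and each branch strictly decreases $k$ by $r-1$, so its depth is at most $k_0/(r-1)$ and it has at most $(2r)^{k_0/(r-1)}$ leaves, giving the stated bound on $|\I|$. At each internal node we invoke \autoref{lem:findkrr} with $|M|\le 2k_0$, which costs $2^{\O(r\log k_0)}|V(G_0)|^{\O(1)}$; multiplying by the number of nodes yields the total time. The only mildly subtle point is avoiding double-counting and correctly accounting for negative parameter branches, which are simply discarded as indicated in the remark; this does not affect the bounds since pruning only shrinks $\I$.
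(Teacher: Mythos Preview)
Your proposal is correct and follows essentially the same approach as the paper: the same exhaustiveness argument for item~1 (one side of the $K_{r,r}$ must lose at least $r-1$ vertices), the same inductive preservation of items~2 and~3 under vertex deletion, and the same branching-tree analysis (out-degree $2r$, depth $k_0/(r-1)$, per-node cost from \autoref{lem:findkrr}) for item~4. The only cosmetic difference is that you phrase the acyclicity observation directly (``$K_{r,r}$ minus a vertex set is acyclic only when one side has at most one remaining vertex'') whereas the paper spells it out via $4$-cycles.
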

\begin{proof}
The first item follows from the fact that if rule~\ruleref{rl:Krr} applies to a triplet $(G,k,M)$, and replaces it with a set of $2r$ instances, we have the property that $(G,k,M)$ is a \textsc{Yes}-instance if and only if (at least) one of the $2r$ instances is. In order to show this let us call $A$ and $B$ the vertex sets of the parts of the considered $K_{r,r}$-subgraph.

If $(G,k,M)$ is a \textsc{Yes}-instance, let $S$ be a feedback vertex set of $G$ with $|S|\le k$. Any 2 vertices of $A$ induce a 4-cycle with any 2 vertices of $B$, so in order to intersect all cycles $S$ intersects one of $A$ and $B$ on at least $r-1$ vertices. Suppose without loss of generality that $S$ intersects $A$ on (at least) $r-1$ vertices, let $a\in A$ be the $r^{\text{th}}$ vertex of $A$ (i.e. $a$ is a vertex such that $A\setminus\{a\} \subseteq S$, but possibly $a\in S$), and let $X=A\setminus \{a\}$. Let $(G',k',M')$ be the instance, among the $2r$ ones generated by rule~\ruleref{rl:Krr}, such that $G'=G -  X$, and $M'=M\setminus X$. Note that taking $S'=S\setminus X$, we have that $G -  S=G' -  S'$ and $|S'|=|S|-(r-1)\le k-(r-1)=k'$. Hence, $(G',k',M')$ is a \textsc{Yes}-instance.

Conversely, if an instance $(G',k',M')$ generated by \ruleref{rl:Krr} is a \textsc{Yes}-instance, with $G'=G -  X$, and $M'=M\setminus X$, let $S'$ be a feedback vertex set of $G'$ with $|S'|\le k'=k-(r-1)$.
Note that taking $S=S' \cup X$, we have that $G' -  S' = G -  S$ and $|S|=|S'|+(r-1)\le k'+(r-1)=k$. Hence, $(G,k,M)$ is a \textsc{Yes}-instance.

For the second item, if there was a copy of $K_{r,r}$ then the branching process would not be finished. Furthermore, each time rule~\ruleref{rl:Krr} is applied, the new instances are obtained by deleting vertices from the instance formerly in $\I$. Hence, all the generated instances are induced subgraphs of the first one, $G_0$.

For the third item, this follows from the fact that each time rule~\ruleref{rl:Krr} is applied, the new sets $M'$ are subsets of $M$, and thus subsets of $M_0$, which is of size at most $2k_0$.

For the last item, let us consider a recursive algorithm $\tilde{A}$ that given $(G,k,M)$ as input, and using \ruleref{rl:Krr}, output all instances of $\I$ generated from $(G,k,M)$ (implying that $\tilde{A}(G_0,k_0,M_0)=\I$).
By \autoref{lem:findkrr}, and by observing that for each call of the algorithm we have $|M|\leq 2k_0$, the worst case running time $f(n,k)$ of $\tilde{A}$ on a $n$-vertex graph and parameter $k$ is such that $f(n,k) \le 2^{\O\left (r\log k_0 \right )}n^{\O(1)}+(2r) f(n,k-(r-1))$.
This implies $f(n,k) \le 2^{\O\left (r\log k \right )}n^{\O(1)}(2r)^{\frac{k}{r-1}}$.
Moreover, as the number $g(k)$ of instances
generated by a call to $\tilde{A}(G,k,M)$ is such that $g(k) \le (2r) g(k-(r-1))$, we obtain the claimed bound on $|\I|$.
\end{proof}

\subsection{The main recursive algorithm}
\label{ssec:step3}

We now consider each element $(G_i,k_i,M_i)$ of  $\I$ as an instance $(G_i,k_i,M_i,\emptyset)$ of the following auxiliary problem \AFVS, and our goal now is to solve these instances of \AFVS using the recursive \autoref{algo:A} that we described hereafter.

\begin{definition}\label{def:afvs}
    Given a nice graph class $\G$ and an integer $r>3$, the \afvs problem (\AFVS{} for short) is the decision problem that takes as input a 4-tuple $(G, k, M, \mH)$ where:
    \begin{itemize}
        \item $G$ is a $K_{r,r}$-free graph of $\G$,
        \item $k$ an integer,
        \item $M\subseteq V(G)$ a feedback vertex set of $G$, and
        \item $\mH$ a family of at most $k$ connected\footnote{That is, for any $H \in \mH$, $G[H]$ is connected.} disjoint subsets of $M$,
    \end{itemize}
    and where the question is to decide whether there is a feedback vertex set $S$ of $G$ of size at most $k$ that additionally intersects every set of $\mH$.
\end{definition}

For the sake of completeness we provide here the complete pseudo-code of \autoref{algo:A}, even if it uses rules and routines which will be defined later. At this stage, we recommend the reader to only read the following sketch, as the following sections will cover in detail the properties we obtain after each step.
The sketch of \autoref{algo:A} is as follows. We first try to apply (line~\ref{A:firstIf}) rules~\ruleref{rl:deg1}, \ruleref{rl:deg2}, \ruleref{rl:big}, and \ruleref{rl:same_neighborhood_prime},  which are like kernelization rules: given the instance $(G,k,M,\mH)$ we perform a single recursive call on a slightly simpler instance $(G',k',M',\mH')$. 
 If none of these first rules apply, the algorithm tries to build a special partition of $G-M$ using the routine \algopart{} (defined in \autoref{lem:construct-red-trees}).
 If \algopart fails (line~\ref{A:case1} or \ref{A:case2}) and fall into what we call Case 1 or Case 2, then we apply a kernelization or branching rule.
 Otherwise we either apply \ruleref{rl:same_neighborhood_prime} or \ruleref{rl:neighborhood} (line~\ref{A:KR4} or line~\ref{A:KR5}), or reach our final point (line~\ref{A:DP}) where we can prove that $|V(G)|$ is small, implying that $\tw(G)=\O(k^{1-\eps'})$, and solve the instance using a classical dynamic programming algorithm.

\begin{algorithm}
\caption{$A(G,k,M,\mH)$}
\begin{algorithmic}[1]
    \Require $(G,k,M,\mH)$ an instance of \AFVS.
    \If{(one of Rule~\ruleref{rl:deg1}, Rule~\ruleref{rl:deg2}, Rule~\ruleref{rl:big}, or Rule~\ruleref{rl:same_neighborhood_prime} applies on $(G,k,M,\mH)$)} \label{A:firstIf}
         \State Apply the first possible Rule to obtain $(G', k', M',\mH')$ 
        and \Return $A(G',k',M',\mH')$
    \EndIf
    \State Apply \algopart (with $t=2\d(r)$) of \autoref{lem:construct-red-trees} that tries to build $\mT$: a $t$-uniform partition of $G-M$ with $|\mT^+| \le p_3(r,t)|M|$ (where $p_3$ is defined in \autoref{lm:bigreduce}) \label{A:partition}
    \If{(procedure fails and falls into Case 1 (output a large subtree $T$))} \label{A:case1}
        \State Apply Rule~\ruleref{rl:neighborhood} on $T$ to obtain $(G', k', M',\mH')$
        and \Return $A(G',k',M',\mH')$
    \EndIf
    
    \If{(procedure fails and falls into Case 2 (output a $\KttTilde$))}
     \label{A:case2}
        \State Apply the branching Rule~\ruleref{rl:KrrTilde} on this $\KttTilde$, generating a set $\mathcal C$ of instances
        \State \Return $\bigvee_{(G',k',M',\mH') \in \mathcal C}A(G',k',M',\mH')$
    \EndIf
    \State // $\mT$ is as required  \label{A:partOK}
    \State Let $Z_1(\mT)$ and $Z_2(\mT)$ as defined in \autoref{def:Z}, and $\tilde{M}=M \cup Z_1(\mT) \cup Z_2(\mT)$
    \State // By \autoref{lm:special}, Rule~\ruleref{rl:deg1} and Rule~\ruleref{rl:deg2} do not apply on $(G,k,\tilde{M},\mH)$
    \If{(Rule~\ruleref{rl:same_neighborhood_prime} applies on $(G,k,\tilde{M},\mH)$, and finds a subtree $T$ that can be removed)}\label{A:KR4}
    \State\Return $A(G-T,k,M,\mH)$
    \EndIf
    \If{(Rule~\ruleref{rl:neighborhood} applies on $(G,k,\tilde{M},\mH)$ and a connected component $T$ of $G-\tilde{M}$, and finds a vertex $u \in \tilde{M}$ that can be taken)}\label{A:KR5}
    \State\Return $A(G-T,k-1,M\setminus \{u\},\mH-\{u\})$ 
    \EndIf

    \State // $|V(G)|=p_4(r,t)|M|$ by \autoref{lm:finalR4R5} implying $\tw(G)=\O(k^{1-\eps'})$ by \autoref{th:tw}
    \State \Return $DP(G,k,M,\mH)$ \label{A:DP} // Solves the instance using \autoref{th:solveAFVS}
\end{algorithmic}
\label{algo:A}
\end{algorithm}

\subsection{Kernelization rules}
\label{sssec:kernelization}

Here we provide the four kernelization rules  \ruleref{rl:deg1}, \ruleref{rl:deg2}, \ruleref{rl:big}, and \ruleref{rl:same_neighborhood_prime} that \autoref{algo:A} tries to apply at line~\ref{A:firstIf}.
Each of these rules takes as input an instance $(G,k,M,\mH)$ of \AFVS and outputs a single instance $(G',k',M',\mH')$. Such a rule is said to be {\em safe} if:
\begin{itemize}
    \item For an instance of \AFVS the rule returns an instance of \AFVS (in particular the graph of the output instance is a $K_{r,r}$-free graph in $\G$), and
    \item the input instance is a \textsc{Yes}-instance if and only if the output instance is a \textsc{Yes}-instance.
\end{itemize}

\paragraph{Notation}
Given any instance $(G, k, M, \mH)$, recall that every connected component of $G-M$ is a tree. We root each of them at an arbitrary vertex. We define a {\em subforest} of $G-M$ as a subset $T\subseteq V(G-M)$ and say the set is a {\em subtree} of $G-M$ if $G[T]$ is a tree. Given a vertex $v$ of a connected component $T$ in $G-M$, we define the subtree $T_v$ of $G-M$ as the connected component of $v$ in $G-M-u$, where $u$ is the parent of $v$, if any. If $v$ is the root of $T$, then $T_v=T$. In any case $T_v$ is rooted at~$v$.
Given $X \subseteq V(G)$, we denote 
$\mH-X = \{H \in \mH \mid H \cap X = \emptyset\}$. 
 Given a subtree $T$ of $G-M$, $\pom (T)$ denotes the set $\{v\in T,~N(v) \not\subseteq M\cup T \}$, and $\bom (T)$ denotes the size of this set.
 We also denote $\db(T)=\max(d_M(T), \bom(T))$.

We start with two basic reduction rules often used to deal with \FVS and that allow to get rid of vertices of degree 1 and arbitrarily long paths of vertices of degree 2. Notice that we only apply here the reduction to vertices in $G-M$.
\begin{enumerate}[label=(\KR{{\arabic*}})]
\item \label{rl:deg1} Given an instance $(G, k, M, \mH)$, if there exists a vertex $v\in V(G-M)$ of degree $d(v)\leq 1$, output $(G-\{v\}, k, M, \mH)$.

\item \label{rl:deg2} Given an instance $(G, k, M, \mH)$, if there exists a path $quvw$ in $V(G-M)$ such that the four vertices have degree $2$ in $G$ (hence $\dM(u)=\dM(v)=0$), output $(G', k, M, \mH)$, where $G'$ is the graph obtained from $G$ by contracting the edge $uv$.
\end{enumerate}
\begin{lemma}
    The rules \ruleref{rl:deg1} and \ruleref{rl:deg2} are safe and can be applied in polynomial time.
\end{lemma}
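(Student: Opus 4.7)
The lemma combines two claims: polynomial-time computability (immediate, since a degree-$\le 1$ vertex or a path $quvw$ of the required type can be located by a single scan of $G-M$), and safety. Safety itself decomposes into verifying that the output $(G',k,M,\mathcal H)$ is a valid \AFVS instance (i.e.\ $G'\in\G$, $G'$ is $K_{r,r}$-free, $M$ is still a feedback vertex set of $G'$, $\mathcal H$ remains a family of connected disjoint subsets of $M$, and $|\mathcal H|\le k$) and that YES-instances are preserved in both directions.

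For \ruleref{rl:deg1}, the plan is to leverage the fact that a vertex $v\notin M$ with $d(v)\le 1$ lies on no cycle of $G$. Hereditariness of $\G$ gives $G'=G-v\in\G$, $K_{r,r}$-freeness is preserved under vertex deletion, and $G'-M$ remains a forest; the family $\mathcal H$ is unaffected since $v\notin M$ implies $v$ belongs to no $H\in\mathcal H$. For the equivalence of YES-instances: any feedback vertex set $S$ of $G$ gives $S\setminus\{v\}$ as a feedback vertex set of $G'$ of size at most $|S|$ (because $v$ is on no cycle) hitting the same annotations, and conversely any feedback vertex set $S'$ of $G'$ is automatically one of $G$.

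For \ruleref{rl:deg2}, I first verify that the contracted edge $uv$ does not lie in a triangle: with $N(u)=\{q,v\}$ and $N(v)=\{u,w\}$ they share a neighbor only if $q=w$, which is excluded because $quvw$ is a path on four distinct vertices. Nice property \ref{def:prop:deg2} then yields $G'\in\G$. The merged vertex has degree exactly $2$ in $G'$ (neighborhood $\{q,w\}$), so for $r>3$ it cannot participate in any copy of $K_{r,r}$ in $G'$; any such copy would therefore survive deletion of the merged vertex and lift to a copy of $K_{r,r}$ in $G$, contradicting our assumption. $M$ remains a feedback vertex set because contracting an edge of the forest $G-M$ yields a forest, and since $u,v\notin M$ the family $\mathcal H$ is unchanged and $G'[H]=G[H]$ remains connected for every $H\in\mathcal H$.

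The main work, and the step I expect to require the most care, is the equivalence of YES-instances for \ruleref{rl:deg2}. The key observation is that any cycle of $G$ through $u$ must use the edge $uv$ (the only non-$v$ neighbor of $u$ is $q$), hence also passes through $v$, and symmetrically; so cycles of $G$ meeting $\{u,v\}$ correspond bijectively to cycles of $G'$ through the merged vertex, while the remaining cycles lift in both directions. Given any solution $S$ for $G$, a standard swap argument (replace $u$ by $v$ if $u\in S$ and $v\notin S$, using that every cycle through $u$ also contains $v$) lets me assume $S\cap\{u,v\}\subseteq\{v\}$ without increasing $|S|$ or losing any hit set of $\mathcal H$ (since annotations live in $M$, disjoint from $\{u,v\}$); then replacing $v$ by the merged vertex gives a valid $S'$ of the same size in $G'$. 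The reverse direction is symmetric.
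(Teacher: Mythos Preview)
Your proof is correct and follows essentially the same approach as the paper's, just with considerably more detail: the paper dispatches both rules in a few lines (degree-two contraction preserves $K_{r,r}$-freeness and the minimum FVS size, membership in $\G$ follows from the nice-class axiom, and the vertices involved lie outside $M$ so $\mathcal H$ is untouched), whereas you spell out each verification explicitly, including the triangle-freeness of $uv$ and the swap argument for the YES-instance equivalence.
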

\begin{proof}
Notice that the vertices considered in the two rules belong to $G-M$, in particular they do not belong to any member of $\mH$ (which are subsets of $M$). Contracting an edge with endpoints of degree two preserves being $K_{r,r}$-free and does not modify the size of the minimum feedback vertex sets. Moreover the obtained graph is still in $\G$ by definition of a nice class.
The output is then an instance of \AFVS equivalent to the input. The running time claim is immediate.

\end{proof}
\begin{remark}
Observe that we did not use the condition in \ruleref{rl:deg2} that the endpoints of the considered path have degree at most $2$ in $G$. However it will later be used in \autoref{lm:sizeM} to bound the size of $M$ during the execution of the algorithm.
\end{remark}

The next rule ensures that the vertices outside $M$ have a small neighborhood in $M$. It increases the size of $M$, but in a controlled manner as we will see later in \autoref{lm:sizeM}. 

%\begin{enumerate}[(KR1)]
\begin{enumerate}[label=(\KR{{\arabic*}})]
\setcounter{enumi}{2}
\item \label{rl:big} Given an instance $(G, k, M, \mH)$, if there is a vertex $v\in V(G-M)$ such that $d_M(v)\ge t=2\d(r)$, with $\d(r)$ the value defined in \autoref{def:prop}, output $(G, k, M\cup\{v\}, \mH)$.
\end{enumerate}
It is immediate that rule~\ruleref{rl:big} is safe and can be applied in polynomial time.

The fourth kernelization rule deletes unnecessary trees of $G-M$.
Consider an instance $(G,k,M,\mH)$ where none of the previous rules applies. 
Given a family $\T$ of disjoint subtrees of $G-M$, a tree $T\in \T$ is {\em redundant} (for $\T$) if for all $v\in M$ such that $d_T(v)\geq 2$, there exists $T'\in \T$ with $T'\neq T$ such that $d_{T'}(v)\geq 2$.

\begin{lemma}\label{lm:redundant}
    Consider a set $X\subseteq M$ and a set $\T$ of more than $|X|$ subtrees of $G-M$ such that for every $T\in \T$,  $N_M(T)=X$. Then, there exists a redundant tree for $\T$, and it can be found in polynomial time.
\end{lemma}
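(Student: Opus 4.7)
The plan is to count non-redundant trees via a simple injection into $X$. For each $T\in\T$, set $S_T=\{v\in M : d_T(v)\ge 2\}$; since every vertex of $M$ adjacent to $T$ lies in $N_M(T)=X$, we have $S_T\subseteq X$. Unfolding the definition, $T$ is redundant if and only if every $v\in S_T$ also belongs to $S_{T'}$ for some $T'\in\T\setminus\{T\}$. Equivalently, $T$ is \emph{non-redundant} if and only if there exists a vertex $v\in X$ that is \emph{private to $T$}, meaning $v\in S_T$ but $v\notin S_{T'}$ for all $T'\in\T\setminus\{T\}$.

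The key step is to assign to each non-redundant tree $T$ one of its private witnesses $v(T)\in X$ and observe that this map is injective. Indeed, if $v(T_1)=v(T_2)=v$ for two distinct non-redundant trees $T_1,T_2$, then $v\in S_{T_1}\cap S_{T_2}$, so $v$ cannot be private to $T_1$, contradicting $v=v(T_1)$. Hence there are at most $|X|$ non-redundant trees in $\T$. Since $|\T|\ge |X|+1$, at least one tree in $\T$ must be redundant.

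For the algorithmic claim, I would iterate over all pairs $(T,v)\in \T\times X$ to compute $d_T(v)$, thereby assembling each $S_T$, and then for each tree $T$ check whether every $v\in S_T$ also lies in $S_{T'}$ for some $T'\neq T$; the first such $T$ is returned as redundant. Each of these steps runs in polynomial time in $|V(G)|$. I do not foresee any genuine obstacle: the only conceptual point is the injection from non-redundant trees to $X$, and the rest is bookkeeping.
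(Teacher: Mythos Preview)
Your proof is correct and follows essentially the same pigeonhole argument as the paper: the paper picks for each $x\in X$ (with some $T$ having $d_T(x)\ge 2$) one tree $T_x$ and observes that any tree outside $\{T_x : x\in X\}$ is redundant, which is the contrapositive of your injection from non-redundant trees into $X$ via private witnesses. The algorithmic part is also handled identically in spirit.
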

\begin{proof}
    For any $x\in X$, if there is at least one subtree $T\in \T$ such that $d_T(x)\geq 2$, we arbitrarily pick one of them and call it $T^{(x)}$. Now, let $T'$ be one of the remaining trees in $\T$. Such a tree exists as $|\T|\geq |X|+1$, and there are at most $|X|$ trees of the form $T^{(x)}$, for some $x\in X$. It is immediate from the definition that $T'$ is redundant.
\end{proof}

Recall that we consider a fixed child-parent orientation in the forest $G-M$. In what follows, we say that a subset $F \subseteq V(G-M)$ is a \emph{downward-closed subtree} (or \emph{subforest} when the set is not necessarily connected) of $G-M$ when for any $v \in F$ and any children $u$ of $v$, $u \in F$.
The fourth kernelization rule is as follows:

\begin{enumerate}[label=(\KR{{\arabic*}})]
\setcounter{enumi}{3}
\item \label{rl:same_neighborhood_prime} Given an instance $(G, k, M, \mH)$, a set $X\subseteq M$ with $|X|\geq 1$ and a set $\T$ of at least $|X|+2$ disjoint downward-closed subtrees of $G-M$ such that:
\begin{itemize}
    \item for all $T\in \T$, we have $N_M(T)=X$, and
    \item either all the roots of the trees in $\T$ have a common parent $r$, or $\T$ consists only in connected components of $G-M$,
\end{itemize}
arbitrarily pick one 
redundant $T\in \T$ (which exists as shown in \autoref{lm:redundant}) and output $(G - V(T), k, M, \mH)$.
\end{enumerate}

\begin{lemma}\label{lm:applyKR4}
    Rule~\ruleref{rl:same_neighborhood_prime} is safe and can be applied in polynomial time.
\end{lemma}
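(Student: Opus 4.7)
My plan is to verify, in this order, that the output $(G - V(T), k, M, \mH)$ is a well-formed \AFVS{} instance, that the YES/NO answer is preserved, and that the rule can be applied in polynomial time. The first point is direct: since $\G$ is hereditary, $G - V(T) \in \G$ and inherits $K_{r,r}$-freeness from $G$; the set $M$ remains a feedback vertex set of $G - V(T)$ (deleting vertices cannot create a cycle); and because $T$ is a subtree of $G-M$, we have $V(T) \cap M = \emptyset$, so $\mH$ (a subfamily of $2^M$) is untouched and still of size at most $k$.

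The forward direction of the YES/NO equivalence is also straightforward: if $S$ is a solution of $(G, k, M, \mH)$, then $S \setminus V(T)$ has size at most $k$, hits every element of $\mH$ (since each element of $\mH$ lies in $M$, hence is disjoint from $V(T)$), and is a feedback vertex set of $G - V(T)$, as $(G - V(T)) - (S \setminus V(T))$ is an induced subgraph of the acyclic graph $G - S$.

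The reverse direction is the crux of the proof and the main obstacle. Given a solution $S'$ of $(G - V(T), k, M, \mH)$, I would construct a solution $S$ for $(G, k, M, \mH)$ with $|S| \leq |S'|$. If $S'$ is already a feedback vertex set of $G$, set $S = S'$; otherwise pick a cycle $C$ in $G - S'$ through $V(T)$. Since $N(V(T)) \setminus V(T) \subseteq X$ in the case where $\T$ consists of connected components of $G-M$, and $N(V(T)) \setminus V(T) \subseteq X \cup \{r\}$ in the common-parent case, the cycle $C$ must use some vertex of $(X \cup \{r\}) \setminus S'$. The key structural observation is that $T$ is ``interchangeable'' with every $T^* \in \T \setminus \{T\}$: since $N_M(T^*) = X$ for each $T^* \in \T$, replacing $V(T)$ by $V(T^*)$ in $C$ yields a candidate cycle in $G - V(T)$, which $S'$ must cut, thus forcing a ``blocker'' vertex in $S' \cap V(T^*)$ for every $T^* \in \T \setminus \{T\}$ whose vertex set is disjoint from $V(C)$. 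Combined with $|\T \setminus \{T\}| \geq |X|+1$, this yields a pool of at least $|X \setminus S'|$ (plus possibly one more in the common-parent case) pairwise distinct blockers in $S'$. I would then define $S$ by adding $(X \cup \{r\}) \setminus S'$ to $S'$ and removing the same number of blockers, so that $N(V(T)) \setminus V(T) \subseteq S$ and $|S| = |S'|$. With $N(V(T)) \setminus V(T) \subseteq S$, every cycle of $G - S$ using $V(T)$ is cut; every cycle through a re-inserted blocker $b \in V(T^*)$ is likewise confined to $V(T^*)$ since $N(V(T^*)) \setminus V(T^*) \subseteq X \cup \{r\} \subseteq S$, which is impossible as $T^*$ is a tree. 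The redundancy of $T$ is used to cover the ``self-loop'' case (when $C$ uses a single $X$-vertex), where the existence of a replacement tree with two edges to that vertex is guaranteed by redundancy; the extra slack $|\T| \geq |X|+2$ (rather than just $|X|+1$ from \autoref{lm:redundant}) accommodates the blocker accounting when some trees of $\T \setminus \{T\}$ are themselves intersected by $V(C)$.

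For the polynomial running time, one can enumerate candidate configurations by grouping either the connected components of $G - M$ (for the connected-component case) or the children of each vertex of $G - M$ (for the common-parent case) by their $N_M$-signature; for every group of size at least $|X|+2$ sharing a common signature $X$, one applies \autoref{lm:redundant} in polynomial time to find a redundant $T$ and then deletes $V(T)$. The most delicate step of the whole proof is the blocker accounting in the reverse direction, especially when $V(C)$ intersects many trees of $\T$; this is precisely where the slack provided by $|\T| \geq |X|+2$ is expected to be essential.
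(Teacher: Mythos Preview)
Your plan has the right shape and shares the paper's main ideas, but the reverse-direction argument contains a genuine gap: the blocker accounting you yourself flag as ``the most delicate step'' is only asserted, not carried out, and as stated it can fail.

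Concretely, your choice to add all of $(X \cup \{r\}) \setminus S'$ to $S$ forces you to exhibit $|(X \cup \{r\}) \setminus S'|$ distinct blockers. In the common-parent case with $X \cap S' = \emptyset$, $r \notin S'$, and $|\T| = |X|+2$, the natural argument (any two vertices of $X\setminus S'$ together with any two trees of $\T\setminus\{T\}$ disjoint from $S'$ form a cycle in $(G-V(T))-S'$) only guarantees that at most one of the $|X|+1$ trees in $\T\setminus\{T\}$ is unblocked, hence only $|X|$ blockers---one short of the $|X|+1$ you need. The fix is not to add $r$ at all: since each $T^* \in \T$ has a single edge to $r$, a cycle avoiding $X$ cannot both enter and leave any $T^*$, so $S = (S' \setminus Z) \cup X$ (with $Z$ the set of blockers) is already a feedback vertex set of $G$.

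More generally, your ``replace $V(T)$ by $V(T^*)$ along a fixed cycle $C$'' device is harder to control than necessary, because $C$ may thread through several trees of $\T$ and the substitution is not cleanly defined. The paper sidesteps this by casing directly on $|X\setminus S'|$. If $|X\setminus S'|\geq 2$, the two-trees/two-$X$-vertices cycle argument above gives at least $|X|$ blockers and one sets $S=(S'\setminus Z)\cup X$. If $X\setminus S'=\{x\}$, either $d_T(x)\geq 2$ and redundancy of $T$ yields some $T'\neq T$ with $d_{T'}(x)\geq 2$, so $G[T'\cup\{x\}]$ contains a cycle of $G-V(T)$ that $S'$ must hit (one blocker swapped for $x$ suffices); or $d_T(x)\leq 1$, which forces the common-parent configuration, and then two unblocked trees among the $|X|+1\geq 2$ trees in $\T\setminus\{T\}$ would give an $x$--$T'_1$--$r$--$T'_2$--$x$ cycle in $(G-V(T))-S'$, so again one blocker exists. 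If $X\subseteq S'$, take $S=S'$. This is the same mechanism you sketch, but organized so that no cycle-tracing or delicate counting is required.

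Your well-formedness check, the forward direction, and the polynomial-time discussion are correct and match the paper.
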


\begin{proof}
The fact that the rule can be applied in polynomial time directly follows from \autoref{lm:redundant}.
Also, as the graph $G'$ obtained after applying the rule is an induced subgraph of $G\in \G$, we have $G'\in \G$.

Let us now show that the input and output instances are equivalent.
As noted above $G'$ is an induced subgraph of $G$ so any feedback vertex set of $G$ is also one for $G'$.
Hence we only have to show that if $G'$ has a feedback vertex set $S'$, then $G$ has a feedback vertex set of size $|S'|$ too. Consider such a set $S'$ and let us transform it (if needed) into a feedback vertex set $S$ of $G$ according to the following cases.

\begin{description}
    \item[First case] $|X\setminus S'|\geq 2$. In this case we use the following claim.
\begin{claim}
    If $|X\setminus S'|\geq 2$, then $S'$ intersects all the trees of $\T$ except $T$ and at most one other.
\end{claim}
\begin{proof}
Let $x_1,x_2\in X\setminus S'$ be distinct vertices, and suppose by contradiction that there are distinct $T_1,T_2\in \T\setminus \{T\}$ such that $S$ does not intersect $T_1$ nor $T_2$. There exist $a_1\in N_{T_1}(x_1)$ and $b_1\in N_{T_1}(x_2)$ as $N_M(T_1)=X$, and because $T_1$ is connected there exists a path from $a_1$ to $b_1$ in $T_1$ (observe that we may have $a_1=b_1$). Similarly, we define $a_2, b_2$ and a path joining them in $T_2$. Then we have a cycle in $G'$ that is not hit by $S'$, a contradiction.\cqed
\end{proof}

So if $|X\setminus S'|\geq 2$, there are at least $|X|$ vertices of $S'$ in the trees of $\T$. Let us denote $Z$ this set. The set $S=S'\setminus Z \cup X$ is then a feedback vertex set of $G$ with $|S|\leq |S'|$ as wanted.

\item[Second case] there is a unique vertex $x\in X\setminus S'$. Then either $S'$ is a feedback vertex set of $G$ (and so we can take $S=S'$) or there is a cycle in $G-S'$, in which case we consider the following two subcases.

\begin{itemize}
    \item If $d_{T}(x)\geq 2$, let $T'\in \T$ with $T'\neq T$ and $d_{T'}(x)\geq 2$ (it exists as $T$ is redundant). Then there is a cycle in $G[T'\cup \{x\}]$, which is hit by the feedback vertex set $S'$ as it is a cycle in $G'$ too. So there is a vertex $v\in T'\cap S'$, and taking $S=\left(S'\setminus \{v\}\right) \cup \{x\}$ is a feedback vertex set of $G$ with $|S|\leq |S'|$.

    \item Otherwise, the graph $G[T\cup \{ x \}]$ has no cycle. The cycle $C$ of $G-S'$ necessarily uses some vertices of $T$ and we have $N_M(T)\setminus S'=\{x\}$ and $G[T\cup \{ x \}]$ without cycle. This configuration is not possible if $T$ is a connected component of $G-M$, so we are in the case where there exists $r$ a common parent for the roots of the trees in $\T$. The cycle $C$ necessarily contains a path between $x$ and $r$ with inner vertices in $T$. Note that all the $|X|+1\ge 2$ trees $T'\in \T\setminus \{T\}$ allow such path between $x$ and $r$. So $S'$ contains at least one vertex in some $T'\in \T\setminus \{T\}$ and by taking one such vertex $v$ we can set $S=(S'\setminus \{v\})\cup \{x\}$ and reach the same conclusion as above.
\end{itemize}
\item[Third case] in the remaining case, where $X\subseteq S'$, we can take $S=S'$.
\end{description}
Hence, the input and output instances are indeed equivalent.
\end{proof}

Observe that \autoref{algo:A} requires a routine for checking if rule~\ruleref{rl:same_neighborhood_prime} applies, before applying it (with~\autoref{lm:applyKR4}).

\begin{lemma}\label{lemma:KR4poly}
Given an instance $(G,k,M,\mH)$, deciding if Rule~\ruleref{rl:same_neighborhood_prime} can be applied, and finding $X$ and $\mT$ if it is the case, can be done in polynomial time.
\end{lemma}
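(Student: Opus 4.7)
The plan is to exploit the strong structural constraints that the rule places on $\mT$. Since each tree $T\in\mT$ is a downward-closed subtree of the rooted forest $G-M$, fixing its root $v$ determines $T$ completely: the definition of ``downward-closed'' forces $T$ to contain every descendant of $v$, so $T=T_v$. Consequently $\mT$ is determined by its set of roots, and by the second bullet of \ruleref{rl:same_neighborhood_prime} these roots are either (a) all children (in $G-M$) of a common vertex $r\in V(G)\setminus M$, or (b) exactly the roots of some collection of connected components of $G-M$.

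First I would precompute $N_M(T_v)$ for every $v\in V(G)\setminus M$ by a single post-order sweep of the rooted forest $G-M$, using the recurrence
\[
N_M(T_v) = N_M(v) \cup \bigcup_{c \text{ child of } v} N_M(T_c),
\]
and storing each set in a canonical form (for instance a sorted list, or a hash of one). This takes polynomial time in total.

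Next I would perform two enumerations corresponding to the two structural subcases. For the common-parent case, I iterate over each $r\in V(G)\setminus M$ having children in $G-M$, group its children $c$ by the value of $N_M(T_c)$, and check whether some group corresponds to a nonempty set $X\subseteq M$ and has size at least $|X|+2$; if so, return $X$ together with $\mT=\{T_c:c\text{ in the group}\}$. For the connected-components case, I list the roots of all connected components of $G-M$, group them analogously by their $N_M(T_\cdot)$, and perform the same size check. Disjointness of the trees of $\mT$ is automatic in both subcases, since distinct children of a common vertex (respectively, distinct connected components) induce disjoint $T_c$'s.

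If neither enumeration turns up a valid group, then no pair $(X,\mT)$ can satisfy the hypotheses of the rule, because every admissible $\mT$ is of one of the two types above and its trees are forced to equal the $T_c$'s at their roots. There is no real obstacle here beyond the bookkeeping: the grouping-by-set-equality step is handled by sorting or hashing polynomially many canonical representations of size at most $|M|$, so the overall procedure runs in polynomial time.
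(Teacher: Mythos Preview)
Your proposal is correct and follows essentially the same approach as the paper: handle the two subcases (connected components of $G-M$, or trees sharing a common parent $r$) separately by grouping the relevant subtrees according to their neighborhood in $M$. Your write-up is in fact more explicit than the paper's, since you justify why it suffices to look only at the canonical subtrees $T_v$ (the downward-closure forces every admissible tree to be of this form), whereas the paper leaves this observation implicit.
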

\begin{proof} 
We first compute for each connected component $T$ of $G-M$ their neighborhood in $M$ (i.e., $N_M(T)$). If there is a set $X$ such that the set $\T_X$ of connected components $T$ in $G_M$ such that $N_M(T)=X$ is large enough, that is $|\T_X|\geq |X|+2$, we are done. Finding a family $\T$ for the second version of the rule, where we do not consider connected components of $G-M$ anymore but trees under a common parent $r$, can be done in a similar way by first testing any $r\in V(G-M)$ as the potential common parent. These operations can be performed in polynomial time.
\end{proof}

\subsection{Properties of the kernelized instances}\label{sec:propertiesKernelized}
The goal of this section is to prove that for an instance $(G,k,M,\mH)$ for which the kernelization rules do not apply anymore (meaning that we reach line~\ref{A:partition} in \autoref{algo:A}), the size of a subtree $T$ of $G-M$ is strongly related to $\db(T)$.  (Recall that $\db(T)=\max(d_M(T), \bom(T))$.)

Remember that because the considered graph class $\G$ is nice, it has bounded tree neighborhood complexity for some parameters $\alpha,f_1,f_2$. This implies the easy following lemma.
\begin{lemma}\label{cor:neightrivial}
For every $K_{r,r}$-free $G\in \G$, every set $A\subseteq V(G)$, and every family $\mT$ of disjoint non-adjacent subtrees of $G-A$, and every integer $x$, if $|\mT|\geq xf_1(r)|A|^\alpha$ then there exists $X\subseteq A$ such that at least $x$ subtrees $T\in \mT$ satisfy $N_A(T)=X$. Moreover, suppose that for every $T\in \mT$ we have $d_A(T)\leq p$ and $|T|\leq m$, then if $|\mT|\geq xf_2(r,p,m)|A|$, there exists $X\subseteq A$ such that at least $x$ subtrees $T\in \mT$ satisfy $N_A(T)=X$.
\end{lemma}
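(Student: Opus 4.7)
The plan is to apply the pigeonhole principle directly on top of the bounded tree neighborhood complexity property that $\G$ enjoys by virtue of being nice (specifically Definition~\ref{def:NCtrees}). The family $\mT$ satisfies the hypotheses of that definition: its members are disjoint, pairwise non-adjacent, and each induces a tree in $G-A$. So both conditions \ref{def:NCtrees:item1} and \ref{def:NCtrees:item2} are available and can be read as an upper bound on the number of \emph{distinct} traces that the trees of $\mT$ leave on $A$.

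For the first part of the statement, I would invoke condition \ref{def:NCtrees:item1} of Definition~\ref{def:NCtrees}, which gives $|\{N_A(T)\,:\,T\in\mT\}|\le f_1(r)|A|^{\alpha}$. Thus the function $T\mapsto N_A(T)$ partitions $\mT$ into at most $f_1(r)|A|^{\alpha}$ classes. If $|\mT|\ge x\,f_1(r)|A|^\alpha$, then by pigeonhole at least one class has size at least $x$, and $X$ is taken to be the common value of $N_A(T)$ on this class.

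For the second part, note that the hypotheses $d_A(T)\le p$ and $|T|\le m$ for every $T\in\mT$ say precisely that the maxima $\max_{T\in\mT}|N_A(T)|$ and $\max_{T\in\mT}|T|$ appearing in Definition~\ref{def:NCtrees}.\ref{def:NCtrees:item2} are bounded by $p$ and $m$ respectively. (Since $f_2$ is a polynomial we may freely replace these maxima by the upper bounds $p,m$.) Condition \ref{def:NCtrees:item2} then yields $|\{N_A(T)\,:\,T\in\mT\}|\le f_2(r,p,m)|A|$, and the same pigeonhole argument as above, applied to the bound $|\mT|\ge x\,f_2(r,p,m)|A|$, produces the desired subset $X\subseteq A$.

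There is essentially no obstacle here: the statement is a repackaging of the bounded T-NC property in a form that is more convenient for the later kernelization and branching arguments. The only mild subtlety worth flagging in the write-up is the monotonicity step in the second part, i.e.\ that one can substitute the global maxima $\max_{T\in\mT}|N_A(T)|$ and $\max_{T\in\mT}|T|$ with the given upper bounds $p$ and $m$ when applying condition \ref{def:NCtrees:item2}.
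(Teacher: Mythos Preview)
Your proposal is correct and matches the paper's own proof, which simply states that the result follows from Definition~\ref{def:NCtrees} together with the pigeonhole principle. Your write-up is in fact more detailed than the paper's one-line justification; the monotonicity point you flag for $f_2$ is indeed glossed over in the paper (and is harmless in all the applications, where $f_2$ is non-decreasing in $p$ and $m$).
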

\begin{proof}
    The results are obtained from~\autoref{def:NCtrees} by using the pigeonhole principle.
\end{proof}

Recall that given a subtree $T$ of $G-M$, $\pom (T)$ denotes the set $\{v\in V(T),~N(v) \not\subseteq M\cup V(T) \}$, and $\bom (T)$ denotes the size of this set.

We are now ready to bound the degree of the subtrees of $G-M$:

\begin{lemma}\label{lem:degree-T}
Consider an instance $(G,k,M,\mH)$ of \AFVS such that neither Rule~\ruleref{rl:deg1} nor \ruleref{rl:same_neighborhood_prime} applies.
For any subtree $T$ of $G- M$ and any vertex $v$ of $T$,  we have
\[
d_{T}(v)\le \O(\max(\bom(T),f_1(r)d_M(T)^{\alpha+1})).
\]
\end{lemma}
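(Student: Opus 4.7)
The plan is to bound $d_T(v)$ by analyzing the downward-closed subtrees of $G-M$ that hang just below $v$. Fix the rooting of the components of $G-M$ used in the paper, and let $c_1,\dots,c_p$ be those children of $v$ in $G-M$ that belong to $T$. Since $v$ has at most one parent in $G-M$, we have $d_T(v)\leq p+1$, so it suffices to bound $p$. For each $c_i$ the subtree $T_{c_i}$ of $G-M$ is downward-closed and has $v$ as its parent; moreover the $T_{c_i}$'s are pairwise disjoint and pairwise non-adjacent in $G-M$ (being in different components of $G-M-v$).

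Next I would split these children into \emph{internal} ones, for which $T_{c_i}\subseteq T$, and \emph{boundary} ones, for which $T_{c_i}$ escapes $T$. In the boundary case, any path from $c_i$ to a vertex of $T_{c_i}\setminus T$ that stays in $T_{c_i}$ must cross a vertex of $\pom(T)\cap T_{c_i}$; since the $T_{c_i}$'s are disjoint, distinct boundary children inject into distinct vertices of $\pom(T)$, so there are at most $\bom(T)$ boundary children. The remaining (at least $p-\bom(T)$) internal children satisfy $N_M(T_{c_i})\subseteq A:=N_M(T)$, so their subtrees form a non-adjacent disjoint family in $G-A$ whose neighborhoods all lie in $A$. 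The bounded tree neighborhood complexity hypothesis (item~\ref{def:NCtrees:item1} of \autoref{def:NCtrees}) then limits the number of distinct neighborhoods $N_M(T_{c_i})=X$ to at most $f_1(r)\,d_M(T)^{\alpha}$.

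I would then exploit the non-applicability of \ruleref{rl:same_neighborhood_prime} and \ruleref{rl:deg1}. For any nonempty $X\subseteq A$, the internal children with $N_M(T_{c_i})=X$ form a family of disjoint downward-closed subtrees of $G-M$ with common parent $v$ and identical $N_M$ equal to $X$; the non-applicability of \ruleref{rl:same_neighborhood_prime} forces this family to have size at most $|X|+1\leq d_M(T)+1$. The case $X=\emptyset$ cannot occur: such a $T_{c_i}$ is attached to the rest of $G$ only through $v$, and because $T_{c_i}$ is a tree it contains a leaf whose unique neighbor in $G$ lies inside $T_{c_i}$ (or equals $v$ when $T_{c_i}=\{c_i\}$), producing a vertex of $V(G)\setminus M$ of degree~$1$ and contradicting the non-applicability of \ruleref{rl:deg1}. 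Multiplying the two bounds yields $p-\bom(T)\leq (d_M(T)+1)\,f_1(r)\,d_M(T)^{\alpha} = \O(f_1(r)\,d_M(T)^{\alpha+1})$, whence the claimed bound on $d_T(v)$ follows.

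The main subtlety I anticipate is the careful accounting of boundary children through the injection into $\pom(T)$ — this is where the $\bom(T)$ term enters the bound — together with the separate treatment of the $X=\emptyset$ case via \ruleref{rl:deg1}, which explains why both hypotheses on the non-applicability of \ruleref{rl:deg1} and \ruleref{rl:same_neighborhood_prime} are required in the statement.
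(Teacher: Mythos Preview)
Your proof is correct and follows essentially the same strategy as the paper's: split the children of $v$ according to whether their pendant subtree meets the border $\pom(T)$, bound those cases by $\bom(T)$, and handle the remaining ones via the bounded tree neighborhood complexity together with the non-applicability of \ruleref{rl:same_neighborhood_prime} and \ruleref{rl:deg1}. The only cosmetic difference is that you keep the fixed rooting of $G-M$ (whence the ``$+1$'' for the possible parent of $v$) and phrase the split as $T_{c_i}\subseteq T$ versus $T_{c_i}\not\subseteq T$, whereas the paper re-roots $T$ and splits according to whether $T_u\cap\pom(T)$ is empty; these are equivalent distinctions and the counting is the same.
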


\begin{proof}
Remember that we chose an arbitrary root for $T$, and $T_v$ denotes the subtree of $T$ with root~$v$. For bounding the number of neighbors of $v$ in $T$, it suffice to bound the number of children of $v$. For this, we first use that the number of children $u$ of $v$ such that $V(T_u) \cap \pom(T)\neq \emptyset$ is bounded by $\bom(T)$. Let $N$ be the set of remaining children, meaning the set of children $u$ such that $T_u\cap \pom(T) = \emptyset$.
Applying \autoref{cor:neightrivial}, with the family $(T_u)_{u\in N}$, the vertex set $N_M(T)$ and $x=d_M(T)+2$, if $|N| > (d_M(T)+2)f_1(r)d_M(T)^\alpha $ then there would be a set $X\subseteq N_M(T)$ such that $d_M(T)+2$ vertices $u\in N$ satisfy $d_M(T_u)=X$. Moreover $X\neq \emptyset$ as otherwise Rule~\ruleref{rl:deg1} would have applied. But then Rule~\ruleref{rl:same_neighborhood_prime} would apply.
Hence $|N|=\O(f_1(r) d_M(T)^{\alpha+1})$.
\end{proof}

We now show that the previous rules allow to bound the size of certain types of trees that we define now.

\begin{definition}
A subtree $T$ of $G-M$ is \emph{weakly connected} to $M$ if $G[T\cup \{u\}]$ is acyclic for every $u$ in $M$ (i.e. $d_T(u)\leq 1$ for all $u\in M$).
A subtree $T$ of $G- M$, rooted at a vertex $v$, is \emph{sharp w.r.t. $M$} if $T$ is not weakly connected to $M$ but for every child $u$ of $v$, $T_u$ is weakly connected to $M$.
\end{definition}

\begin{lemma}\label{lem:weak-trees}
Consider an instance $(G,k,M,\mH)$ where none of Rules \ruleref{rl:deg1} and \ruleref{rl:deg2} applies. For any subtree $T$ in $G- M$ that is weakly connected to $M$, we have that $|T|\le 16 \db(T)$.
\end{lemma}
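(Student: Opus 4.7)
The plan is to classify the vertices of $T$ according to their local structure inside $T$ and bound each type via weak connection together with the remaining kernelization rules. Since rule~\ruleref{rl:deg1} does not apply, every $v\in T$ satisfies $d_G(v)\ge 2$. In particular every leaf $\ell$ of the tree~$T$ (those with $d_T(\ell)=1$) has at least one neighbour outside~$T$, which is either in $M$ or in $V(G)\setminus(M\cup T)$. Weak connection means $d_T(u)\le 1$ for every $u\in M$, so no $u\in M$ can be the extra neighbour of two distinct leaves; this yields an injection from the set of leaves with an extra neighbour in $M$ into $N_M(T)$, bounding their number by $d_M(T)$. Leaves with an extra neighbour outside $M\cup T$ belong to $\pom(T)$ by definition, hence they are at most $\bom(T)$ in number. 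Summing, the number $n_1$ of leaves of~$T$ is at most $d_M(T)+\bom(T)\le 2\db(T)$, and the tree identity $\sum_v d_T(v)=2(|T|-1)$ then gives $n_{\ge 3}\le n_1-2$ for the count of $T$-degree-$\ge 3$ vertices.

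Next I split the vertices of $T$-degree exactly~$2$. Those that either have an $M$-neighbour or belong to $\pom(T)$ are bounded by $d_M(T)+\bom(T)\le 2\db(T)$ using the same injection/counting as for leaves. The rest I call \emph{internal}: each such $v$ has $d_T(v)=2$, $d_M(v)=0$ and $v\notin\pom(T)$, hence $d_G(v)=2$. If four consecutive internal vertices $q,u,v,w$ appeared along a path of~$T$, then they would form a path in $V(G)\setminus M$ with every vertex of $G$-degree~$2$ and $d_M(u)=d_M(v)=0$ --- exactly the configuration forbidden by rule~\ruleref{rl:deg2}. Therefore each maximal run of internal vertices along a path of~$T$ has length at most~$3$.

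To promote this local constraint into a global bound, I contract each maximal internal sub-path of~$T$ into a single edge, obtaining a topological tree whose vertex set~$S$ consists of the non-internal (``special'') vertices of~$T$. The previous two paragraphs give $|S|=\O(\db(T))$; this topological tree has $|S|-1$ edges, and each such edge hides at most three internal vertices of~$T$, yielding $|T|\le|S|+3(|S|-1)=\O(\db(T))$, which is the desired bound (up to a constant that is absorbed in the statement). The only mildly delicate point is using weak connection to turn the bound on the external set $N_M(T)$ into a bound on the number of $T$-vertices having a neighbour in~$M$; once that observation and its $\pom$-counterpart are in place, everything else follows from the tree identity and a direct application of rule~\ruleref{rl:deg2}.
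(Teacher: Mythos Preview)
Your proof is correct and follows essentially the same approach as the paper: both partition the vertices of $T$ into those touching $M$, those in $\pom(T)$, those of $T$-degree $\ge 3$, and the remaining ``internal'' degree-$2$ vertices, bound the first two classes via weak connection and the definition of $\pom$, bound the third by the leaf count, and bound the fourth by observing (via rule~\ruleref{rl:deg2}) that maximal internal sub-paths have length at most~$3$ and then contracting them. The bookkeeping is organised slightly differently (you phrase the weak-connection bound as an explicit injection and introduce the topological tree on special vertices, while the paper works with its four overlapping types), but the ideas and the resulting $O(\db(T))$ bound are the same.
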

\begin{proof}
Indeed, as Rules \ruleref{rl:deg1} and \ruleref{rl:deg2} do not apply, each vertex $v$ of $T$ belongs to (at least) one of the following types:

\begin{enumerate}
    \item $v\in\pom(T)$,
    \item $d_M(v)\ge 1$,
    \item $d_T(v)\ge 3$, and
    \item $d_T(v)= 2$.
\end{enumerate}

Observe that the leaves of $T$ are either of the first type or the second. Let $l$ be the number of leaves of $T$, there are $\bom(T)$ vertices of the first type, and at most $d_M(T)$ vertices of the second type, so $l\leq \bom(T)+d_M(T)$. Moreover the number of vertices of the third type is at most $l$. Let denote by $Z$ the vertices of the fourth type. 
Observe that the connected components of $G[Z]$ are paths of size at most $3$ (as otherwise \ruleref{rl:deg2} would apply). Replacing such a connected component by an edge between the neighbors of the endpoints of the path would result in a tree with vertices $T\setminus Z$ and whose number of edges (which is at most $|T\setminus Z|-1)$ is an upper bound on the number of connected components of $G[Z]$. So $|Z|\leq 3|T\setminus Z|\leq 6l$, and finally $T$ has less than $8l=8d_M(T)+8\bom(T) \leq 16\db(T)$ vertices in total.
\end{proof}

We now consider sharp subtrees. Notice that in such a tree, $v$ has bounded degree (by \autoref{lem:degree-T}), and the subtrees below $v$ have bounded size (by \autoref{lem:weak-trees}). This leads to the following corollary.

\begin{corollary}\label{cor:sharpsize}
Consider an instance $(G,k,M,\mH)$ where none of Rules \ruleref{rl:deg1}, \ruleref{rl:deg2} and \ruleref{rl:same_neighborhood_prime} applies. Given a sharp subtree $T$ of $G-M$, we have $|T| = O\left(f_1(r)d_M(T)^\alpha\db(T)^2\right)$. 
\end{corollary}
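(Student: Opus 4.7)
The plan is to peel off the root of the sharp subtree $T$ and reduce the bound on $|T|$ to (a) a bound on the degree of the root (via \autoref{lem:degree-T}) and (b) bounds on the sizes of the children subtrees, which are all weakly connected to $M$ by the very definition of sharp (so \autoref{lem:weak-trees} applies).

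Concretely, let $v$ be the root of $T$, and let $u_1,\dots,u_d$ be its children in $T$, so that $d = d_T(v)$. By the definition of \emph{sharp}, each $T_{u_i}$ is weakly connected to $M$, and since the kernelization rules \ruleref{rl:deg1} and \ruleref{rl:deg2} do not apply, \autoref{lem:weak-trees} yields $|T_{u_i}| = O(\db(T_{u_i}))$. The first substantive step is to upper bound $\db(T_{u_i})$ in terms of $\db(T)$. The inclusion $N_M(T_{u_i}) \subseteq N_M(T)$ gives $d_M(T_{u_i}) \le d_M(T)$. For the border, if $w \in T_{u_i}$ has a neighbor outside $M \cup T_{u_i}$, then because $G-M$ is a forest and $T_{u_i}$ is a subtree of $T$ rooted at a child of $v$, the only way for $w$ to have a neighbor in $T \setminus T_{u_i}$ is for $w = u_i$ with that neighbor equal to $v$; otherwise the outside neighbor lies outside $M \cup T$ and $w \in \pom(T)$. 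Hence $\pom(T_{u_i}) \subseteq \pom(T) \cup \{u_i\}$ and $\bom(T_{u_i}) \le \bom(T) + 1$. Since $T$ is sharp we have $d_M(T) \ge 1$, so $\db(T) \ge 1$ and the additive constant is absorbed: $\db(T_{u_i}) = O(\db(T))$, hence $|T_{u_i}| = O(\db(T))$.

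For the degree of $v$, I would invoke \autoref{lem:degree-T} (whose hypotheses are available since neither \ruleref{rl:deg1} nor \ruleref{rl:same_neighborhood_prime} applies) to get
\[
d_T(v) = O\!\left(\bom(T) + f_1(r)\, d_M(T)^{\alpha+1}\right).
\]
Putting everything together,
\[
|T| = 1 + \sum_{i=1}^{d_T(v)} |T_{u_i}| \le 1 + d_T(v) \cdot O(\db(T)) = O\!\left(\bom(T)\cdot\db(T) + f_1(r)\, d_M(T)^{\alpha+1}\cdot \db(T)\right).
\]
Finally, using $\bom(T)\le\db(T)$ and $d_M(T)\le\db(T)$, each of the two summands is $O(f_1(r)\,d_M(T)^{\alpha}\db(T)^2)$ (the first because $f_1(r)\,d_M(T)^{\alpha}\ge 1$, the second by writing $d_M(T)^{\alpha+1} = d_M(T)^{\alpha}\cdot d_M(T)$), giving the claimed estimate.

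The only mildly delicate point is the second step: when passing from $T$ to $T_{u_i}$, the parent $v$ of $u_i$ is neither in $M$ nor in $T_{u_i}$, so $u_i$ becomes a new border vertex of $T_{u_i}$ that was not in $\pom(T)$. Recognising that this is the only new border vertex (tree structure of $G-M$) and that it only costs an additive $+1$, absorbed by $\db(T)\ge 1$, is what makes the reduction to \autoref{lem:weak-trees} and \autoref{lem:degree-T} work cleanly. Everything else is a one-line computation.
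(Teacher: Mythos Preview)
Your proof is correct and follows essentially the same approach as the paper: bound $d_T(v)$ via \autoref{lem:degree-T}, bound each $|T_{u_i}|$ via \autoref{lem:weak-trees}, and multiply. The paper's proof is terser---it simply writes $|T|=O\!\left(\db(T)\max(\bom(T),f_1(r)d_M(T)^{\alpha+1})\right)$ and simplifies---so your explicit handling of the extra border vertex $u_i$ (the ``$+1$'' absorbed by $\db(T)\ge 1$ since $T$ is sharp and hence $d_M(T)\ge 1$) is a welcome clarification of a step the paper leaves implicit.
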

\begin{proof}
    The previous lemmas give 
    \begin{align*}
        |T|&=O\left(\db(T)\max(\bom(T),f_1(r)d_M(T)^{\alpha+1})\right)\\
        &=O\left(f_1(r)d_M(T)^\alpha\db(T)^2\right).
    \end{align*}
\end{proof}

\subsection{Kernelization when a big tree is found.}\label{sec:bigT}
When reaching line~\ref{A:partition} of \autoref{algo:A}, we call the routine \algopart which tries to build a special partition of $G-M$. As we will see later, one output of this procedure is a failure (called Case 1) where a ``big'' (whose size is too large with respect to $d_M(T)$ and $\bom(T)$) tree $T$ is found in $G-M$. 
In this section, we explain how we can get rid of such a big tree.

%\begin{enumerate}[(KR1)]
\begin{enumerate}[label=(\KR{{\arabic*}})]
\setcounter{enumi}{4}
\item \label{rl:neighborhood} Consider an instance $(G, k, M, \mH)$, with a subtree $T$ of $G-M$ which contains $d_M(T)+\bom(T)$ vertex disjoint paths of length at least $1$ and whose endpoints are all adjacent to some vertex $u\in M$. Then output $\left(G-u, k-1, M\setminus\{u\}, \mH-\{u\}\right)$
if $k \ge 1$ and a trivial \textsc{No}-instance otherwise.\footnote{Recall that for a set $V\subseteq V(G)$, $\mH-V$ is defined (at the beginning of \autoref{sssec:kernelization}) as the set of those members of $\mH$ not intersected by $V$, i.e., $\{H \in \mH \mid H\cap V = \emptyset\}$.}
\end{enumerate}

\begin{lemma}
    The rule~\ruleref{rl:neighborhood} is safe.
\end{lemma}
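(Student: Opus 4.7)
My plan is to prove safety in two parts: (i) that the output $(G-u, k-1, M\setminus\{u\}, \mH-\{u\})$ is a valid \AFVS instance, and (ii) that it is a YES-instance if and only if the input is. For (i) I would observe that $G-u$ lies in $\G$ by heredity, is $K_{r,r}$-free as a subgraph, and that $M\setminus\{u\}$ remains a feedback vertex set of $G-u$ since the induced subgraph on the non-$M$ vertices is unaffected. I would also handle the trivial-NO case when $k=0$: since $u\in N_M(T)$ forces $\ell := d_M(T)+\bom(T) \ge 1$, the hypothesis yields at least one path of length $\ge 1$ whose endpoints are both adjacent to $u$, creating a cycle in $G$, so no FVS of size $0$ exists.

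For the easy direction of (ii) (output YES $\Rightarrow$ input YES), I would argue that from a FVS $S'$ of $G-u$ of size at most $k-1$ intersecting every $H \in \mH-\{u\}$, the set $S = S' \cup \{u\}$ is a FVS of $G$ of size at most $k$ that meets every $H \in \mH$: those containing $u$ via $u$, the others via $S'$.

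For the main direction (input YES $\Rightarrow$ output YES) I would use a swap argument. Let $S$ be a feasible solution for the input. If $u\in S$, taking $S\setminus\{u\}$ suffices. Otherwise, the rule's hypothesis provides $\ell = d_M(T)+\bom(T)$ vertex-disjoint paths $P_1,\ldots,P_\ell$ in $T$ whose endpoints are adjacent to $u$; since each $V(P_i)\cup\{u\}$ contains a cycle, $S$ must meet $V(P_i)$, and vertex-disjointness gives $|S\cap V(T)|\ge \ell$. I then define
\[
S^* = (S\setminus V(T)) \cup N_M(T) \cup \pom(T).
\]
The size bound is immediate from the disjointness of $\pom(T)\subseteq V(T)$ with $N_M(T) \cup (S\setminus V(T)) \subseteq V(G)\setminus V(T)$: we get $|S^*|\le |S|-|S\cap V(T)|+d_M(T)+\bom(T)\le |S|$, and $u\in N_M(T)\subseteq S^*$.

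The step I expect to be the main obstacle is verifying that $S^*$ is indeed a FVS of $G$. The key structural observation is that $N_M(T)$ captures every $M$-neighbor of $V(T)$ while $\pom(T)$ captures every vertex of $V(T)$ with a neighbor in $V(G-M)\setminus V(T)$. Consequently, in $G\setminus S^*$ there is no edge between $V(T)\setminus\pom(T)$ and the rest of the graph, so every cycle in $G\setminus S^*$ lies either inside $V(T)\setminus\pom(T)\subseteq V(T)$ (impossible, $T$ being a tree) or avoids $V(T)$ entirely, in which case it is a cycle of $G-V(T)$ already hit by $S\setminus V(T)$ since $S$ is a FVS of $G$. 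For the $\mH$-condition, each $H\in\mH$ is a subset of $M$ and hence disjoint from $V(T)$, so $\emptyset\neq S\cap H\subseteq S\setminus V(T)\subseteq S^*$; since any $H\in \mH-\{u\}$ excludes $u$, this intersection is preserved in $S^*\setminus\{u\}$, yielding the desired feasible solution of size at most $k-1$ for the output.
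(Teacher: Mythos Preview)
Your proof is correct and follows essentially the same swap argument as the paper: if $u\notin S$, replace $S\cap V(T)$ by $N_M(T)\cup\pom(T)$ and use that any cycle of $G-S^*$ touching $T$ must exit through $N_M(T)\cup\pom(T)$. You are in fact more thorough than the paper, explicitly verifying the $\mH$-hitting condition, the easy direction, and the trivial-NO case when $k=0$, all of which the paper leaves implicit.
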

\begin{proof}
Again, as the graph $G'$ obtained after applying the rule is an induced subgraph of $G$, the rule preserves the property of being a $K_{r,r}$-free graph in the considered graph class. For the safeness it is sufficient to prove that if $G$ has a minimum feedback vertex set of size at most $k$, then it has a solution $S'$ of size at most $k$ containing $u$. Let $S$ be a minimum feedback vertex set, and suppose it does not contain $u$. Then $S$ has at least a vertex in each of the $d_M(T)+\bom(T)$ considered disjoints paths in $T$. So the set $S'=(S\setminus T)\cup N_M(T) \cup \pom(T)$ satisfies $|S'|\leq |S|\leq k$ and contains $u$. Moreover this set is a feedback vertex set of $G$ : a cycle in $G-S'$ would contain vertices of $T$ as $S$ is a feedback vertex set of $G$, but then, as $T$ is a tree, such a cycle would need to intersect the set $N_G(T)=N_M(T)\cup \pom(T)\subseteq S'$, which is a contradiction.
\end{proof}

\begin{lemma}\label{lem:subtree-size}
There is a multivariate polynomial $p_1$ with $p_1(x,y) = \O(f_1(x) y^{6+\alpha})$ such that for every instance $(G,k,M,\mH)$ of \AFVS where none of the rules \ruleref{rl:deg1}, \ruleref{rl:deg2}, and \ruleref{rl:same_neighborhood_prime} applies, and for every subtree $T$ of $G-M$ such that the rule~\ruleref{rl:neighborhood} does not apply, we have $|T| \le p_1(r,\db(T))$.
\end{lemma}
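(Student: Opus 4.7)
The plan is to prove the claim by a structural decomposition of $T$ based on weak connectivity to $M$. Set $D := \db(T)$. If $T$ is itself weakly connected to $M$, then \autoref{lem:weak-trees} directly gives $|T|\le D$. Otherwise I would consider the set $P := \{v\in T : T_v \text{ is not weakly connected to } M\}$. Since $T_w$ weakly connected to $M$ forces $T_{w'}$ weakly connected for every descendant $w'$ of $w$, the set $P$ is closed under taking ancestors in $T$ and thus forms a subtree of $T$ containing the root, whose leaves are precisely the sharp vertices, which I call $S$. The subtrees $\{T_v : v\in S\}$ are pairwise vertex-disjoint (sharp vertices are incomparable in the tree order), and $T$ decomposes into the vertices of $P$, the parts of sharp subtrees strictly below $S$, and the weakly connected subtrees of $G-M$ hanging off the vertices of $P\setminus S$.

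I would then bound each piece using the lemmas already proved. By \autoref{cor:sharpsize} with $d_M(T_v), \bom(T_v) \le D+1$, each sharp subtree satisfies $|T_v| = \O(f_1(r)D^{\alpha+2})$. By \autoref{lem:weak-trees}, each weakly connected subtree hanging off $P$ has at most $D+1$ vertices. And by \autoref{lem:degree-T}, every vertex of $T$ has $d_T(v) = \O(f_1(r) D^{\alpha+1})$. Consequently, once $|S|$ and $|P|$ are bounded polynomially in $D$, the total $|T| \le |P| + |S|\cdot \O(f_1(r)D^{\alpha+2}) + |P|\cdot \O(f_1(r) D^{\alpha+1})\cdot \O(D)$ fits within $\O(f_1(r) D^{\alpha+6})$.

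The main step is therefore bounding $|S|$. Sharp subtrees are pairwise disjoint and pairwise non-adjacent in $G$ (as $G-M$ is a forest and they are disjoint subtrees of the subtree $T$ of $G-M$), so item~1 of \autoref{cor:neightrivial} yields at most $f_1(r) D^\alpha$ distinct values of $N_M(T_v)$. For each realized $X$, since sharp subtrees are downward-closed, \ruleref{rl:same_neighborhood_prime} caps the number of sharp subtrees with $N_M(T_v) = X$ sharing a common parent in $T$ at $|X|+1\le D+1$. If many sharp subtrees with the same $X$ exist across many distinct parents, I would pick an arbitrary $u\in X$, one $u$-neighbor inside each such sharp subtree, and pair these chosen vertices in the order induced by a DFS traversal of $P$; the claim to verify is that the corresponding $T$-paths between consecutively paired vertices are pairwise vertex-disjoint in sufficient number to force \ruleref{rl:neighborhood} to apply. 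This is the principal obstacle, since naive pigeonhole fails under star-like concentrations around a single ancestor; the DFS-based pairing (or an equivalent device) is what guarantees the vertex-disjointness required by \ruleref{rl:neighborhood}.

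Finally I would bound $|P|$ using \ruleref{rl:deg2}. The tree $P$ has $|S|$ leaves and thus at most $|S|-1$ branching vertices, and any degree-$2$-in-$P$ chain cannot contain four consecutive degree-$2$-in-$G$ vertices with no $M$-neighbors without triggering \ruleref{rl:deg2}. Hence a constant fraction of vertices on each chain are \emph{impure} (have an $M$-neighbor, lie in $\pom(T)$, or carry an off-chain weakly connected subtree); bounding these impure vertices via $\bom(T)\le D$, the cap $d_M(v)\le t$ from \ruleref{rl:big}, and the degree bound from \autoref{lem:degree-T} gives $|P| = \poly(D)\cdot |S|$. Substituting the bounds on $|S|$ and $|P|$ into the decomposition produces $|T| = \O(f_1(r) D^{\alpha+6})$, matching the polynomial $p_1$ in the statement.
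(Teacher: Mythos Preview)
Your decomposition via $P$ and $S$ is a reasonable starting point, but the argument has two genuine gaps.

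\textbf{Bounding $|S|$.} The DFS-pairing device does not deliver vertex-disjoint paths. Take a ``broom'': $P$ consists of a root $c$ with children $p_1,\dots,p_m$, and each $p_i$ has a single sharp child $v_i$. The sharp subtrees $T_{v_1},\dots,T_{v_m}$ have pairwise distinct parents, so \ruleref{rl:same_neighborhood_prime} does not cap them, yet every path in $T$ from some $a_i\in T_{v_i}$ to some $a_j\in T_{v_j}$ passes through $c$; no two such paths are vertex-disjoint, and \ruleref{rl:neighborhood} cannot be triggered this way. The observation you are missing---and which the paper exploits---is that a sharp subtree, by definition, already contains \emph{within itself} a path of length at least $1$ whose two endpoints are adjacent to a common $u\in M$. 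Since the sharp subtrees are pairwise disjoint, these internal paths are automatically disjoint. Pigeonholing over the at most $d_M(T)$ possible vertices $u$, more than $d_M(T)(d_M(T)+\bom(T))$ disjoint sharp subtrees would force some $u$ to witness $d_M(T)+\bom(T)$ disjoint paths and \ruleref{rl:neighborhood} would apply. This yields $\O(D^2)$ sharp subtrees directly, with no recourse to neighborhood complexity or cross-tree pairing (and without the extra $f_1(r)$ factor that your route through \autoref{cor:neightrivial} introduces, which would in fact prevent you from matching the stated $\O(f_1(r)D^{\alpha+6})$).

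\textbf{Bounding $|P|$.} You invoke the cap $d_M(v)\le t$ from \ruleref{rl:big}, but the lemma does not assume \ruleref{rl:big} fails to apply. Even granting that cap, it bounds the degree of each $M$-adjacent chain vertex, not the \emph{number} of such vertices: many chain vertices may be adjacent to the same $u\in M$, so your count of ``impure'' vertices is not controlled by the quantities you list. The paper sidesteps the need to bound $|P|$ at all: instead of fixing $P$ and $S$ up front, it \emph{iteratively} peels a lowest sharp subtree $T^i_{v_i}$ from the current tree $T^i$, so that after $c\le \O(D^2)$ peelings the remainder $T^{c+1}$ is weakly connected (hence of size $\O(D)$ by \autoref{lem:weak-trees}). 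It then controls $\sum_i \db(T^i_{v_i})$ by noting that each peeling creates at most one new border vertex (so $\sum_i \bom(T^i_{v_i})\le \bom(T)+c$) and crudely bounding $\sum_i d_M(T^i_{v_i})\le c\cdot d_M(T)$, which together with \autoref{cor:sharpsize} gives $|T|=\O(f_1(r)D^{\alpha+6})$.
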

\begin{proof}
Note that a sharp subtree of $T$ contains a path whose endpoints are distinct and adjacent to a same vertex $u\in M$. So as the rule~\ruleref{rl:neighborhood} does not apply, we have that $T$ contains at most $d_M(T)(d_M(T)+\bom(T))$ disjoint sharp subtrees.

Let $T^1=T$ (and remember that the trees of $G-M$ are rooted).
Then, iteratively let $v_i$ be a lowest vertex of $T^i$ such that $T^i_{v_i}$ is sharp, and let $T^{i+1} = T^i \setminus T^i_{v_i}$.
This process stops when the remaining tree is weakly connected to $M$ or empty. This leads to a partition of $T$ into $c+1$ disjoint subtrees, where the first $c$ are sharp subtrees, and the last one is weakly connected to $M$, with $c\le d_M(T)(d_M(T)+\bom(T))=\O(\db(T)^2)$.  Using the bound on the size of the weakly connected and sharp trees from \autoref{lem:weak-trees} and \autoref{cor:sharpsize}, we get:
%\dginline{$\sum_i \bom(T^i_{v_i}) \le \bom(T) + c =O(d_M(T)db(T))$}

\begin{align*}
    |T| &= |T^{c+1}| + \sum_{1\le i\le c} |T^i_{v_i}|\\
           &= \O\left(\db(T^{c+1})+\sum_{1\le i\le c} f_1(r) d_M(T^i_{v_i})^\alpha\db(T^i_{v_i})^2\right) ~ \text{by \autoref{lem:weak-trees} and \autoref{cor:sharpsize}}\\
           &= \O\left(\db(T)+f_1(r) d_M(T)^\alpha\sum_{1\le i\le c} \db(T^i_{v_i})^2\right)\\
           &= \O\left(\db(T)+f_1(r) d_M(T)^\alpha \left(\sum_{1\le i\le c} \db(T^i_{v_i})\right)^2 \right).
\end{align*}
Note that each vertex of $\pom(T)$ belongs to exactly one of these subtrees, and each subtree $T^i_{v_i}$ creates exactly one vertex in $\pom(T^{i+1})$, and so will contribute to at most one of the $\bom(T^j_{v_j})$ for $j>i$. Hence, $\sum_i \bom(T^i_{v_i}) \le \bom(T) + c =\O(\db(T)^2)$. 
Moreover observe that $\sum_i d_M(T^i_{v_i}) \le c d_M(T) = \O(\db(T)^3)$ so $\sum_i \db(T^i_{v_i})=\O(\db(T)^3)$.

We then obtain $|T|=\O(f_1(r) \db(T)^{6+\alpha})$.
\end{proof}

\begin{corollary}\label{lm:boundSzT}
    Given an instance $(G,k, M, \mH)$ where none of Rules \ruleref{rl:deg1}, \ruleref{rl:deg2}, and \ruleref{rl:same_neighborhood_prime} applies, and given a subtree $T$ of $G-M$ with $|T|> p_1(r,\db(T))$, Rule~\ruleref{rl:neighborhood} can be applied in polynomial time on this tree.
\end{corollary}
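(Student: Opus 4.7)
The plan is to derive this corollary as essentially the contrapositive of \autoref{lem:subtree-size}, combined with a simple polynomial-time search for the witness needed to apply Rule \ruleref{rl:neighborhood}.

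\textbf{Existence of the witness.} Under the hypotheses of the corollary, Rules \ruleref{rl:deg1}, \ruleref{rl:deg2}, and \ruleref{rl:same_neighborhood_prime} do not apply to $(G,k,M,\mH)$, so \autoref{lem:subtree-size} tells us that any subtree $T'$ of $G-M$ on which Rule \ruleref{rl:neighborhood} does not apply satisfies $|T'|\le p_1(r,\db(T'))$. Applying this to our given $T$ together with the hypothesis $|T|>p_1(r,\db(T))$ yields by contraposition that Rule \ruleref{rl:neighborhood} must apply to $T$. Thus there exist $u\in N_M(T)$ and $d_M(T)+\bom(T)$ vertex-disjoint paths in $T$ of length at least $1$ whose endpoints all lie in $N_T(u)$.

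\textbf{Polynomial-time search.} Since Rule \ruleref{rl:neighborhood} only needs the vertex $u$ to produce its output (the paths themselves are used in the safeness proof but are not output), it suffices to identify such a $u$. We iterate over each candidate $u\in N_M(T)\subseteq M$, which gives at most $|M|$ choices. For a fixed $u$, let $S_u=N_T(u)$; we have to check whether $T$ contains at least $\ell:=d_M(T)+\bom(T)$ vertex-disjoint paths of length $\ge 1$ with both endpoints in $S_u$. This is a standard path-packing problem on a tree that is solvable by a bottom-up dynamic programming: root $T$ arbitrarily, and for each vertex $v$ compute (i) the maximum number of completed paths that can be formed using only vertices of the subtree rooted at $v$, and (ii) whether a yet-unmatched endpoint from $S_u$ is dangling at $v$, available to be extended to the parent. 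At each internal vertex, one pairs up dangling endpoints coming from different children (two children's dangling paths combine through $v$ into one completed path), possibly incorporating $v$ itself when $v\in S_u$; standard case analysis yields an $O(|T|)$-time DP. The overall running time is therefore polynomial in $|V(G)|$.

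\textbf{Main obstacle.} The mathematical content is essentially already contained in \autoref{lem:subtree-size}, so the nontrivial part is the algorithmic one: verifying that the witness $u$ can be located in polynomial time. The only delicate point is setting up the tree DP correctly so that both dangling endpoints and completed paths are tracked simultaneously, and so that the decision of whether $v$ contributes as an endpoint (when $v\in S_u$), as an interior vertex that merges two dangling paths, or as a simple continuation, is handled without double-counting. Once this is in place, the corollary follows immediately by combining existence (via \autoref{lem:subtree-size}) with the polynomial-time procedure above.
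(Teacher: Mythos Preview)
Your existence argument via the contrapositive of \autoref{lem:subtree-size} matches the paper exactly. The difference is in the algorithmic part. The paper does not set up a separate path-packing DP; instead it observes that the greedy decomposition of $T$ into sharp subtrees built in the proof of \autoref{lem:subtree-size} is itself constructive and polynomial. Each sharp subtree yields a path of length $\ge 1$ with both endpoints adjacent to some $u\in N_M(T)$, and the hypothesis $|T|>p_1(r,\db(T))$ forces the number $c$ of sharp subtrees produced to exceed $d_M(T)(d_M(T)+\bom(T))$ (this bound on $c$ is the \emph{only} place where non-applicability of \ruleref{rl:neighborhood} was used in that proof, so its failure is exactly what the size hypothesis detects). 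A pigeonhole over the $d_M(T)$ candidates for $u$ then directly produces both the vertex $u$ and the required disjoint paths.

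Your route---trying each $u\in N_M(T)$ and solving a terminal-to-terminal vertex-disjoint path-packing problem on the tree---is also correct and has the merit of being independent of the internals of \autoref{lem:subtree-size}, at the cost of introducing a separate subroutine. One small caution on your DP sketch: by vertex-disjointness at most one completed path can pass through any given $v$, so at an internal vertex you combine at most \emph{one} pair of dangling children (or pass at most one dangling half-path upward); the remaining children must be taken in their non-dangling state. With that refinement the DP is indeed linear in $|T|$ and your argument goes through.
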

\begin{proof}
    The fact that Rule~\ruleref{rl:neighborhood} applies is the contrapositive of \autoref{lem:subtree-size}, and the $d_M(T)+\bom(T)$ paths needed to apply the rule can be found using the partition used in its proof, which can be computed in polynomial time.
\end{proof}

\subsection{Branching when there is a \texorpdfstring{$\KttTilde$}{K~tt}} \label{sec:ktttilde}

When reaching line~\ref{A:partition} of \autoref{algo:A}, we call the method \algopart which tries to build a special partition of $G-M$. As we will see later, one of the possible outputs of this procedure is a failure (called Case 2) where a certain dense structure called $\KttTilde$ is found (for $t=2\d(r)$).
In this section, we explain how we handle such a $\KttTilde$.
Informally, a $\KttTilde$ in an instance $(G,k,M \mH)$ is a special type of minor model of $K_{t,t}$ where the vertices of one side of the bipartition of $K_{t,t}$ correspond to single vertices of $M$ and the vertices of the other side of the bipartition correspond to small trees of $G-M$. It is formally defined as follows.

\begin{definition}
For an integer $t$, a $\KttTilde$ in 
an instance $(G,k,M,\mH)$ of \AFVS is a pair $\left(X, \{T_i\}_{1\leq i \leq t}\right)$ where
\begin{itemize}
    \item $X\subseteq M$ has size $t$, 
    \item  $\{T_i\}_{1\leq i \leq t}$ a family of $t$ disjoint non-adjacent subtrees of $G - M$ such that for all $1\leq i \leq t$, $X\subseteq N_M(T_i)$, and
    \item for every $i \in \intv{1}{t}$, $|T_i| \le p_2(r,t)$ with $p_2(r,t)=p_1(r,2t)$, where $p_1$ is the polynomial defined in \autoref{lem:subtree-size}.
\end{itemize}
\end{definition}

\begin{remark}\label{rmk:Ktttilde}
  Similarly to a $K_{t,t}$ subgraph, given $(X, \{T_i\}_{1\leq i \leq t})$ a $\KttTilde$, any subgraph of $G$ induced by two vertices of $X$ and two trees of $\{T_i\}_{1\leq i \leq t}$ contains a cycle. So a feedback vertex set of $G$ either contains at least $t-1$ vertices of $X$, or it intersects all but at most one of the trees of $\{T_i\}_{1\leq i \leq t}$.
\end{remark}
  
This new branching rule is a variation of Rule~\ruleref{rl:Krr} to deal with $\KttTilde$'s instead of $K_{t,t}$ subgraphs.
The difference between the treatment of $K_{t,t}$ and $\KttTilde$ is that, when we are handling the case where a solution must intersect each tree of $\{T_i\}_{1\leq i \leq t}$, we cannot branch in subexponential time to guess which vertex of each $T_i$ is picked, and rather perform the following ``virtual branching'' trick that generalizes that of \cite{lokSODA22} corresponding to the special case where $T_i$ are edges. In this branch, we add all these $T_i$'s in $M$, and add the vertex sets of each tree to $\mH$. The size of $\mH$ (and thus the growth of $M$) will be bounded by observing that since the sets in $\mH$ are disjoint, $|\mH|$ is a lower bound on the solution size. 

We are now ready to define our last branching rule:

\begin{enumerate}[label=(\BR{{\arabic*}})]
\setcounter{enumi}{1}
\item \label{rl:KrrTilde} Given $(G, k, M, \mH)$ an instance of \AFVS{} and $(X, \{T_i\}_i)$ a $\KttTilde$, we 
%consider the following cases:
branch on the following (at most $2t$) instances:
\begin{itemize}
    \item if $k \ge t-1$, then for each $v\in X$, denoting $X_{\overline{v}}=X\setminus \{v\}$, output the instance $((G-X_{\overline{v}}), k-(t-1), M\setminus X_{\overline{v}}, \mH-X_{\overline{v}})$.% and continue to the next case.
    \item if $k \geq |\mH|+t-1$, then for each $1\leq i \leq t$, denoting $R_i=\bigcup_{j\neq i}T_j$ and $\mT_{R_i}=\{T_j,~j\neq i\}$, output the instance $\left(G, k, M\cup R_i, \mH\cup \mT_{R_i}\right)$.
    \item if none of the above applies, output a trivial \textsc{No}-instance.
\end{itemize}
\end{enumerate}

\begin{lemma}
    Given $(G, k, M, \mH)$ an instance of \AFVS{} and $(X, \{T_i\}_i)$ a $\KttTilde$, the rule~\ruleref{rl:KrrTilde} produces a set of instances of \AFVS{} such that $(G, k, M, \mH)$ is a positive instance if and only if one of the output instances is positive. Moreover, the rule can be performed in polynomial time.
\end{lemma}
\begin{proof}
    The obtained graph is an induced subgraph of the original graph, so it belongs to $\mathcal{G}$ and is $K_{r,r}$-free. Also, recall that $\mH$ consists of disjoint subsets of $M$. In particular, in the second item of \ruleref{rl:KrrTilde} these sets are disjoint from those in $\mT_{R_i}$ as for every $i\in \intv{1}{t}$, $T_i$ is a subtree in $G-M$. Therefore $\mH\cup \mT_{R_i}$ is a collection of disjoint subsets of $M\cup R_i$. Hence the rule~\ruleref{rl:KrrTilde} indeed outputs instances of \AFVS.
    The running time bound is straightforward as it is not required to find the $\KttTilde$. 
    
    The fact that the input and output instances are equivalent comes from \autoref{rmk:Ktttilde}.
    More formally, suppose that $S$ is a solution to $(G,k,M,\mH)$. If $|S \cap X| \ge t-1$, 
    then there exists $v \in X$ such that $S \supseteq X_{\overline{v}}$, implying that 
    $((G-X_{\overline{v}}), k-(t-1), M\setminus X_{\overline{v}}, \mH-X_{\overline{v}})$ is a \textsc{Yes}-instance. Otherwise ($|S \cap X| \le t-2$), there exists $u,v$ two vertices in $X \setminus S$. As for any $i$, $G[T_i]$ is a tree and $X \subseteq N_M(T_i)$, there cannot be $i \neq j$ such that $S \cap T_i = S \cap T_j = \emptyset$, as otherwise $G[\{u,v\} \cup T_i \cup T_j]$ would contain a cycle not hit by $S$. This implies that there exists $i$ such that all trees of $\mT_{R_i}$ are hit by $S$.
    As $S$ must also hit any tree in $\mH$ (which are disjoint), it implies that $k \geq |\mH|+t-1$, and thus the rule generates in particular $\left(G, k, M\cup R_i, \mH\cup \mT_{R_i}\right)$, which is a \textsc{Yes}-instance.
    The reverse direction is immediate.
\end{proof}

\subsection{Attempting to build a \texorpdfstring{$t$}{t}-uniform partition of  \texorpdfstring{$G-M$}{G-M}}
\label{sssec:partition-algo}
In this section we define the routine \algopart that is called line~\ref{A:partition} when \ruleref{rl:deg1}, \ruleref{rl:deg2}, \ruleref{rl:big}, and \ruleref{rl:same_neighborhood_prime} do not apply.
The algorithm will
either fail (Case 1) and find a large enough subtree $T$ that allows to apply \ruleref{rl:neighborhood} of \autoref{sec:bigT},
fail (Case 2) and find a $\KttTilde$ that allows to apply \ruleref{rl:KrrTilde} of \autoref{sec:ktttilde},
or find a certain partition of $G-M$ that we define now.

\begin{definition}\label{def:unif}
Let $(G,k,M,\mH)$ be an instance of \AFVS and $t$ be a positive integer. Let $F \subseteq V(G-M)$ be a downward-closed subforest of $G-M$.
A \emph{$t$-uniform partition of $F$} is family $\mT$ of subtrees of $F$ such that:
    \begin{enumerate}
    \item \label{item:unif1} each vertex of $F$ is in exactly one tree of $\mT$,
    \item \label{item:unif2} for every $T\in \mT$ have $d_M(T)\le 2t$ and $|T|\leq p_1(r,\db(T))$ (where $p_1$ is the polynomial function defined in \autoref{lem:subtree-size}), and
    \item \label{item:unif3} if a root $r$ of a tree of $\mT^-$ has a parent $v$ in $G-M$, then $v$ is the root of a tree in $\mT^+$,
\end{enumerate}
where $\mT^-$ denote the subset of those $T\in \mT$ such that $d_M(T)<t$ and $\mT^+$ the others.
\end{definition}
\begin{remark}
The definition above implies that there are no edges between the trees of $\mT^-$.
\end{remark}
\begin{figure}
    \centering
    \includegraphics{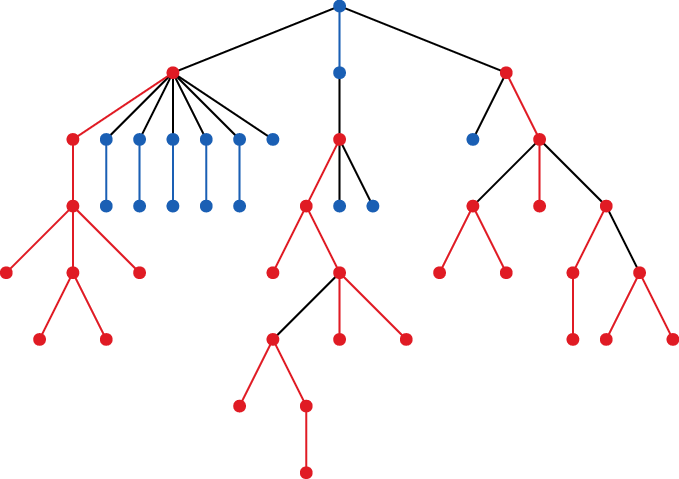}
    \caption{Representation of a $t$-uniform partition of a subtree $T$ of $G-M$. Here are represented only vertices of $F$ (and not vertices of $M$). Vertices of the trees in $\mT^+$ (respectively $\mT^-$) are represented in red (respectively blue), so as the edges between two vertices in the same tree of $\mT^+$ (respectively $\mT^-$). Edges between distinct trees of $\mT$ are represented in black. }
    \label{fig:uniform}
\end{figure}

One case of failure of \algopart when trying to build a $t$-uniform partition is when $\mT^+$ becomes too large. We show in the next lemma that in this case we can find a $\KttTilde$.
\begin{lemma}\label{lm:bigreduce}
   Consider an instance $(G, k, M, \mH)$ where none of Rules \ruleref{rl:deg1}, \ruleref{rl:deg2}, and \ruleref{rl:same_neighborhood_prime} applies,
 a downward-closed subforest $F \subseteq V(G-M)$ and a $t$-uniform partition $\mT$ of $F$ with $|\mT^+| \ge p_3(r,t)|M|$ for $p_3(r,t)=4tf_2(r,2t,p_2(r,t))$. 
 Then, we can find a $\KttTilde$ in polynomial time.
\end{lemma}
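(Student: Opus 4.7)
The plan is to pass from $\mT^+$ to a subfamily of at least $t$ pairwise non-adjacent subtrees that share a common $M$-neighborhood of size at least $t$; a desired $\KttTilde$ then falls out immediately. The proof proceeds in three steps: non-adjacent thinning, an application of item (2) of the bounded tree neighborhood complexity definition, and a pigeonhole.

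\textbf{Step 1 (non-adjacent thinning).} Since $\mT$ is a vertex partition of the forest $G-M$ into connected pieces, contracting each $T\in\mT$ produces a quotient graph which is itself a forest (any cycle in the quotient would lift to a cycle in $G-M$ by joining the edges across contracted pieces with arbitrary internal paths inside each $T$). In particular this quotient is bipartite, and so is its subgraph induced by $\mT^+$; the larger of the two color classes yields a subfamily $\mT^{+'}\subseteq\mT^+$ of pairwise non-adjacent trees with
\[
|\mT^{+'}| \ge |\mT^+|/2 = 2t\,f_2(r,2t,p_2(r,t))\,|M|.
\]
This step is polynomial: a BFS on the quotient suffices.

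\textbf{Step 2 (applying bounded tree neighborhood complexity).} Every $T\in\mT^+$ satisfies $d_M(T)\le 2t$ by the second condition of \autoref{def:unif}, and by the same condition together with $p_2(r,t)=p_1(r,2t)$ we have $|T|\le p_2(r,t)$. Applying item~(2) of \autoref{def:NCtrees} with $A=M$, the disjoint non-adjacent tree-inducing family $\mT^{+'}$, and parameters $p=2t$ and $m=p_2(r,t)$ gives
\[
|\{N_M(T):T\in\mT^{+'}\}| \le f_2(r,2t,p_2(r,t))\,|M|.
\]

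\textbf{Step 3 (pigeonhole and extraction).} Combining the two inequalities, the pigeonhole principle produces $Y\subseteq M$ that is the $M$-neighborhood of at least $2t\ge t$ trees of $\mT^{+'}$; pick any $t$ of them as $T_1,\dots,T_t$. Since $T_i\in\mT^+$ we have $|Y|=d_M(T_i)\ge t$, so we may choose $X\subseteq Y$ with $|X|=t$. Then $(X,(T_i)_i)$ verifies each item of the $\KttTilde$ definition: $X\subseteq M$ has size $t$; the $T_i$ are disjoint (as parts of a partition) and non-adjacent (inherited from Step~1) subtrees of $G-M$ with $X\subseteq Y=N_M(T_i)$; and $|T_i|\le p_2(r,t)$ by Step~2. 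Grouping the trees by their $M$-neighborhood is polynomial, so the entire procedure runs in polynomial time.

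I do not expect a serious obstacle here: the argument is a direct combinatorial calculation from the hypothesis. The main thing to verify is the bookkeeping on constants, in particular that the factor $4t$ in $p_3(r,t)$ is chosen precisely to absorb the factor $2$ lost in the bipartite thinning of Step~1 together with the factor $2t$ needed by the pigeonhole of Step~3 to exhibit enough trees sharing a common $M$-neighborhood.
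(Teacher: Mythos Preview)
There is a genuine gap in Step~2. The second condition of \autoref{def:unif} does \emph{not} say that $|T|\le p_1(r,2t)$; it says $|T|\le p_1(r,\db(T))$ where $\db(T)=\max(d_M(T),\bom(T))$. The definition bounds $d_M(T)\le 2t$ but places no bound whatsoever on $\bom(T)$. A tree $T\in\mT^{+'}$, while non-adjacent to the other trees of $\mT^{+'}$, may still be adjacent to arbitrarily many trees of $\mT^+\setminus\mT^{+'}$ (and to $\mT^-$ trees), so $\bom(T)$---and hence $|T|$---can be arbitrarily large. You therefore cannot apply item~(2) of bounded tree neighborhood complexity with $m=p_2(r,t)$, nor can you certify the size condition $|T_i|\le p_2(r,t)$ required by the definition of $\KttTilde$.

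The paper fixes exactly this point by an additional thinning step: before taking a bipartition, it first restricts to those $T\in\mT^+$ whose degree in the quotient forest $F_{\mT^+}$ is at most~$2$ (losing a factor~$2$, since leaves outnumber high-degree vertices in a forest). For such a $T$ there are at most two adjacent $\mT^+$-trees, and by item~(3) of \autoref{def:unif} only the root of $T$ can be adjacent to a $\mT^-$-tree, giving $\bom(T)\le 3$ and hence $\db(T)\le 2t$. After this the bipartition costs another factor~$2$, and the pigeonhole is applied with $x=t$ (not $2t$); this is where the factor $4t$ in $p_3(r,t)$ actually comes from. Your bookkeeping remark at the end misidentifies the source of the~$4$.
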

\begin{proof}
    Consider $F_{\mT^+}$ the forest obtained from $G-M$ by contracting each subtree in $\mT^+$ to a single vertex, and removing the vertices of the trees in $\mT^-$. Denoting $V_s$ the set of vertices of $F_{\mT^+}$ with degree at most $2$ in $F_{\mT^+}$. We have $|V_s|\geq |\mT^+|/2$ as the number of leaves in a tree is an upper bound on the number of vertices of degree at least $3$.
    Then using a 2-coloration of the forest we partition the vertices of $F_{\mT^+}$ in two independent sets, $V_1$ and $V_2$. 
    Without loss of generality we suppose that $|V_s\cap V_1|\geq |V_s\cap V_2|$. So denoting $\mT'$ the set of trees in $\mT^+$ associated to the vertices in $V_s\cap V_1$, we have $|\mT'|\geq |V_s|/2\geq |\mT^+|/4$. Now for $T\in \mT'$ we want to bound the size of its border. 
    Firstly,  for the trees in $\mT^+$ adjacent to $T$, there are at most two of them by construction of $\mT'$.
    Secondly,  by \autoref{item:unif3}  of \autoref{def:unif} the only vertex in $T$ which can be adjacent to a tree $T'\in \mT^-$ is the root of $T$. 
    So overall, the border of $T$ contains at most $3$ vertices, and so $\bom(T)\leq 3$. Moreover, as $V_1$ is an independent set, there is no edge between two trees of $\mT'$. By \autoref{def:unif}, any tree of $T\in \mT'$ has size at most $p_1(r,\db(T))$, but we have $\db(T)=\max(d_M(T), \bom(T))=d_M(T)\leq 2t$ so $|T|\leq p_1(r,2t)=p_2(r,t)$. Let us now apply the second result of \autoref{cor:neightrivial} with $x=t$, $A=M$, $p=2t$ and $m=p_2(r,t)$.
    As $|\mT'|\geq |\mT^+|/4 \ge tf_2(r,2t,p_2(r,t))|M|$, there exists a subset $X\subseteq M$ and a subset $\mT'_X\subseteq\mT'$ such that $|\mT'_X|=t$, and for all $T\in \mT'_X$, we have $N_M(T)=X$. Observe that as $\mT'\subseteq \mT^+$, we have $|X|\geq t$, so we can take  any subset $X_0$ of $X$ such that $|X_0|=t$ and then $(X_0, \mT'_X)$ is a $\KttTilde$, as wanted.
    
    Regarding the running time, note that the set $\mT'$ can be computed in polynomial time, and finding the subset $X$ and $\mT'_X$ whose existence is proved by \autoref{cor:neightrivial} can again be done in polynomial time by listing the neighborhoods in $M$ of the trees in $\mT'$. 
\end{proof}

We are now finally ready to describe the aforementioned \algopart{}, which is done with the following lemma.
\begin{lemma}[Partitionning lemma]\label{lem:construct-red-trees}
Let $t$ be an integer. There exists a polynomial time algorithm that,
given an instance $(G, k, M, \mH)$ of \AFVS{} where none of Rules \ruleref{rl:deg1}, \ruleref{rl:deg2}, \ruleref{rl:same_neighborhood_prime} applies and such that $d_M(v)< t$ for every $v\in V(G-M)$, returns one of the following:
\begin{enumerate}[label=Case~\arabic*]
    \item \label{part:case1} a subtree $T$ of the forest $G-M$ with $|T|>p_1(r,\db(T))$
    %such that Rule~\ruleref{rl:neighborhood} applies,
    \item \label{part:case2} a $\KttTilde$, or %such that Rule~\ruleref{rl:KrrTilde} applies, or
    \item \label{part:case3} a $t$-uniform partition $\mT$ of the forest $G-M$ with $|\mT^+|<p_3(r,s,t)|M|$, where $p_3$ is defined in \autoref{lm:bigreduce}.
\end{enumerate}
\end{lemma}

Observe that in \ref{part:case1} and \ref{part:case2} above, the rules \ruleref{rl:neighborhood} and \ruleref{rl:KrrTilde} respectively apply, which will allow us to make progress even if we cannot find a partition.

\begin{proof}
In this proof we set $p=p_3(r,s,t)|M|$.
Let $h$ be the maximum depth of a leaf in $G-M$ and let $L_0$ be the set of roots of the connected components of $G-M$. For every $i\in \intv{1}{h}$, we define $L_i$ as the set of vertices of depth $i$ in the trees of the forest $G-M$ and $F_i$ as the forest induced by the set of vertices with depth at least $i$, i.e., $F_i=G\left [\bigcup_{j=i}^h L_j \right ]$. Observe that $F_0=G-M$.

The algorithm we describe here considers vertices by decreasing depth. First it construct a (trivial) partition of the forest $F_h$ induced by the vertices of maximum depth.
Observe that two vertices of the same depth cannot share an edge, so $F_h$ an edgeless graph. In order to construct the partition we then have no choice but to take each connected component (containing only one vertex) as a tree in $\mT$, and more precisely in $\mT_h^-$ as for $v\in V(F_h)$ we have $d_M(v) < t$ by hypothesis, and so $\mT_h^+ = \emptyset$.
The collection of trees $\mT_h$ is then a $t$-uniform partition as there is no edge between two trees of $\mT_h$, and each tree have degree in $M$ at most $2t$ as wanted (even $t$, as noted above).

The second stage of the algorithm consists in repeatedly applying a routine to try to extend a $t$-uniform partition of the vertices at depth more than $i$ to a $t$-uniform partition of the vertices at depth more than $i-1$, for decreasing $i$. Possibly this extension will fail, in which case we show that we can return a large subtree of $G-M$ or a $\KttTilde$. If the successive extensions succeed up to partitioning $V(G-M)$, we show that the result is a $t$-uniform partition. The routine is the following.

\begin{enumerate}[label=(\arabic*)]
    \item \label{l:input} Input: a $t$-uniform partition $\mT_i$ of $F_i$ such that $|\mT_i^+| < p$.
    \item \label{l:output} Output: a $t$-uniform partition $\mT_{i-1}$ of $F_{i-1}$ with $|\mT_{i-1}^+| < p$, or a $\KttTilde$, or a subtree of $G-M$ with more than $p_1(r,\db(T))$ vertices. The fact that the routine described here indeed corresponds to this specification is given in \autoref{claim:part}.
    \item Initialization: we start with $\mT_{i-1}^+ = \mT_{i}^+$ and $\mT_{i-1}^- = \mT_{i}^-$. The collection $\mT_{i-1}$ is always defined as the union $\mT_{i-1}^- \cup \mT_{i-1}^+$.
    \item \label{e:for} For every $v\in L_{i-1}$:
    \begin{enumerate}[label=(\ref*{e:for}.{\arabic*})]
        \item Let $v_1,\dots, v_l$ denote the children of $v$, which all belong to $L_{i}$. Observe that by definition of a $t$-uniform partition of $F_{i}$, each such vertex $v_j$ is the root of some tree $T(v_j)$ in $\mT_i$. Informally, the goal will be to try to group some of the trees of the form $T(v_j) \in \mT_i^-$ to construct a tree to add in $\mT_{i-1}^+$.
    \item Let $Y^-(v) = \{j \mid T(v_j) \in \mT_i^-\}$.
    \item If $d_M({v}\cup \bigcup_{j \in Y^-(v)}T(v_j)) < t$, 
set $X(v)= Y^-(v)$. Otherwise, define $X(v)$ as an inclusion-wise minimal subset of $Y^-(v)$ such that $d_M\left(\{v\}\cup \bigcup_{j \in X(v)}T(v_j)\right) \ge t$.
    \item Let 
$T(v) = \{v\}\cup \bigcup_{j \in X(v)}T(v_j)$.
    \item \label{l:tree} If $|T(v)| > p_1(r,\db(T))$ output the tree $T(v)$ and stop.
    %In the following, let us assume that $|T(v)| \le p_1(r,\db(T))$.
    %
    \item If $d_M(T(v)) < t$, remove from $\mT_{i-1}^-$ all the $T(v_j)$ for $j \in X(v)$, and add $T(v)$ to $\mT_{i-1}^-$. Vertex $v$ has been dealt with: continue in line~\ref{e:for} to the next choice of a vertex in $L_{i-1}$.
    \item Otherwise ($d_M(T(v)) \ge  t$), remove from $\mT_{i-1}^-$ all the $T(v_j)$ for $j \in X(v)$, and add $T(v)$ to $\mT_{i-1}^+$.
    \item If $|\mT_{i-1}^+|<p$, continue in line~\ref{e:for} to the next choice of vertex of $L_{i-1}$.
    \item \label{l:ktt} Otherwise, $|\mT_{i-1}^+|=p$. In this case we apply \autoref{lm:bigreduce} to the instance $(G, k, M, \mH)$ with partition $\mT_{i-1}$ of the forest $F=\bigcup_{T \in \mT_{i-1}} T$.
    As we will show, $\mT_{i-1}$ is a $t$-uniform partition of $F$ so this call results in a $\KttTilde$. Output the obtained $\KttTilde$ and stop.
      \end{enumerate}
    \item \label{l:part} Return $\mT_{i-1}$.
\end{enumerate}

The correctness of the above routine is given by the following claim.

\begin{claim}\label{claim:part}
The routine above is correct, i.e., given an input as described in line~\ref{l:input} above, the routine returns an output as specified in line~\ref{l:output}.
Moreover, the computation can be done in polynomial time.
\end{claim}

\begin{proof}
Let us consider as input $\mT_i$ a $t$-uniform partition of $F_i$ with $i\geq 1$ such that $|\mT_i^+| < p$.
We consider an execution of the routine on an input as described in line~\ref{l:input} and that considers vertices $v \in L_{i-1}$ in an arbitrary order.

There are 3 places where the routine returns a value.
If this happens at line~\ref{l:tree}, the returned value is clearly a subtree of $G-M$ with more than $p_1(r,\db(T))$ vertices. So in the following we may assume that we are not in this case and so for every vertex $v\in L_{i-1}$ considered so far, $|T(v)| \le p_1(r,\db(T))$.

If we return at line~\ref{l:ktt} then the output is a $\KttTilde$ according to \autoref{lm:bigreduce}, supposing we gave an appropriate input to the algorithm of that lemma.
By construction $F$ is downwards-closed.
To show that $\mT_{i-1}$ is a $t$-uniform partition of $F$, let us consider the 3 items in \autoref{def:unif}.
\autoref{item:unif1} holds by construction. For every tree of $\mT_{i-1}$ of the form $T(v)$, either $v\in V(F_i)$ and so the bounds of \autoref{item:unif2} hold since $\mT_i$ is a uniform $t$-partition, or $v\in L_{i-1}$ and as noted above we assume $|T(v)| \le p_1(r,\db(T))$. It remains to show that for such a vertex $v$ we have $d_M(T(v)) \leq 2t$. To do so, notice that when $d_M(T(v)) \ge t$, 
then $X(v) \neq \emptyset$ as otherwise we would get $d_M(T(v))=d_M(v) < t$, a contradiction. We can thus consider any $j \in X(v)$, and observe that $d_M(T(v)) \le d_M\left(\{v\}\cup \bigcup_{k \in X(v)\setminus \{j\}} T(v_k)\right) + d_M\left(T(v_j)\right) < 2t$ by minimality of $X(v)$, and as $T(v_j) \in \mT_{i-1}^-$.
To show \autoref{item:unif3}, we will use the following invariant.
\begin{itemize}
    \item At any time during the execution of the routine, if a root $r$ of a tree in $\mT_{i-1}^-$ has a parent $u$ in $G-M$, then either $r\in L_{i-1}$ and $u  \in L_{i-2}$, or $r\in L_{i}$, $u \in L_{i-1}$, and $u$ either has not been considered so far or is the root of a tree in $\mT_{i-1}^+$.
\end{itemize}
The proof of the invariant is the following. Let $r$ be the root of a tree $T \in \mT_{i-1}^-$ which has a parent $u$ in $G-M$. If $r \in L_{j}$ for $j >i$, by hypothesis $\mT_i$ is a $t$-uniform partition of $F_i$, then $u$ is the root of a tree $T \in \mT_i^+$, and $T \in \mT_{i-1}^+$ as we never remove trees from $\mT_i^+$.
If $r \in L_{i-1}$, then $u \in L_{i-2}$ and we are done.
Finally, assume that $r \in L_{i}$. If $u$ has not been considered so far then we are done. Otherwise, as $u$ has been considered and $r$ is still the root of a tree $T \in \mT_{i-1}^-$, it implies that $X(u)\neq Y^-(u)$, and thus that $T(u)$ was added to $\mT_{i-1}^+$, satisfying the required condition. This concludes the proof of the invariant.

We can now show that \autoref{item:unif3} of \autoref{def:unif} holds for $\mT_{i-1}$. By the first item of the invariant, if a root $r$ of a tree in $\mT_{i-1}^-$ has a parent $u$ in $G-M$, then $u  \in L_{i-2}$ (and thus does not belong to $F$), or $u \in L_{i-1}$ and $u$ has not been considered so far (so $u$ does not belong to $F$), or $u$ is the root of a tree in  $\mT_{i-1}^+$. This shows
that $\mT_{i-1}$ is indeed a $t$-uniform partition of $F$ hence the call to the algorithm of \autoref{lm:bigreduce} is valid. This concludes the proof that in line~\ref{l:ktt} the routine returns a $\KttTilde$.

It remains to show that in line~\ref{l:part}, $\mT_{i-1}$ is a $t$-uniform partition of $F_{i-1}$ with $|\mT_{i-1}^+| < p$.
If this line is reached, then in particular the routine did not stop in line~\ref{l:ktt} and so we have $|\mT_{i-1}^+| < p$.
Above we proved that after each iteration of the loop of line~\autoref{e:for}, $\mT_{i-1}$ is a $t$-uniform partition of the forest induced by $V(F_i)$ and the vertices of $L_{i-1}$ considered so far. In line~\ref{l:part} all the vertices of $L_{i-1}$ have been considered, so $\mT_{i-1}$ is a $t$-uniform partition of $F_{i-1}$, as desired.
\cqed
\end{proof}

Now that  \autoref{claim:part} is proved, \autoref{lem:construct-red-trees} directly holds by induction on $i$, using $\mT_h$ for the base case.
\end{proof}

\begin{remark}
Observe that we call \algopart of \autoref{lem:construct-red-trees} at line~\ref{A:partition} of \autoref{algo:A}, all required conditions on the input are fulfilled, as in particular as \ruleref{rl:big} does not apply, we get $d_M(v) \le 2\d(r) \le t$ for any $v \in V(G-M)$.
\end{remark}

\subsection{Final step when \texorpdfstring{$G-M$}{G-M} admits a \texorpdfstring{$t$}{t}-uniform partition with a small number of large-degree parts}\label{ssec:step4}

In this section, we consider line~\ref{A:partOK} of \autoref{algo:A} where we found a $t$-uniform partition $\mT$ as required. To bound $|V(G)|$, the algorithm adds some sets $Z_1(\mT)$ and $Z_2(\mT)$ (as defined below) to $M$, which result in a slightly larger set $\tilde{M}$. It then tries to apply Rule~\ruleref{rl:same_neighborhood_prime} or Rule~\ruleref{rl:neighborhood} on $(G,k,\tilde{M},\mH)$. If none of these rules can be applied, we can prove that $|V(G-M)|$ is small.

\begin{definition}\label{def:Z}
    Given a $t$-uniform partition $\mT$ of $G-M$, we define the set $Z_1(\mT)$ as the roots of the trees in $\mT^+$, and $Z_2(\mT)$ as the set of vertices $v$ of $G - M$ having three edge-disjoint paths $P_1,P_2,P_3$ in $G- M$ linking them to vertices of $Z_1(\mT)$.
\end{definition}

\begin{lemma}\label{lm:special}
    Given an instance $(G, k, M, \mH)$, an integer $t$, and $\mT$ a $t$-uniform partition of $G-M$, we have $|Z_1(\mT)|+|Z_2(\mT)|\leq 2|\mT^+|$, and by denoting $\tilde{M}=M\cup Z_1(\mT)\cup Z_2(\mT)$, we have the following properties:
    \begin{itemize}
        \item Rule~\ruleref{rl:deg1} and Rule~\ruleref{rl:deg2} do not apply on $(G,k,\tilde{M},\mH)$.
        \item A connected component $T$ of $G-\tilde{M}$ satisfies $d_{\tilde{M}}(T)\leq 2t+2$.
    \end{itemize}
\end{lemma}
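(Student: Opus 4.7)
The plan is to establish the three claims in turn, with the main technical work in the last one.

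For the cardinality bound $|Z_1(\mT)|+|Z_2(\mT)|\leq 2|\mT^+|$, note that $|Z_1(\mT)|=|\mT^+|$ is immediate from the definition, so it suffices to prove $|Z_2(\mT)|\leq |\mT^+|$. Since $Z_1,Z_2\subseteq V(G-M)$, the argument is made inside each component $C$ of the forest $G-M$. A vertex of $Z_2\cap C$ has three edge-disjoint paths to $Z_1\cap C$ inside the tree $C$, so it is a vertex of degree at least $3$ in the Steiner tree of $Z_1\cap C$ in $C$. Using the standard bound that a tree with $\ell$ leaves has at most $\ell-2$ branching vertices, together with the fact that the leaves of such a Steiner tree lie in $Z_1\cap C$, we obtain $|Z_2\cap C|\leq |Z_1\cap C|$, and summing over components yields the claim.

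For \ruleref{rl:deg1} and \ruleref{rl:deg2} not applying on $(G,k,\tilde M,\mH)$, the argument is essentially monotonicity: since $M\subseteq \tilde M$ we have $V(G)\setminus \tilde M\subseteq V(G)\setminus M$, and $d_M(\cdot)\leq d_{\tilde M}(\cdot)$. Any witness vertex for \ruleref{rl:deg1} or path for \ruleref{rl:deg2} on $(G,k,\tilde M,\mH)$ would therefore also witness these rules on $(G,k,M,\mH)$, contradicting the fact (contextual to the call at Line~\ref{A:partOK} of \autoref{algo:A}) that these rules have already been checked not to apply on the input.

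For the degree bound $d_{\tilde M}(T)\leq 2t+2$, I will first show that $T$ is contained in a single tree $T'\in \mT$. Any edge of $G-M$ between two different trees of $\mT$ is a parent-child edge between a vertex $u$ of some tree of $\mT$ and the root $r$ of another. If $r$'s tree lies in $\mT^+$, then $r\in Z_1$; if it lies in $\mT^-$, then by the third property of a $t$-uniform partition $u$ is itself the root of an $\mT^+$-tree, hence $u\in Z_1$. Either way, one endpoint is in $Z_1\subseteq \tilde M$, so $G-\tilde M$ contains no inter-tree edges. Hence $T\subseteq T'$, which already gives $d_M(T)\leq d_M(T')\leq 2t$ by the uniform partition property.

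It remains to prove $d_{Z_1\cup Z_2}(T)\leq 2$, which is the main obstacle. All neighbors of $T$ in $Z_1\cup Z_2$ lie in the component $C$ of $G-M$ containing $T$, and $T$ is a subtree of $C$. Assume for contradiction there are three such neighbors $z_1,z_2,z_3$; each has a unique neighbor $v_i\in T$ in the tree $C$, and their median $w$ in $C$ lies in $T$. I will exhibit three pairwise edge-disjoint paths in $G-M$ from $w$ to $Z_1$, contradicting $w\notin Z_2$. For each $i$, start with the $w\to v_i$ path inside $T$ followed by the edge $v_iz_i$. If $z_i\in Z_1$ the $i$-th path is complete; if $z_i\in Z_2$, extend it using one of the three edge-disjoint paths from $z_i$ to $Z_1$ guaranteed by the definition, choosing one avoiding the edge $v_iz_i$, which is possible since three such paths exist. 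The three resulting paths branch at $w$ into three different subtrees of $C$, so the tree structure of $C$ ensures they remain pairwise edge-disjoint; the subtlety requiring care is precisely this chaining when multiple $z_i$ belong to $Z_2$.
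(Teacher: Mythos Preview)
Your proof is correct and follows essentially the same approach as the paper's. For the cardinality bound you use a Steiner-tree formulation while the paper builds an auxiliary contracted forest $G_Z$, but these are equivalent arguments; the monotonicity argument for \ruleref{rl:deg1}/\ruleref{rl:deg2} and the median argument for the degree bound are the same in both. If anything, you are more careful than the paper on the extension step when some $z_i\in Z_2\setminus Z_1$: you explicitly pick a continuation $Q_i$ avoiding the edge $v_iz_i$ and argue via the tree structure that the three full paths stay in distinct branches of $C$ at $w$, whereas the paper simply asserts that the median ``should be in $Z_2$'' without spelling out this chaining.
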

\begin{proof}
    We denote $Z_1=Z_1(\mT)$ and $Z_2=Z_2(\mT)$. The bound $|Z_1|\leq |\mT^+|$ is trivial.
    To bound $|Z_2|$, let us define the following
    minor $G_Z$ of $G - M$. The graph $G_Z$ is obtained from $G - M$ by iteratively deleting leaves $\ell$ if $\ell\notin Z_1\cup Z_2$, or contracting every edge $zx$ with $z\in Z_1\cup Z_2$ and $x\notin Z_1\cup Z_2$. Clearly $G_Z$ is a forest with vertex set $Z_1\cup Z_2$. Furthermore, for every $v\in Z_2$, the three edge-disjoint paths linking $v$ to $Z_1$ are actually vertex-disjoint and end up on different vertices of $Z_1$, hence $d_{G_Z}(v)\ge 3$. 
    As the number of vertices of degree at least three in a tree is bounded by the number of leaves, we get $|Z_2|\leq |Z_1| \le |\mT^+|$.
    
    The non applicability of Rule~\ruleref{rl:deg1} and Rule~\ruleref{rl:deg2} on $(G,k,\tilde{M},\mH)$ immediately follows from the non applicability of Rule~\ruleref{rl:deg1} and Rule~\ruleref{rl:deg2} on $(G,k,M,\mH)$.
    
    It then remains to bound $d_{\tilde{M}}(T)$ for $T$ a connected component of $G-\tilde{M}$.  
    Let $T$ be a connected component of $G-\tilde{M}$. We have $T \subseteq T'$ for some $T' \in \mT$, as each edge between trees of $\mT$ contains a vertex from $Z_1(\mT)\subseteq \tilde{M}$, and as $\mT$ is a $t$-uniform partition, we get $d_M(T) \le d_M(T') \le 2t$. Moreover, notice that $d_{\tilde{M}}(T) \le d_M(T)+|\{vz \in E(G)\ \mid\ v \in T \text{ and } z\in Z_1\cup Z_2\}|$.
    Thus, assume by contradiction that $d_{\tilde{M}}(T) \ge 2t+3$, and thus there are three (non-necessarily distinct) vertices $v_1,v_2,v_3$ in $T$ and three distinct vertices $z_1, z_2, z_3$ in $Z_1 \cup Z_2$ such that $v_1z_1$, $v_2z_2$, and $v_3z_3$ are edges. Then $T$ contains a vertex $v$ linked by 3 edge-disjoint paths to $Z_1\cup Z_2$. As such a path ending in $Z_2$ could be prolonged towards $Z_1$ while staying edge disjoint from the others, such a vertex $v$ should belong to $Z_2$, a contradiction.
\end{proof}

\begin{lemma}\label{lm:finalR4R5}
    Given an instance $(G, k, \tilde{M}, \mH)$ such that Rules \ruleref{rl:deg1}, \ruleref{rl:deg2}, and \ruleref{rl:same_neighborhood_prime} do not apply, and such that for any $T$ connected component of $G-\tilde{M}$, Rule~\ruleref{rl:neighborhood} does not apply on $T$ and $d_{\tilde{M}}(T)\leq 2t+2$. Then, $|G-\tilde{M}| \le p_4(r,t)|\tilde{M}|$ with $p_4(r,t)=(2t+4)f_2(r,2t+2,p_1(r,2t+2))$.
\end{lemma}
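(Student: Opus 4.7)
The plan is to bound both the number of connected components of $G-\tilde{M}$ and the size of each such component, and multiply. Throughout, the key point is that all the kernelization rules listed in the hypothesis are taken relative to the instance $(G,k,\tilde{M},\mH)$ (with $\tilde{M}$ playing the role of $M$ in the rule statements); this is legitimate since $\tilde{M}\supseteq M$ is itself a feedback vertex set.

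First I would bound the size of each connected component $T$ of $G-\tilde{M}$. Since $\tilde{M}$ is a feedback vertex set, $T$ is a tree, and since the only neighbors of vertices of $T$ outside $T$ lie in $\tilde{M}$, we have $\pom(T)=\emptyset$, hence $\bom(T)=0$ and $\db(T)=d_{\tilde{M}}(T)\le 2t+2$. Because none of Rules \ruleref{rl:deg1}, \ruleref{rl:deg2}, \ruleref{rl:same_neighborhood_prime} applies to $(G,k,\tilde{M},\mH)$ and Rule \ruleref{rl:neighborhood} does not apply on $T$, \autoref{lem:subtree-size} immediately yields $|T|\le p_1(r,2t+2)$.

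Next I would bound the number $N$ of connected components of $G-\tilde{M}$ by combining the bounded tree neighborhood complexity with the kernelization rules. These components form a family of pairwise disjoint, non-adjacent subtrees of $G-\tilde{M}$, each with $d_{\tilde{M}}(\cdot)\le 2t+2$ and size at most $p_1(r,2t+2)$. Apply the second bound of \autoref{cor:neightrivial} with $A=\tilde{M}$, $p=2t+2$, $m=p_1(r,2t+2)$, and $x=2t+4$: if $N\ge(2t+4)f_2(r,2t+2,p_1(r,2t+2))|\tilde{M}|=p_4(r,t)|\tilde{M}|$, then at least $2t+4$ components share a common neighborhood $X\subseteq\tilde{M}$. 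I split on $X$: if $X=\emptyset$, some component would have no edge to $\tilde{M}$, and any leaf of that tree (or its unique vertex when $|T|=1$) has $d_G(\cdot)\le 1$, contradicting the non-applicability of Rule \ruleref{rl:deg1}. Otherwise $1\le|X|\le 2t+2$, so we have at least $|X|+2$ connected components of $G-\tilde{M}$ sharing the nonempty neighborhood $X$, which is exactly the ``consists only in connected components of $G-M$'' alternative of Rule \ruleref{rl:same_neighborhood_prime}, contradicting its non-applicability. Either way we reach a contradiction, so $N<p_4(r,t)|\tilde{M}|$.

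Combining the two steps gives $|G-\tilde{M}|\le N\cdot p_1(r,2t+2)$, which is polynomial in $r$ and $t$ times $|\tilde{M}|$ (the per-component factor $p_1(r,2t+2)$ is absorbed into the polynomial $p_4$). The main conceptual obstacle is really the first step: one must notice that the border $\bom(T)$ of a connected component of $G-\tilde{M}$ is empty, so $\db(T)$ collapses to the controlled quantity $d_{\tilde{M}}(T)$ and the size bound from \autoref{lem:subtree-size} becomes available. Beyond that, the second step is a direct pigeonhole via bounded tree neighborhood complexity, with the ``$|X|=\emptyset$ vs.\ $|X|\ge 1$'' dichotomy being the only subtle point.
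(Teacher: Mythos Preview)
Your argument is correct and essentially identical to the paper's: bound each component's size via \autoref{lem:subtree-size} (equivalently \autoref{lm:boundSzT}) using $\bom(T)=0$, then bound the number of components via \autoref{cor:neightrivial} with the $X=\emptyset$/$X\neq\emptyset$ dichotomy triggering \ruleref{rl:deg1} or \ruleref{rl:same_neighborhood_prime}. Your remark that the per-component factor $p_1(r,2t+2)$ must be absorbed into $p_4$ is well spotted; the paper's own proof silently drops this factor in its final line, so the stated formula for $p_4$ is off by exactly that (harmless polynomial) multiplicand.
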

\begin{proof}
Let $\mC$ be the set of connected components of $G-\tilde{M}$. Let $T \in \mC$. As Rule~\ruleref{rl:deg1}, \ruleref{rl:deg2}, and \ruleref{rl:same_neighborhood_prime} do not apply, and Rule~\ruleref{rl:neighborhood} does not apply on $T$, by \autoref{lm:boundSzT} we get $|T| \le p_1(r,\db(T)) \le p_1(r,2t+2)$.  Now,  as Rule~\ruleref{rl:same_neighborhood_prime} does not apply in particular with $\mC$, by the second result of \autoref{cor:neightrivial} with $A=\tilde{M}$, $x=2t+4$, $p=2t+2$ and $m=p_1(r,2t+2)$, we get that if $|\mC| > xf_2(r,p,m)|\tilde{M}|$, then we would have $x$ connected components of $\mC$ having the same neighborhood $X$ in $\tilde{M}$.
Then we would have $X\neq \emptyset$ as otherwise Rule~\ruleref{rl:deg1} would apply. But then as $|X| \le p=2t+2$ and $x \ge |X|+2$, we could apply \ruleref{rl:same_neighborhood_prime}, a contradiction with the hypothesis.
Thus, we get $|G-\tilde{M}| \le (2t+4)f_2(r,2t+2,p_1(r,2t+2))|\tilde{M}|$.
\end{proof}

\begin{lemma}\label{lm:sizeGMinusM}
Let $(G,k,M,\mH)$ be an instance of \AFVS.
If on this input \autoref{algo:A} reaches Line~\ref{A:DP}, 
then $|G-M|=\O(p_3(r,t)p_4(r,t)|M|)$.
\end{lemma}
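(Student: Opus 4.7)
The plan is to combine the facts about the state of the instance when \autoref{algo:A} reaches Line \ref{A:DP} with \autoref{lm:special} and \autoref{lm:finalR4R5}, and simply sum the sizes of the pieces of the partition $V(G) = M \sqcup (Z_1(\mT)\cup Z_2(\mT)) \sqcup (V(G)\setminus \tilde{M})$.

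First I would record exactly what the execution of \autoref{algo:A} guarantees at Line \ref{A:DP}. Since we passed the test at Line \ref{A:firstIf}, none of the rules \ruleref{rl:deg1}, \ruleref{rl:deg2}, \ruleref{rl:big}, \ruleref{rl:same_neighborhood_prime} applies to $(G,k,M,\mH)$. Since we did not fall into Case 1 or Case 2 at Lines \ref{A:case1}--\ref{A:case2}, the call to \algopart returned a genuine $t$-uniform partition $\mT$ of $G-M$ satisfying $|\mT^+|\le p_3(r,t)|M|$. Finally, since the tests of Lines \ref{A:KR4} and \ref{A:KR5} failed, neither \ruleref{rl:same_neighborhood_prime} nor \ruleref{rl:neighborhood} (applied on any connected component of $G-\tilde{M}$) applies to $(G,k,\tilde{M},\mH)$, where $\tilde{M} = M\cup Z_1(\mT)\cup Z_2(\mT)$.

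Next, I would apply \autoref{lm:special} to the partition $\mT$. This yields $|Z_1(\mT)|+|Z_2(\mT)|\le 2|\mT^+|\le 2p_3(r,t)|M|$, so
\[
|\tilde{M}|\le |M|+2p_3(r,t)|M|= (1+2p_3(r,t))|M|.
\]
Moreover \autoref{lm:special} tells us that Rules \ruleref{rl:deg1} and \ruleref{rl:deg2} do not apply on $(G,k,\tilde{M},\mH)$ and that every connected component $T$ of $G-\tilde{M}$ satisfies $d_{\tilde{M}}(T)\le 2t+2$.

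All the hypotheses of \autoref{lm:finalR4R5} are thus met on $(G,k,\tilde{M},\mH)$: Rules \ruleref{rl:deg1}, \ruleref{rl:deg2}, \ruleref{rl:same_neighborhood_prime} do not apply, and for every component $T$ of $G-\tilde{M}$ we have $d_{\tilde{M}}(T)\le 2t+2$ and \ruleref{rl:neighborhood} does not apply on $T$. We thereby obtain $|V(G)\setminus \tilde{M}|\le p_4(r,t)|\tilde{M}|$. Finally
\[
|G-M| = |V(G)\setminus \tilde{M}| + |Z_1(\mT)\cup Z_2(\mT)| \le p_4(r,t)(1+2p_3(r,t))|M| + 2p_3(r,t)|M|,
\]
which is $\O(p_3(r,t)p_4(r,t)|M|)$ as claimed.

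Since every nontrivial estimate has been factored into the previous lemmas, there is no real obstacle here; the only subtle point to verify carefully is that every hypothesis of \autoref{lm:finalR4R5} is genuinely available at Line \ref{A:DP} after switching from $M$ to $\tilde{M}$, which is exactly what \autoref{lm:special} together with the failure of the tests at Lines \ref{A:KR4}--\ref{A:KR5} provide.
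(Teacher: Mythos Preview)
Your proof is correct and follows essentially the same approach as the paper: use \autoref{lm:special} to bound $|Z_1(\mT)|+|Z_2(\mT)|$ by $2|\mT^+|\le 2p_3(r,t)|M|$ and to verify the hypotheses of \autoref{lm:finalR4R5} on $(G,k,\tilde{M},\mH)$, then apply \autoref{lm:finalR4R5} and add up the pieces. The only difference is cosmetic ordering of the estimates.
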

\begin{proof}
If \autoref{algo:A} reaches line~\ref{A:DP}, it implies that \algopart output a $t$-uniform partition of $G-M$ (where $t=2\d(r)$ and $|\mT^+| \le p_3(r,s,t)|M|$).
Let $\tilde{M}=M \cup Z_1(\T) \cup T_2(\T)$.
By \autoref{lm:special}, Rule~\ruleref{rl:deg1} and Rule~\ruleref{rl:deg1} do not apply on $(G,k,\tilde{M},\mH)$.
Also, since we reached this line it means that we did not apply Rule~\ruleref{rl:same_neighborhood_prime} on Line~\ref{A:KR4} or Rule~\ruleref{rl:neighborhood} on Line~\ref{A:KR5} (which both return the result of a recursive call). By \autoref{lm:special}, $d_{\tilde{M}}(T) \le 2t+2$ for any connected component $T$ of $G-\tilde{M}$, so we can apply \autoref{lm:finalR4R5} to obtain
$|G-\tilde{M}| \le p_4(r,t)|\tilde{M}|$.
As $|G-M|=|G-\tilde{M}|+|Z_1(\T)|+|Z_2(\T)|$, we get
$|G-M|=\O\left(p_4(r,t)(|M|+|Z_1(\T)|+|Z_2(\T)|)\right)$.
By \autoref{lm:special} we have $|Z_1(\T)|+|Z_2(\T)| \le 2|\mT^+|$ and $|\mT^+| \le p_3(r,s,t)|M|$, we get
$|G-M|=\O(p_3(r,t)p_4(r,t)|M|)$.
\end{proof}

Let us now bound the size of $M$ in any call to \autoref{algo:A} by providing invariants on its inputs.
Informally, $|M|$ can be bounded as follows.
Recall that we perform our first call to \autoref{algo:A} on input $(G_i,k_i,M_i,\emptyset)$ where $(G_i,k_i,M_i) \in \I$ and $M_i \subseteq M_0$, with $|M_0| \le 2k_0$.
Then, if we follow any path in the tree of calls whose root is the call on input $(G_i,k_i,M_i,\emptyset)$, we may add some vertices to $M$ because of \ruleref{rl:big} (let denote $A_1$ those vertices) or because of \ruleref{rl:KrrTilde} (we denote $A_2$ those vertices). On a given branch the rule~\ruleref{rl:KrrTilde} add at most $k_0$ trees of size at most $p_2(r,t)$ so $|A_2|\leq k_0p_2(r,t)$. It remains to bound the vertices of $A_1$. Observe that when a vertex $v$ is added to the set $M$ because of \ruleref{rl:big} (and so to $A_2$) we have $d_M(v)\geq 2\d(r)$, and so the number of edges in the graph $G[M]$ increases by at least $2\d(r)$. We may think that as we have at most $\d(r)|M|$ edges in $G[M]$, we obtain $2\d(r)|A_1|\leq \d(r)|M| \leq \d(r)(|M_0|+|A_1|+|A_2|)$ which gives a bound on the size of $A_1$. However this reasoning is flawed as we actually need to consider all the vertices added to $M$, even those deleted because of \ruleref{rl:neighborhood}. But then, if we consider deleted vertices then some vertices which were handled by \ruleref{rl:deg2} may not have degree 2 anymore, and the graph we need to consider possibly does not live in $\G$, and in fact may even contains a $K_{r,r}$ as a subgraph. Therefore a more technical analysis is needed to bound the size of $M$.

\begin{lemma}\label{lm:sizeM}
In a run of \autoref{algo:A} on some input $(G_i,k_i,M_i,\emptyset)$, where $(G_i,k_i,M_i) \in \I$, there may be several calls to \autoref{algo:A}, from the initial one to possible recursive calls.
Any such call is made on an input of the form $(G,k,M,\mH)$ with $|M|=\O(k_0p_2(r,t))$.
\end{lemma}
\begin{proof}
For proving the result, we need to define some invariant to our problem. For stating this invariant we need to consider additional annotations to our problem. We now consider instances of the form \[(G,k,M,\mH,A_0,A_1,A_2,X,C,G').\] Informally $A_0$ will denote the vertices in $M$ from the beginning, $A_0=M\cap M_0$, $A_1$ the vertices of $M$ added by \ruleref{rl:big}, $A_2$ for the ones added to $M$ by the rule~\ruleref{rl:KrrTilde}, $C$ the set of vertices obtained by contracting at least one edge by \ruleref{rl:deg2}, $X$ the vertices removed by application of \ruleref{rl:neighborhood} (so $X$ is a set of vertices outside of $V(G)$), and $G'$ is a supergraph of $G$ with $V(G')=(V(G)\cup X)$ that will be useful to ``remember'' the links between the vertices of $V(G)$ and $X$. More formally, initially for $(G,M,k,\emptyset)\in \I$ the instance is such that $A_0=M$, $A_1=A_2=X=C=\emptyset$ and $G'=G$. The rules update the additional annotations in the following manner:
\begin{itemize}
    \item When removing vertices from $V(G-M)$ in \ruleref{rl:deg1} and \ruleref{rl:same_neighborhood_prime}, remove the vertices from $G'$ as well and from the sets containing them.
    \item If we apply \ruleref{rl:deg2}, denoting $uv$ the contracted edge in $G$ and $w$ the vertex added to $G$ for replacing $u$ and $v$, we do the same operation in $G'$, with an edge between a vertex $q\in V(G)\cup X$ and $w$ if and only if there was an edge between $q$ and $u$, $q$ and $v$, or both. Moreover we add $w$ to the set $C$.
    \item If we apply \ruleref{rl:big}, the vertex $v$ added to $M$ is added to $A_1$ as well.
    \item If we apply \ruleref{rl:neighborhood}, the vertex $v\in M$ removed from $G$ is now added to $X$, and removed from the set $A_0, A_1$, or $A_2$ it belongs to. The graph $G'$ is not modified.
    \item If we apply \ruleref{rl:KrrTilde}, in the $t$ branches where we add vertices to $M$, we add them to $A_2$ as well now. For the $t$ branches where $t-1$ vertices are removed from $M$, again they are now added to $X$, and removed the sets $A_0, A_1$, and $A_2$.
\end{itemize}
We now state some invariants on those sets:
\begin{claim}\mbox{}
\begin{enumerate}
    \item $M$ is partitioned in $A_0,A_1,A_2$.
    \item $G'[(V(G)\cup X)\setminus C]$ is a $K_{r,r}$-free graph in $\G$.
    \item For $v\in V(G)$, if $v\in C$ or $N_{G'}(v)\cap C\neq \emptyset$ then $d_G(v)\leq 2$.
    \item $C\cap A_1=\emptyset$.
    \item For $v\in A_1$, $|(N_{G'}(v)\cap (M \cup X)) \setminus C|\geq 2\d(r)$.
\end{enumerate}
\end{claim}
\begin{proof}
The first claim is trivial to verify.
For the second, observe that the only modification to this graph is the deletion of vertices, so the result follows by hereditary of $\G$. For each call the degree of a vertex does not increase in $G$ at any time (but it can in $G'$), so it suffices to prove the result when the vertex was just added. If a vertex $v\in V(G)$ is added to $C$ or one of its neighbors it implies that $d_G(v)\leq 2$.
Finally, for the last claim, this property is true when a vertex is added to $A_2$ and there is no contraction of edges or deletion of vertices in $G'[M\cup X]$, so the inequality is kept. The fourth result can be proved by observing that an element of the intersection would have first been in $C$ and then $A_1$. Adding an element in $A_1$ requires that it has $2\d(r)>2$ neighbors in $G$, but this would contradict the previous item. Finally the last inequality is obtained by observing that the considered set stays the same once $v$ is added to $A_1$. And when $v$ is added to $A_1$ we have $d_M(v)\geq 2\d(r)>2$, so $N_{G'}(v)\cap C=\emptyset$ by the third item and so $N_{G'}(v)\subseteq (M\cup X)\setminus C$.\cqed
\end{proof}
Now following the same reasoning than in the sketch but this time using the graph $G'[(V(G)\cup X)\setminus C]$, we obtain $|A_0|\leq 2k_0$, $|A_2|\leq k_0p_2(r,t)$ and $2\d(r)|A_1|\leq \d(r)(|A_0|+|A_1|+|A_2|)$ so $|M|\leq 4k_0+2k_0p_2(r,t)$.
\end{proof}

Using \autoref{lm:sizeGMinusM} and \autoref{lm:sizeM}, the following corollary is now immediate.

\begin{corollary}\label{lm:sizeG-end}
Let $(G,k,M,\mH)$ be an instance of \AFVS.
If on this input \autoref{algo:A} reaches line~\ref{A:DP},
then $|V(G)|=\O(k_0p_6(r,t))$ where $p_6(r,t)=p_2(r,t)p_3(r,t)p_4(r,t)$.
\end{corollary}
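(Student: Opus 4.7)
The plan is to obtain this corollary as a direct composition of the two preceding lemmas, namely \autoref{lm:sizeGMinusM} and \autoref{lm:sizeM}, using the trivial decomposition $|V(G)| = |M| + |V(G)\setminus M|$. Since the hypothesis is that the recursive call $A(G,k,M,\mH)$ reaches Line \ref{A:DP}, both of these lemmas are applicable: the first one requires precisely this hypothesis, while the second holds for any input of the algorithm and therefore in particular for the input we are considering.

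More concretely, I would first invoke \autoref{lm:sizeGMinusM} to bound $|V(G)\setminus M| = \O(p_3(r,t)p_4(r,t)|M|)$. Then I would apply \autoref{lm:sizeM} to replace $|M|$ by $\O(k_0 p_5(r,t))$, obtaining $|V(G)\setminus M| = \O(k_0 p_3(r,t)p_4(r,t)p_5(r,t))$. Finally, summing with the bound $|M| = \O(k_0 p_5(r,t))$ (which is absorbed into the previous term since each $p_i$ is a polynomial at least $1$) yields $|V(G)| = \O(k_0 p_6(r,t))$, matching the definition $p_6(r,t) = p_3(r,t)p_4(r,t)p_5(r,t)$.

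There is no genuine obstacle here: the work has already been done in the two preceding lemmas, and the corollary is truly just the arithmetic assembly of their conclusions. The only thing to be slightly careful about is that the hypothesis ``reaches Line \ref{A:DP}'' is what allows us to invoke \autoref{lm:sizeGMinusM} (which needs the full structural guarantees established along the way, in particular that \algopart returns a $t$-uniform partition with $|\mT^+| \le p_3(r,t)|M|$ and that neither \ruleref{rl:same_neighborhood_prime} nor \ruleref{rl:neighborhood} applies at that stage), whereas \autoref{lm:sizeM} applies unconditionally to any input of \autoref{algo:A}. Once both are plugged in, the statement follows in one line.
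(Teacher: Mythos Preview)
Your proposal is correct and matches the paper's approach exactly: the paper states that the corollary is immediate from \autoref{lm:sizeGMinusM} and \autoref{lm:sizeM}, and you have spelled out precisely that composition via $|V(G)| = |M| + |V(G)\setminus M|$.
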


\section{Complexity analysis}\label{sec:analysis}

\subsection{Complexity of the dynamic programming in the base case of \autoref{algo:A}}
Let us first describe how \autoref{algo:A} solves an instance  $(G, k, M, \mH)$ using dynamic programming when no rule applies (line~\ref{A:DP}). Recall that by definition of \AFVS, $G$ is a $K_{r,r}$-free graph from a nice graph class $\G$, and $\mH$ a family of disjoint subsets of $M$, with each $H\in \mH$ inducing a connected graph in $G$. 

\begin{theorem}\label{th:solveAFVS}
    An \AFVS instance $(G, k, M, \mH)$ with $G$ an $n$-vertex graph can be solved in time $\tw(G)^{\O(\tw (G))}n^{\O(1)}$.
\end{theorem}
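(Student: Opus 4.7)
The plan is to reduce \AFVS to standard \FVS on a closely related graph whose treewidth exceeds $\tw(G)$ by only a constant, and then invoke the classical tree-decomposition DP for \FVS (see \cite{Cygan2015Book}), which runs in time $w^{\O(w)} n^{\O(1)}$ on an $n$-vertex graph of treewidth~$w$.

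First I would construct from $(G, k, M, \mH)$ an auxiliary graph $G'$ together with a set $W$ of vertices forbidden from the solution. For each $H \in \mH$ with $|H|\ge 2$ I would pick an arbitrary edge $a_H b_H$ of $G[H]$ (which exists since $G[H]$ is connected) and add a new vertex $w_H$ adjacent precisely to $a_H$ and $b_H$, placing $w_H$ in $W$; the triangle $\{a_H,b_H,w_H\}$ then forces any feedback vertex set of $G'$ avoiding $W$ to take $a_H$ or $b_H$. For each singleton $H=\{v\}$ I would instead apply the trivial preprocessing that forces $v$ into the solution, decrements $k$, and removes $H$ from $\mH$. The equivalence I would verify is: $(G,k,M,\mH)$ is a YES-instance of \AFVS if and only if $G'$ admits a feedback vertex set of size at most $k$ disjoint from $W$. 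Forward, any \AFVS-solution $S$ hits each new triangle through $S\cap H$ and remains a feedback vertex set of $G'$; backward, any feedback vertex set $S'$ of $G'$ avoiding $W$ must contain $a_H$ or $b_H$ for every $H$, so $S'\cap V(G)$ is a feedback vertex set of $G$ meeting each $H\in\mH$.

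Next I would bound $\tw(G')\le\tw(G)+1$. In any tree decomposition of $G$, since $a_Hb_H$ is an edge of $G$, some bag contains both endpoints; adding $w_H$ to one such bag (a distinct choice for each $H$, which is possible because the $w_H$ are pairwise distinct) yields a valid tree decomposition of $G'$ of width at most $\tw(G)+1$. A width-$\O(\tw(G))$ decomposition of $G'$ can be produced in time $2^{\O(\tw(G))}n^{\O(1)}$ by any standard constant-factor treewidth approximation.

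Finally I would run the classical \FVS DP over a width-$\O(\tw(G))$ tree decomposition of $G'$. Each bag state consists of a guessed intersection of the solution with the bag, together with a labeled partition of the remaining bag vertices encoding the acyclic forest structure below, giving $w^{\O(w)}$ states per bag. The forbidden set $W$ is handled trivially by restricting, at every bag, the guessed partial solutions to omit vertices of $W$. The overall running time is $w^{\O(w)} n^{\O(1)} = \tw(G)^{\O(\tw(G))} n^{\O(1)}$. The only non-routine point is absorbing the $\mH$-annotation into a standard \FVS instance without inflating the state space by $2^{|\mH|}$; the triangle-gadget construction does this cleanly because the sets of $\mH$ are disjoint and each is already connected in $G$, so each gadget is local and adds at most one vertex of degree two to an existing bag.
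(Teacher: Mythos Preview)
Your reduction is not correct: the forward direction of the claimed equivalence fails. The triangle gadget $\{a_H,b_H,w_H\}$ forces any feasible solution in $G'$ to contain $a_H$ or $b_H$, but an \AFVS solution is only required to intersect $H$ somewhere; it may do so at a vertex of $H\setminus\{a_H,b_H\}$, in which case the gadget triangle is not hit. Your sentence ``any \AFVS-solution $S$ hits each new triangle through $S\cap H$'' is therefore unjustified.

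A concrete counterexample: let $G$ have vertex set $\{a,b,c,x,y\}$ with edges $ab,\ bc,\ cx,\ xy,\ yc$. Take $M=H=\{a,b,c\}$ (which induces a path, hence is connected, and is a feedback vertex set of $G$), $\mH=\{H\}$, and $k=1$. The unique cycle of $G$ is $cxy$, so the only size-$1$ feedback vertex set of $G$ meeting $H$ is $S=\{c\}$; the instance is a YES-instance. If you pick the edge $ab$ of $G[H]$ for your gadget, then $G'$ contains the two vertex-disjoint triangles $cxy$ and $abw_H$; no single vertex outside $W$ kills both, so your reduced instance answers NO. Thus the equivalence fails for an arbitrary choice of edge, and no single edge of $G[H]$ can in general encode the constraint ``hit $H$''.

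The paper avoids this issue by not reducing to plain \FVS at all: it runs the standard \FVS dynamic program directly on a tree decomposition of $G$ and augments each bag state with, for every $H\in\mH$ meeting the bag, one bit recording whether $H$ has already been hit. Two observations make this cheap: since each $H$ is connected in $G$, the bags meeting $H$ form a subtree of the decomposition, so the bit can be introduced and forgotten coherently; and since the sets of $\mH$ are pairwise disjoint, a bag of size $b$ meets at most $b$ of them, so the augmentation multiplies the state space by at most $2^{b}$, keeping the running time within $\tw(G)^{\O(\tw(G))}n^{\O(1)}$.
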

\begin{proof}
For an instance $(G, M, k, \mH)$ with $G$ a graph on $n$ vertices, computing a tree decomposition with bags of size $\O(\tw(G))$ can be done in time $2^{\O(\tw(G))}n$~\cite{korhonen2022single}. Given such a tree decomposition, observe that a bag of size $b$ can intersect at most $b$ distinct $H\in \mH$. Moreover, for $H\in \mH$, the fact that $H$ is connected ensures that the set of bags of the tree decomposition of $G$ containing at least one vertex of $H$ is a connected subset. Using those two observations it is straightforward to adapt the standard dynamic programming algorithm solving \FVS (see \cite{Cygan2015Book}) for solving the more general \AFVS without worsening the running time, and obtaining the claimed complexity.
\end{proof}

Let us call \emph{DP} the algorithm of \autoref{th:solveAFVS}.

\begin{corollary}\label{cor:DP}
Let $(G,k,M,\mH)$ be an instance of \AFVS.
If on this instance \autoref{algo:A} reaches line~\ref{A:DP} and calls
$DP(G,k,M,\mH)$, then the worst case running time of $DP(G,k,M,\mH)$ is  $\tdp(k_0,r,t)=2^{\O\left(k_0^{\delta}p_7(r,t)\right)}$ with $p_7(r,t)=\f(r)\log (\f(r)k_0p_6(r,t))(p_6(r,t))^{\delta}$.
\end{corollary}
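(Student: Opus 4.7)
The plan is to compose three already-established ingredients: the vertex bound from \autoref{lm:sizeG-end}, the treewidth bound coming from the niceness of $\G$ (property \ref{def:prop:tw}), and the dynamic programming running time from \autoref{th:solveAFVS}. Since the instance $(G,k,M,\mH)$ passed to $DP$ is an \AFVS instance on a $K_{r,r}$-free graph $G\in\G$, the three ingredients chain together directly.

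First, I would invoke \autoref{lm:sizeG-end} to get $n := |V(G)| = \O(k_0 \cdot p_6(r,t))$. Second, since $\G$ is nice and $G$ is $K_{r,r}$-free, property \ref{def:prop:tw} gives
\[
\tw(G) \;=\; \O\bigl(f_r \cdot n^{\delta}\bigr) \;=\; \O\bigl(f_r \cdot (k_0 p_6(r,t))^{\delta}\bigr) \;=\; \O\bigl(k_0^{\delta}\cdot f_r\cdot p_6(r,t)^{\delta}\bigr).
\]
Third, \autoref{th:solveAFVS} solves such an \AFVS instance in time $\tw(G)^{\O(\tw(G))} n^{\O(1)} = 2^{\O(\tw(G)\log \tw(G))}\,n^{\O(1)}$.

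The remaining step is a routine computation of $\tw(G)\log \tw(G)$. Using the displayed bound on $\tw(G)$,
\[
\log \tw(G) \;=\; \O\bigl(\log(f_r) + \delta \log(k_0 p_6(r,t))\bigr) \;=\; \O\bigl(\log(f_r \cdot k_0 \cdot p_6(r,t))\bigr),
\]
so that
\[
\tw(G)\log \tw(G) \;=\; \O\bigl(k_0^{\delta}\cdot f_r\cdot p_6(r,t)^{\delta}\cdot \log(f_r k_0 p_6(r,t))\bigr) \;=\; \O\bigl(k_0^{\delta}\cdot p_7(r,t)\bigr),
\]
where $p_7(r,t) = f_r \log(f_r k_0 p_6(r,t))\,p_6(r,t)^{\delta}$ as claimed. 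The polynomial factor $n^{\O(1)}$ in the running time is then absorbed into the exponent $2^{\O(k_0^{\delta} p_7(r,t))}$ (using that $\log n = \O(\log(k_0 p_6(r,t)))$ is already a factor in $p_7$), giving the stated bound $\tdp(k_0,r,t)$.

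There is no genuine obstacle here: the hard work lies in \autoref{lm:sizeG-end} (bounding $|V(G)|$ when the algorithm reaches the DP base case) and in the niceness hypothesis supplying sublinear treewidth. The only care needed is in collecting logarithms so that the $\log$ term is stated as $\log(f_r k_0 p_6(r,t))$ rather than $\log(f_r)+\log(k_0 p_6(r,t))$, and in checking that $\delta<1$ ensures the exponent really is subexponential in $k_0$ for fixed $r,t$ — but that last observation is used later in the global complexity analysis, not in the proof of this corollary itself.
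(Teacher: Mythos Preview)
Your proposal is correct and follows essentially the same approach as the paper: invoke \autoref{lm:sizeG-end} for the vertex bound, use niceness (property~\ref{def:prop:tw}) for the treewidth bound, and plug into \autoref{th:solveAFVS}. The paper's proof is actually terser than yours, simply writing the running time as $(f_r n)^{\O(f_r n^{\delta})} n^{\O(1)}$ after these substitutions without expanding the $\tw(G)\log\tw(G)$ computation explicitly.
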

\begin{proof}
Denoting $n=|V(G)|$, according to \autoref{lm:sizeG-end}, we have $n=\O(k_0p_6(r,t))$, and as $\G$ is a nice class (see \autoref{def:NCtrees}), we have $\tw(G)\leq \f(r)n^{\delta}$. Thus, \autoref{th:solveAFVS} implies that $DP(G,k,M,\mH)$ runs in time $\left(\f(r)n\right)^{\O\left(\f(r)n^{\delta}\right)}n^{\O(1)}$.
\end{proof}

\subsection{Complexity of \autoref{algo:A}}
Informally, the complexity of \autoref{algo:A} is dominated by the only branching rule~\ruleref{rl:KrrTilde}. The appropriate parameter $\alpha$ associated to an instance $(G,k,M,\mH)$ is $\alpha=k+(k-|\mH|)$, as informally $\mH$ is a packing of hyperedges that we have to hit, and thus $|\mH|$ is a lower bound of the cost of any solution. Thus, if $p(n)$ is the (polynomial) complexity of all operations performed in one call of \autoref{algo:A}, and $f(n,\alpha)$ is the worst complexity of \autoref{algo:A} when $|V(G)|=n$ and $\alpha=k+(k-|\mH|)$, \ruleref{rl:KrrTilde} gives $f(n,\alpha) \le p(n)+(2t)f(n,\alpha-(t-1))$, leading to $f(n,\alpha) \le p(n)(2t)^{\frac{2k-h}{t-1}}$.

\begin{lemma}\label{lm:complexityA}
On an input $(G,k,M,\mH)\in \I$ with $G$ a $n$-vertex graph, \autoref{algo:A} runs in time
$$\tdp\left(k_0,r,t\right)(2t)^{\frac{2k}{t-1}}n^{\O(1)}$$ where $\tdp$ is defined in \autoref{cor:DP}.
\end{lemma}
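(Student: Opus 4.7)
The plan is to analyze the running time by bounding the size of the recursion tree. Following the hint in the paragraph preceding the lemma, I would use the measure $\alpha(G,k,M,\mH) := k + (k - |\mH|) = 2k - |\mH|$. Observe that since $\mH$ is a family of $|\mH|$ pairwise disjoint subsets of $M$ and any solution $S$ must hit every element of $\mH$, we have $|\mH| \le k$ for any yes-instance, so $\alpha \ge k \ge 0$ along yes-paths; in any case $\alpha$ never goes below $-|\mH|$, which suffices for the analysis.

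First I would verify the behavior of each rule on $\alpha$. The rules \ruleref{rl:deg1}, \ruleref{rl:deg2}, \ruleref{rl:big}, \ruleref{rl:same_neighborhood_prime} leave both $k$ and $|\mH|$ unchanged, so they preserve $\alpha$. Rule \ruleref{rl:neighborhood} decreases $k$ by $1$ and $|\mH|$ by at most $1$ (the removed vertex $u$ lies in at most one $H \in \mH$ since the sets of $\mH$ are disjoint), so $\alpha$ decreases by at least $1$. For the branching rule \ruleref{rl:KrrTilde}: in each ``take $X_{\overline v}$'' branch the parameter $k$ drops by $t-1$ while $|\mH|$ drops by at most $t-1$, yielding a decrease of $\alpha$ of at least $2(t-1)-(t-1) = t-1$; in each ``absorb $\mT_{R_i}$'' branch $k$ is unchanged and $|\mH|$ grows by exactly $t-1$, so $\alpha$ decreases by exactly $t-1$. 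In all $2t$ children, $\alpha$ drops by at least $t-1$.

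Next I would bound the number of leaves of the recursion tree. Since \ruleref{rl:KrrTilde} is the only branching rule and starting from an instance with $\mH = \emptyset$ we have $\alpha_0 = 2k$, a standard branching-tree argument gives at most $(2t)^{2k/(t-1)}$ leaves. To count total nodes, I would exhibit a monovariant along any root-to-leaf path that limits the number of kernelization steps between two consecutive branchings: the quantity $|V(G)| + |V(G) \setminus M|$ strictly decreases under each of \ruleref{rl:deg1}, \ruleref{rl:deg2}, \ruleref{rl:big}, \ruleref{rl:same_neighborhood_prime}, and \ruleref{rl:neighborhood} (the first, second, fourth and fifth delete at least one vertex of $G$, while \ruleref{rl:big} shrinks $|V(G) \setminus M|$ by one). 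Both BR-type branches also do not increase this quantity. Hence any root-to-leaf path has $O(n)$ nodes, and the recursion tree has at most $n \cdot (2t)^{2k/(t-1)}$ nodes in total.

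Finally I would account for the work per node. Lines \ref{A:firstIf}--\ref{A:KR5} each require polynomial time: the rules run in polynomial time by the lemmas in Sections~\ref{sssec:kernelization}--\ref{sec:ktttilde}, and \algopart runs in polynomial time by \autoref{lem:construct-red-trees}. Therefore each internal node costs $n^{\O(1)}$. At each leaf the algorithm invokes $DP(G,k,M,\mH)$, whose running time is bounded by $\tdp(k_0,r,t)$ by \autoref{cor:DP}. Summing contributions yields
\[
n \cdot (2t)^{2k/(t-1)} \cdot n^{\O(1)} \;+\; (2t)^{2k/(t-1)} \cdot \tdp(k_0,r,t) \;=\; \tdp(k_0,r,t) \cdot (2t)^{2k/(t-1)} \cdot n^{\O(1)},
\]
which is the claimed bound. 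The only subtle step is the bookkeeping for $\alpha$ across all rules (in particular that \ruleref{rl:KrrTilde}'s ``absorb'' branch actually makes progress despite $k$ being unchanged, which is guaranteed by the growth of $|\mH|$); the rest is a routine branching-tree analysis.
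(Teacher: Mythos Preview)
Your argument is correct and essentially the same as the paper's: you use the two measures $2k-|\mH|$ (to bound the branching depth) and $|V(G)|+|V(G)\setminus M|=2n-m$ (to bound the length of non-branching chains), which are precisely the quantities the paper combines into its induction parameter $2(n+k)-m-h$ and its bound $(p(n)+C_{DP})(2n-m)\big((2t)^{(2k-h)/(t-1)}-1\big)$. The only difference is presentational---you count leaves and depth separately, while the paper wraps both into a single induction---and your slightly hand-wavy lower bound on $\alpha$ is in fact justified because $|\mH|\le k$ is part of the definition of \AFVS{} and is preserved by the safeness of the rules.
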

\begin{proof}
Let $f(n,k,m,h)$ be the worst case complexity of 
\autoref{algo:A} on an instance $(G,k,M,\mH)$ when $|V(G)|=n$, $|M|=m$, and $\mH=h$. We set $C_{DP}=\tdp\left(k_0,r,t\right)$.
Let $p(n)$ be the worst case complexity of all operations performed in one call of \autoref{algo:A} (which is bounded by the sum of the complexities of testing and applying all the kernelization and branching rules, and applying  \algopart). Notice that, as all rules and 
\autoref{sssec:partition-algo} are polynomial, $p$ is polynomial.
Let us show by induction on $2(n+k)-m-h$ that 
\[
f(n,k,m,h) \le (p(n)+C_{DP})(2n-m)\left((2t)^{\frac{2k-h}{t-1}}-1\right).
\]
Without loss of generality, we assume that $p$ is non-decreasing.
Let us bound $f$ according to which recursive call we perform. 
In particular, if we apply \ruleref{rl:neighborhood}, 
we delete one vertex $v$ from $M$, and as we remove from the packing $\mH$ the (at most one) hyperedge containing $v$, we decrease $h$ by $x \in \{0,1\}$.
If we apply \ruleref{rl:KrrTilde}, then we make $t$
recursive calls on instances  $((G-X_{\overline{v}}), k-(t-1), M\setminus X_{\overline{v}}, \mH-X_{\overline{v}})$ (re-using notations of \ruleref{rl:KrrTilde}), and thus in these calls we remove $x \in \intv{0}{t-1}$ hyperedges from $\mH$, 
and in the worst case also $t$ recursive calls on instances $\left(G, k, M\cup R_i, \mH\cup \mT_{R_i}\right)$, where $R_i$ is the union of $t-1$ parts of a $\KttTilde$, each part having size at most $p_2(r,t)$ by definition of a $\KttTilde$.
Thus $f(n,k,m,h)$ is at most:
\begin{enumerate}
    \item $p(n)+f(n-x,k,m,h)$, for $x\geq 1$, when \ruleref{rl:deg1}, \ruleref{rl:deg2}, or \ruleref{rl:same_neighborhood_prime} applies;
    \item $p(n)+f(n,k,m+1,h)$ when \ruleref{rl:big} applies;
    \item $p(n)+f(n-1,k-1,m-1,h-x)$ for $x \in \{0,1\}$ when \ruleref{rl:neighborhood} applies;
    \item $p(n)+ t f(n-(t-1),k-(t-1),m-(t-1), h-x)+ t f(n,k,m+(t-1)p_2(r,t),h+(t-1))$ for $0\leq x\leq t-1$ when \ruleref{rl:KrrTilde} applies; and
    \item $p(n)+\CDP$ if we do not apply any rule and call the dynamic programming algorithm DP.
\end{enumerate}

It remains to check by induction that our bound holds in any of these cases.
The first two and last cases are straightforward.
For the third case, we get
\begin{align*}
    f(n,k,m,h) & \le & p(n)+f(n-1,k-1,m-1,h-x) \\
      & \le & p(n)+(p(n-1)+\CDP)(2n-m-1)\left((2t)^{\frac{2k-h-1}{t-1}}-1\right) \\
      & \le & p(n)+(p(n)+\CDP)(2n-m-1)\left((2t)^{\frac{2k-h-1}{t-1}}-1\right) \\
      & \le & (p(n)+\CDP)(2n-m)\left((2t)^{\frac{2k-h}{t-1}}-1\right)
\end{align*}

For the fourth case, let us first bound the terms $z_1=f(n\! -(t\!-1),k\!-(t\!-1),m\!-(t\!-1),h\!-x)$ and $z_2 = f(n,k,m+(t-1)p_2(r,t),h+(t-1))$ by

\begin{align*}
     z_1 & \le   (p(n-(t-1))+\CDP)(2n-m-(t-1))\left((2t)^{\frac{2k-h-2(t-1)+x}{t-1}}-1\right)  \\
    & \le   (p(n)+\CDP)(2n-m)\left((2t)^{\frac{2k-h}{t-1}-1}-1\right)  \\
     z_2 & \le  (p(n)+\CDP)(2n-m)\left((2t)^{\frac{2k-h}{t-1}-1}-1\right)  
\end{align*}
This leads to the following upper bound for the fourth case:
\begin{align*}
    f(n,k,m,h)  & \le  p(n)+t(z_1+z_2) \\
& \le  p(n)+ (p(n)+\CDP)(2n-m)(2t)\left((2t)^{\frac{2k-h}{t-1}-1}-1\right)\\
& \le \left(p(n)+\CDP\right)(2n-m)\left((2t)^{\frac{2k-h}{t-1}}-1\right).
\end{align*}
This ends the induction. The desired result is then easily obtained from the proven bound.
\end{proof}

Let us now bound the running time of the main algorithm (that branches to create the family of instances $\I$ and runs \autoref{algo:A} on each instance).
\begin{main-thm}
For every nice hereditary graph class $\mathcal{G}$ there is a constant $\eta<1$ such that FVS can be solved in $\mathcal{G}$ in time $2^{k^\eta}\cdot  n^{\O(1)}$.

\end{main-thm}
\begin{proof}
Let denote $n_0=|V(G_0)|$.
According to \autoref{lem:step2}, the complexity to generate the set $\I$ is in $2^{\O\left (r\log k_0 \right )}n_0^{\O(1)}(2r)^{\frac{k_0}{r-1}}$ and $|\I|=\O\left((2r)^{\frac{k_0}{r-1}}\right)$.
Solving \AFVS on an instance $(G,k,M,\mH)\in \I$ with $G$ a $n$-vertex graph can be done in time $\tdp(k_0,r,t)(2t)^{\frac{2k_0}{t-1}}n_0^{\O(1)}$,
where $\tdp(k_0,r,t)=2^{\O\left(k_0^{\delta}p_7(r,t)\right)}$ according to \autoref{lm:complexityA}, \autoref{cor:DP} and the observation that $k\leq k_0$ and $n\leq n_0$. 

So the overall running time is bounded by:
$$\left(2^{\O\left(r\log k_0+\frac{k_0\log r}{r}\right)}+2^{\O\left(\frac{k_0\log r}{r}+\frac{k_0\log t}{t}+k_0^{\delta}p_7(r,t) \right)} \right)n^{\O(1)}.$$
Now lets recall that $t$ is defined as a function of $r$ as $t=2\d(r)=r^{\O(1)}$ and so $p_7(r,t)=r^{\O(1)}$. 
More precisely if we have 
$f_1(r)=\TO\left(r^{c_{f_1}}\right),~f_2(r,p,m)=\TO(r^{c_{f_2}}(p+m)^{c_{f_2}'})$, $\f(r)=\TO(r^{c_f})$ and $\d(r)=\TO(r^{c_d})$ (remember that without loss of generality we supposed $c_d\geq 1$), then we have $p_7(r,t)=\TO(r^{c_7})$ with
\begin{align*}
    c_7&=c_f+\delta(2(c_d+c_{f_2}+c_{f_2'}(c_{f_1}+(6+\alpha)c_d))+c_{f_1}+(6+\alpha)c_d)\\
    &=c_f+\delta(2(c_d+c_{f_2})+(c_{f_2}'+1)(c_{f_1}+(6+\alpha))).
\end{align*}
So by taking $r=k_0^{\eps}$ with $\eps=\frac{1-\delta}{c_7+1}$ we obtain a running time  $2^{\TO\left(k_0^{1-\eps}\right)}n_0^{\O(1)}$. 
To get rid of the hidden logarithmic factors in the exponent and obtain the wanted result it then suffices to replace $\eps$ with some $0<\eps'<\eps$.
\end{proof}

\section{Applications}\label{sec:applications}
In this section, we prove that $s$-string graphs and pseudo-disk graphs are nice graph classes for some parameters. We recall that a system of pseudo-disks is a collection of regions in the plane homeomorphic to a disk such that any two of them share at most 2 points of their boundary.
Similar arguments are used for both considered classes, but in the case of pseudo-disks the arguments are a bit simpler, we then consider this class first.

 In order to give bounds on tree neighborhood complexity we will use the following theorem.
 
 \begin{theorem}[\cite{pseudodisjoint}]\label{hyperpseudo}
    Given $\EuScript{F}$ a family of pseudo-disks and $\EuScript B$ a finite family of pseudo-disks, consider the hypergraph $H(\EuScript B,\EuScript F)$ whose vertices are the pseudo-disks in $\EuScript B$ and the edges are all subsets of~$\EuScript B$ of the form $\{D \in \EuScript B,~D \cap S \neq \emptyset\}$, with $S\in \EuScript F$.
  Then the number of edges of cardinality at most $k\geq 1$ in $H(\EuScript B,\EuScript F)$ is $\O(|\EuScript B|k^3)$.
\end{theorem}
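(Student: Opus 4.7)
The plan is to combine the Clarkson--Shor random sampling paradigm with the classical linear union-complexity bound for pseudo-disks of Kedem--Livne--Pach--Sharir, which states that the boundary of the union of $n$ pseudo-disks has combinatorial complexity $\O(n)$. This theorem provides the natural ``base case'' for our counting. Specifically, I would first bound the number of edges of $H(\EuScript B,\EuScript F)$ of cardinality $0$: any region $S\in\EuScript F$ that meets no pseudo-disk of $\EuScript B$ lies inside a single face of the complement of $\bigcup\EuScript B$, and each such face can be charged at most a constant number of regions via a boundary-tracking argument on $\partial S$ together with the pseudo-disk property (any two pseudo-disks share at most two boundary points). This yields a bound of $\O(|\EuScript B|)$ on the number of size-$0$ edges of $H$, and by a similar argument the number of edges of any fixed constant size is $\O(|\EuScript B|)$.

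To lift this base case to all edges of cardinality at most $k$, I would apply Clarkson--Shor. Sample each pseudo-disk of $\EuScript B$ independently with probability $p=1/k$ to obtain a random subfamily $\EuScript B'$ of expected size $|\EuScript B|/k$. For an edge $E$ of $H(\EuScript B,\EuScript F)$ of size $j\le k$, the event that no pseudo-disk of $E$ is sampled has probability $(1-p)^j\ge(1-1/k)^k\ge 1/e$, so at least a constant fraction of the $k$-shallow edges of $H(\EuScript B,\EuScript F)$ induce size-$0$ edges of the sub-hypergraph $H(\EuScript B',\EuScript F)$. The base case applied to $\EuScript B'$ then bounds the expected number of distinct size-$0$ edges by $\O(|\EuScript B|/k)$. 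The remaining step is to translate this back to the number of distinct size-$\le k$ edges of the original hypergraph, which is done via the standard Clarkson--Shor ``defining set'' argument: one shows that every edge of $H$ admits a canonical bounded-size defining subset of $\EuScript B$ (of size at most $4$ for pseudo-disk witnesses, because the incidence class of a pseudo-disk $S$ with $\bigcup E$ can be topologically encoded by a constant number of ``extremal'' boundary events). The defining set of size $d=4$ contributes a factor $k^{d}=k^{4}$, while the sampled base case contributes $1/k$, giving the advertised bound $\O(|\EuScript B|\,k^{3})$.

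The hard part, I expect, is isolating the correct notion of bounded-size ``defining set'' in the purely combinatorial pseudo-disk setting. Unlike disks in $\RR^{2}$, pseudo-disks admit no algebraic parametrization, so the defining set cannot be described by equation counts; instead it must be extracted topologically, e.g., as the pseudo-disks of $E$ through which certain extremal arcs of $\partial S$ pass in a generic direction. Once the defining set is correctly specified and shown to be of constant size, a verbatim application of the Clarkson--Shor shallow-cell lemma closes the argument. I expect that the referenced paper develops this combinatorial machinery carefully, and that our use of the theorem in \autoref{sec:applications} only relies on the stated polynomial dependence in $k$ rather than on the internal mechanics of the proof.
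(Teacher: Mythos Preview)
This theorem is not proved in the paper: it is quoted from the reference \texttt{[pseudodisjoint]} and used as a black box in \autoref{sec:applications}. There is therefore no ``paper's own proof'' to compare your attempt against; the paper relies only on the stated bound $\O(|\EuScript B|k^3)$, exactly as you note in your last paragraph.

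Regarding the sketch itself: the overall strategy (Clarkson--Shor sampling on top of the linear union-complexity of pseudo-disks) is indeed the standard route to results of this type, but your base case is misformulated. The hypergraph $H(\EuScript B,\EuScript F)$ has at most one edge of cardinality~$0$, namely the empty set, regardless of how many $S\in\EuScript F$ avoid all of $\EuScript B$; your argument that ``each face of the complement can be charged at most a constant number of regions'' is counting the wrong thing, since distinct such $S$ all contribute the same hyperedge. As a consequence, the step ``a constant fraction of the $k$-shallow edges induce size-$0$ edges of $H(\EuScript B',\EuScript F)$'' collapses, because arbitrarily many distinct $k$-shallow edges can all map to the single empty edge after sampling. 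What is actually needed is a base count not of empty hyperedges but of \emph{canonical witnesses}: one associates to each hyperedge a canonical point (or boundary feature) in the arrangement of $\EuScript B$ and shows that the number of such features at depth $\le c$ is $\O(|\EuScript B|)$ via the union-complexity bound; Clarkson--Shor is then applied to these features, not to the hyperedges directly. Your ``defining set of size $4$'' paragraph is heading in this direction, but as written the two halves of the argument do not connect. Since the present paper does not require you to reprove the theorem, this does not affect anything downstream.
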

A very important aspect in this result is that it is not required that $\EuScript{F} \cup \EuScript{B}$ is a family of pseudo-disks, and thus pseudo-disks of $\EuScript B$ may ``cross'' pseudo disks of $\EuScript F$.
This is indeed the case in our applications where in particular we associate to each tree in $G-M$ a pseudo-disk in $\EuScript{B}$, and this pseudo disk may cross pseudo-disks associated to vertices of  $\EuScript{M}$ (see proof of \autoref{lm:sstringtreeneigh}).

\subsection{Application to pseudo-disk graphs}
In this section, we prove that the class of pseudo-disk graphs is nice, and so by our main theorem it admits a robust subexponential FPT algorithm for \FVS. Note that for this graph class, the existence of such algorithm was very recently given in~\cite{FVS-WG}.
\begin{lemma}\label{lm:pseudotreeneigh}
    There exists a constant $c$ such that the class of pseudo-disk graphs has bounded tree neighborhood complexity with parameters $\alpha=4,~f_1(r)=c$ and $f_2(r,p,m)=cp^3$.
\end{lemma}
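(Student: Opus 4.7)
The plan is to apply \autoref{hyperpseudo} to a family of pseudo-disks built from the trees of $\mT$. Fix any pseudo-disk representation $\{D_v\}_{v \in V(G)}$ of $G$ and set $\EuScript{B} = \{D_v : v \in A\}$, so that $|\EuScript{B}| = |A|$. The key step is to associate to every $T \in \mT$ a pseudo-disk $D_T$ such that (i) $D_T \cap D_u \neq \emptyset$ if and only if $u \in N_A(T)$, so that $N_A(T)$ corresponds to the edge $\{D_u \in \EuScript{B} : D_u \cap D_T \neq \emptyset\}$ of the hypergraph $H(\EuScript{B}, \EuScript{F})$; and (ii) $\EuScript{F} := \{D_T : T \in \mT\}$ is itself a family of pseudo-disks.

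For the construction, since $G[T]$ is connected and adjacent pseudo-disks intersect, the union $\tilde{D}_T := \bigcup_{v \in T} D_v$ is connected. Inside $\tilde{D}_T$ I would draw a planar tree $\tau_T$ that visits, for every $u \in N_A(T)$, a witness point in $D_u \cap D_v$ for some chosen $v \in T$, and then let $D_T$ be a sufficiently thin tubular neighborhood of $\tau_T$. Then $D_T$ is a topological disk, hence a pseudo-disk. Property (i) holds because $\tau_T$ meets $D_u$ whenever $u \in N_A(T)$ by construction, whereas if $u \notin N_A(T)$ then $D_u$ and $\tilde{D}_T$ are disjoint closed regions, hence at positive distance, so for small enough tube width $D_T \subseteq \tilde{D}_T$ and $D_T \cap D_u = \emptyset$. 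Property (ii) is immediate: the pairwise non-adjacency of the trees in $\mT$ forces the unions $\tilde{D}_T$ to be pairwise disjoint, hence the tubes $D_T$ too, and any family of pairwise disjoint topological disks trivially forms a family of pseudo-disks.

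Each distinct neighborhood $N_A(T)$ then corresponds to a distinct edge of $H(\EuScript{B}, \EuScript{F})$, so $|\{N_A(T) : T \in \mT\}| \leq |E(H(\EuScript{B}, \EuScript{F}))|$. For the bound of \autoref{def:NCtrees:item1}, every edge has cardinality at most $|A|$, so applying \autoref{hyperpseudo} with $k = |A|$ yields $|\{N_A(T) : T \in \mT\}| = \O(|A| \cdot |A|^3) = \O(|A|^4)$, giving $\alpha = 4$ and $f_1(r) = c$ constant. For the bound of \autoref{def:NCtrees:item2}, every $N_A(T)$ has cardinality at most $p$, so \autoref{hyperpseudo} with $k = p$ gives $|\{N_A(T) : T \in \mT\}| = \O(|A| \cdot p^3)$, yielding $f_2(r,p,m) = c p^3$. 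Observe that no dependence on $r$ appears because \autoref{hyperpseudo} has no $K_{r,r}$-free hypothesis.

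The only subtle point is the tubular construction of $D_T$: the tube must remain inside $\tilde{D}_T$ (to avoid spurious intersections with $D_u$ for $u \notin N_A(T)$) while still reaching every $D_u$ with $u \in N_A(T)$. This is a routine planar-topology argument, but it is the one place where care is required; once it is in hand the application of \autoref{hyperpseudo} to both bounds is mechanical.
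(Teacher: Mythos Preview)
Your proposal is correct and follows essentially the same approach as the paper: build a pseudo-disk for every tree of $\mT$, observe that these are pairwise disjoint (hence a pseudo-disk family), and apply \autoref{hyperpseudo} with $\EuScript B=\{D_v:v\in A\}$ once with $k=p$ and once with $k=|A|$. The only cosmetic difference is that the paper takes $D_T=\bigcup_{v\in T}D_v$ directly and asserts that this union is itself homeomorphic to a disk (a standard fact for a tree in a pseudo-disk arrangement), whereas you build a thin tube inside this union; both constructions yield a disjoint family of topological disks, so the application of \autoref{hyperpseudo} is identical.
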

\begin{proof}
Let $r\geq 2$, $G$ be a $K_{r,r}$-free pseudo-disk graph, $(\D_v)_{v\in V(G)}$ be a representation of $G$ as pseudo-disks, $A\subseteq V(G)$, and $\mT$ a family of disjoint non-adjacent trees of $G-A$. We would like to apply \autoref{hyperpseudo} on the set $A$ and the family of subsets of the plane obtained by taking for each tree $T\in \mT$ the union $\D_T=\cup_{v\in T} \D_v$. Because we are considering pseudo-disks and trees, this union is homeomorphic to a disk. Moreover as the trees are non-adjacent the obtained unions do not intersect each other, $(\D_T)_{T\in \mT}$ is trivially a system of pseudo-disks. Now applying \autoref{hyperpseudo} with the families $(\D_v)_{v\in A}$ and $(\D_T)_{T\in \mT}$ directly gives that there exists a constant $c_2$ such that if for all $T\in \mT$ we have $d_A(T)\leq p$ then $|\{N_A(T),~T\in \mT\}|\leq c_2p^3|A|$ as wanted for the second bound. Observing that we can take $p=|A|$ as an upper bound for $d_A(T)$ with $T\in \mT$ gives the first wanted bound.
\end{proof}

\begin{lemma}
    There exist constants $c_1,c_2,c_3,c_4$ such that the class of pseudo-disk graphs is a nice class with parameters $\alpha=4,~f_1(r)=c_1,~f_2(r,p,m)=c_2p^3$, $\delta=\frac 12$, $\f(r)=c_3\sqrt{r\log r}$ and $\d(r)=c_4r\log r$.
\end{lemma}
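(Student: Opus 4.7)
The plan is to verify each of the four conditions of \autoref{def:prop} in turn, for the hereditary class of pseudo-disk graphs (heredity is trivial: removing a vertex from the graph amounts to removing its pseudo-disk from the representation). Condition~\ref{def:prop:nc} is precisely what \autoref{lm:pseudotreeneigh} establishes, with the stated parameters $\alpha=4$, $f_1(r)=c_1$, and $f_2(r,p,m)=c_2p^3$, so nothing more is needed there.

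For condition~\ref{def:prop:deg2}, I would fix a representation $(D_w)_{w\in V(G)}$ and, given an edge $uv$ whose endpoints both have degree~$2$ and do not lie in a triangle, replace $D_u$ and $D_v$ by their union $D_{uv}:=D_u\cup D_v$. Because $\partial D_u$ and $\partial D_v$ share at most two points, $D_u\cap D_v$ is a single lens and $D_{uv}$ is itself a topological closed disk. The crucial point is that the absence of the triangle $uvw$ forces every other vertex $w$ to be adjacent to at most one of $u,v$, so $D_w$ is disjoint from at least one of $D_u,D_v$; hence $\partial D_w\cap\partial D_{uv}$ is contained in $\partial D_w\cap\partial D_x$ for the unique $x\in\{u,v\}$ (if any) with $D_w\cap D_x\neq\emptyset$, which has at most two points by hypothesis. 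The new family is therefore a pseudo-disk representation, and its intersection graph is exactly the contracted graph.

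For conditions~\ref{def:prop:tw} and~\ref{def:prop:sparse}, the plan is to invoke \autoref{th:tw} and \autoref{th:Lee}, using the fact that pseudo-disk graphs form a subclass of string graphs (a Jordan-curve representative for each pseudo-disk can be obtained from its boundary, suitably augmented by curves to infinity to witness containment relations, with routings chosen so that disjoint pseudo-disks yield disjoint strings). Applied to any $K_{r,r}$-free pseudo-disk graph on $n$ vertices, \autoref{th:tw} yields $\tw(G)=\O(\sqrt{nr\log r})$, matching $\delta=1/2$ and $f_r=c_3\sqrt{r\log r}$, while \autoref{th:Lee} yields $|E(G)|\le c\,r\log r\cdot|V(G)|$, matching $d_r=c_4\,r\log r$. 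The main technical content of the proof lies in the geometric verification of condition~\ref{def:prop:deg2}; the remaining items are either transcribed from the preceding lemma or derived from the cited string-graph results.
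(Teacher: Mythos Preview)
Your proof is correct and takes essentially the same approach as the paper: cite \autoref{lm:pseudotreeneigh}, \autoref{th:tw}, and \autoref{th:Lee} for three of the conditions, and for condition~\ref{def:prop:deg2} replace $D_u,D_v$ by their union $D_u\cup D_v$. You supply a bit more justification than the paper does (explicitly noting that pseudo-disk graphs are string graphs so that \autoref{th:tw} and \autoref{th:Lee} apply, and checking the two-point boundary condition for the merged pseudo-disk against the remaining disks), but the structure is identical.
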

\begin{proof}
Most of the conditions were already stated with \autoref{th:tw}, \autoref{th:Lee} and \autoref{lm:pseudotreeneigh}. The only property that remains to check is that in a pseudo-disk graph $G$, the contraction of an edge between two degree-two vertices $u$ and $v$ that do not belong to a triangle results in a pseudo-disk graph. 
By considering $(\D_q)_{q\in V(G)}$ a pseudo-disk representation of $G$, we can take $\D_w=\D_u\cup \D_v$ as it is homeomorphic to a disk and its boundary will cross the neighbor of $u$ (respectively $v$) as many times as does the boundary of $\D_u$ (respectively $\D_v$).
\end{proof}

And so applying our main theorem we obtain:
\maincortwo*
Note that our main theorem gives that it suffices to take $\eta>\frac{44}{45}$ for the result to hold. Sharper results already exist in the literature for pseudo-disk graphs, by generalizing the robust algorithm of \cite{lokSODA22} dealing with disk graphs to pseudo-disk graphs, or by using the representation as pseudo-disks for obtaining an even better running time. See \cite{preprintFVSpseudo}, the full version of \cite{FVS-WG}, for both methods.
\subsection{Application to \texorpdfstring{$s$}{s}-string graphs}\label{ssec:sstring}
We now show how to adapt the arguments from the previous section to the case of $s$-strings.

\begin{lemma}\label{lm:sstringtreeneigh}
 There exist constants $c_1,c_2$ such that the class of $s$-string graphs has bounded tree neighborhood complexity with parameter $\alpha=4, f_1(r)=c_1s^4r\log r$ and $f_2(r,p,m)=c_2(sr\log r)^4(p+m)^3$.
\end{lemma}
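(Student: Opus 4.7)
The plan is to adapt the proof of the preceding lemma (for pseudo-disk graphs) to the $s$-string setting, again reducing to an application of Theorem \ref{hyperpseudo}. The main obstacle is that $s$-strings do not directly form a pseudo-disk system: two $s$-strings can share up to $s$ points, whereas pseudo-disks share at most two points of their boundary, so the naive construction of taking the strings themselves as $\EuScript{B}$ fails.

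To bridge this gap, I would first construct an auxiliary pseudo-disk family $\EuScript{B}$ from $A$. Cut each string $s_v$ with $v \in A$ at all its crossings with other strings of $A$, and slightly thicken each piece to obtain a pseudo-disk. The number of crossings among strings of $A$ is bounded via Lee's theorem (Theorem \ref{th:Lee}) applied to the $K_{r,r}$-free induced subgraph $G[A]$: it has $O(r \log r \cdot |A|)$ edges, each contributing at most $s$ crossings, for a total of $O(s r \log r \cdot |A|)$. Hence $|\EuScript{B}| = O(s r \log r \cdot |A|)$.

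For the family $\EuScript{F}$, to each tree $T \in \mT$ I would associate the thickened union $s_T = \bigcup_{v \in T} s_v$. Since $T$ induces a connected subgraph, whose vertices' strings pairwise intersect along the edges of some spanning tree, a suitable thickening turns $s_T$ into a topological disk. Because the trees of $\mT$ are pairwise non-adjacent in $G - A$, the regions $\{s_T\}_{T \in \mT}$ are pairwise disjoint. Applying Theorem \ref{hyperpseudo} to $\EuScript{B}$ and $\EuScript{F}$ bounds the number of distinct hypergraph edges, which upper-bounds the number of distinct sets $N_A(T)$ (two trees with different $N_A(T)$ produce different hypergraph edges after grouping sub-strings by their original vertex in $A$).

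Finally, I would derive the claimed bounds by controlling the size $k$ of each hypergraph edge. For $f_2$, assuming $d_A(T)\le p$ and $|T|\le m$, apply Lee's theorem to $G[T \cup N_A(T)]$: it has $O(r \log r (p+m))$ edges, each contributing at most $s$ crossings to sub-strings hit by $s_T$, giving $k = O(s r \log r (p+m))$. Plugging into the bound $O(|\EuScript{B}| k^{3})$ of Theorem \ref{hyperpseudo} yields $f_2(r,p,m) = O((s r \log r)^{4}(p+m)^{3})$, matching the statement. For $f_1$, a more careful argument must avoid the dependence on $m$ and $p$: one can bound the edge size without the Lee factor by exploiting that the neighborhood pattern only sees $A$, yielding the single $r \log r$ factor in $f_1(r) = O(s^{4} r \log r)$ with $\alpha = 4$. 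The main technical hurdle will be verifying the topological correctness of the pseudo-disk construction (in particular, that the thickening step produces a genuine pseudo-disk system whose members cross only twice on their boundaries) and squeezing out the tighter single factor of $r \log r$ in $f_1$.
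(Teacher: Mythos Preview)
Your overall approach matches the paper's: cut the $A$-strings at their mutual crossings (counted via Theorem~\ref{th:Lee}) to get a family $C$ of $O(sr\log r\,|A|)$ pieces, build one region per tree, and apply Theorem~\ref{hyperpseudo}. In both families the members are pairwise \emph{disjoint}, so the pseudo-disk property you flag as the main technical hurdle is in fact trivial. The genuine gap lies elsewhere: each region $s_T=\bigcup_{v\in T}\mS_v$ must itself be homeomorphic to a disk, and this fails in general because two strings of $T$ may meet in up to $s$ points and create bounded complementary faces; thickening does not remove holes. The paper fixes this by first cutting away tiny arcs of $s_T$ at spots where no $A$-string passes, thereby opening every bounded face while preserving the intersection pattern with $A$, and only then thickening to obtain the disk $\D_T'$. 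You need this (or an equivalent) step; the spanning-tree remark does not provide it.

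On $f_1$ you are overcomplicating things. The paper uses no careful argument: it simply takes the crude bound $x\le |C|$ on the hyperedge size and reads off $O(|C|^4)=O((sr\log r)^4|A|^4)$, which is exactly where $\alpha=4$ comes from. This actually yields $f_1(r)=O((sr\log r)^4)$ rather than the stated $s^4 r\log r$, so do not struggle to recover a single factor of $r\log r$ --- the paper's own argument does not either. Your derivation of $f_2$, bounding the hyperedge size by applying Lee's theorem to $G[T\cup N_A(T)]$, is correct and in fact more explicit than the paper, which leaves that step to the reader.
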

\begin{proof}
    The proof is very similar to the proof of \autoref{lm:pseudotreeneigh} with some additional tweaks.
    Let $G$ be a $K_{r,r}$-free $s$-string graph for some $r\geq 2$ and $(\mathcal S_v)_{v\in V(G)}$ be an $s$-string representation of $G$.
For $T\in \mT$, again we consider the region $\D_T = \bigcup_{v\in V(T)}\mS_v$.
In this case however, even though $T$ is a tree, the complement of this region is not necessarily connected (for example if two strings of $T$ intersect on several points).
In order to obtain a region of the plane homeomorphic to a disk, we modify $\D_T$ by cutting small sections  where no intersections take place (not even with a string of $A$) and obtain a new set $\D_T'\subseteq \mathcal D_T$ that is connected and such that there is exactly one arc-connected region in $\mathbb{R}^2\setminus \mathcal D_T'$ and which intersects the same strings in $A$ as $\mathcal D_T$ (see \autoref{fig:unitodisk}).
Moreover we add a small thickness to $\D_T'$ while preserving the wanted property about the number of regions in $\mathbb{R}^2\setminus \mathcal D_T'$ in order to obtain a geometric shape homeomorphic to a disk.
Observe that as there are no edges between vertices of distinct members of $\mT$, by taking the thickness small enough the connected sets of the form $\mathcal D_T'$ (for $T\in \mT$) are disjoint and so they trivially form a system of pseudo-disks.

\begin{figure}[ht]
    \centering
    \includegraphics[scale=.8]{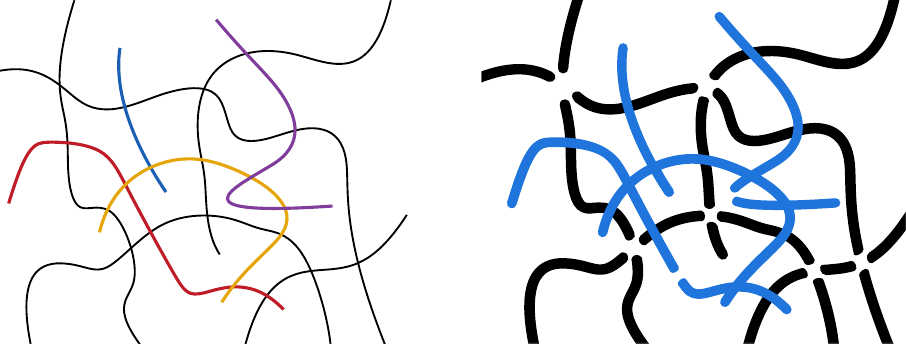}
    \caption{Transformation of a union of $s$-strings to a region homeomorphic to a disk, while keeping the same neighborhood in $A$. The strings of $A$ are depicted in black.}
    \label{fig:unitodisk}
\end{figure}

Now we would like to construct another system of pseudo-disks representing the vertices of $A$. We do this by cutting the strings in $(\mathcal S_u)_{u\in A}$ whenever two of them cross each other. Observe that by \autoref{th:Lee} there exists a constant $c$ such that by denoting $\d(r)=cr\log r$ we have at most $\d(r) |A|$ intersecting pairs, and remember that each pair intersects each other at most $s$ times as we consider a $s$-string representation. So we have at most $s\d(r) |A|$ cuts which gives an upper bound of $(s\d(r)+1)|A|$ for the number of sections of strings between intersection points. We denote $C$ the set of such sections. Again we define a family $(\mathcal R_v)_{v\in C}$ by considering the sections (which are disjoint by construction) and giving them some thickness, while preventing the sections to intersect each other. This is clearly a system of pseudo-disks as its elements are not crossing. By \autoref{hyperpseudo}, the number of distinct sets  $N_C(\mathcal D'_T) = \{v\in C,~{\mathcal D'_T} \text{ intersects } {\mathcal{R}_v}\}$, for $T\in \mT$,
is $\O(x^3|C|)$ where $x$ is the maximum, over every  $T\in \mT$, of the number of pseudo-disks in $(\mathcal R_v)_{v\in C}$ crossed by $D'_T$. This upper bound gives the second result of this lemma by observing that if a pair of pseudo-disks $\D'_{T_1},\D'_{T_2}$ intersect exactly the same set of $\mR_v$, then $N_A(T_1)=N_A(T_2)$. We obtain the first result of the lemma by using the trivial bound $x\leq |C|$.
\end{proof}

\begin{lemma}\label{lm:sstringnice}
    There exist constants $c_1,c_2,c_3,c_4$ such that for every $s\geq 1$ the class of $s$-string graphs is a nice class with parameters $\alpha=4,~f_1(r)=c_1s^4r\log r,~f_2(r,p,m)=c_2(sr\log r)^4(p+m)^3$, $\delta=\frac 12$, $\f(r)=c_3\sqrt{r\log r}$ and $\d(r)=c_4r\log r$.
\end{lemma}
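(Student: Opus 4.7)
The plan is to verify the four conditions of \autoref{def:prop} for the class of $s$-string graphs. The class is hereditary: restricting an $s$-string representation to a subset $V'\subseteq V(G)$ gives an $s$-string representation of $G[V']$. Condition~\ref{def:prop:nc} (bounded tree neighborhood complexity with the stated $\alpha,f_1,f_2$) is exactly \autoref{lm:sstringtreeneigh}. Conditions~\ref{def:prop:tw} and~\ref{def:prop:sparse} follow from \autoref{th:tw} and \autoref{th:Lee}, since every $s$-string graph is a string graph; these theorems yield precisely the claimed values $\delta=1/2$, $f_r=c_3\sqrt{r\log r}$, and $d_r=c_4 r\log r$.

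The only condition that requires a dedicated argument is Condition~\ref{def:prop:deg2}: stability under contracting an edge $uv$ where both $u$ and $v$ have degree $2$ and no triangle contains $uv$. Let $a$ (resp.\ $b$) be the other neighbor of $u$ (resp.\ $v$); the no-triangle hypothesis forces $a\neq b$. Given an $s$-string representation $(\mS_x)_{x\in V(G)}$, I will keep every $\mS_x$ for $x\notin\{u,v\}$ and replace $\mS_u$ and $\mS_v$ by a single new string $\mS_w$ representing the contracted vertex $w$. The key observation is that no vertex $x\notin\{u,v,a,b\}$ is adjacent to $u$ or $v$, so $\mS_x\cap(\mS_u\cup\mS_v)=\emptyset$. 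Hence any Jordan arc $\mS_w\subseteq\mS_u\cup\mS_v$ automatically avoids all those strings. To realize the edges $wa$ and $wb$, I fix $p_a\in\mS_u\cap\mS_a$, $p_b\in\mS_v\cap\mS_b$, a point $p_{uv}\in\mS_u\cap\mS_v$, and consider the concatenation of the subarc of $\mS_u$ from $p_a$ to $p_{uv}$ with the subarc of $\mS_v$ from $p_{uv}$ to $p_b$.

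The main technical subtlety is that this concatenation need not be a simple curve: the two subarcs can share points beyond $p_{uv}$, since $\mS_u$ and $\mS_v$ may intersect in up to $s$ points. This is remedied by a standard loop-removal argument: traversing the concatenated path from $p_a$ to $p_b$ and shortcutting whenever it revisits a previously seen point yields a Jordan arc $\mS_w\subseteq \mS_u\cup\mS_v$ with endpoints $p_a$ and $p_b$. The number of intersections of $\mS_w$ with $\mS_a$ is at most the number of intersections of $\mS_u$ with $\mS_a$ (because $\mS_v\cap\mS_a=\emptyset$, since $v$ is not adjacent to $a$), hence at most $s$; the symmetric bound holds for $\mS_b$. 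All other pairs of strings in the representation are unchanged, so the resulting family is an $s$-string representation of $G/uv$, establishing Condition~\ref{def:prop:deg2} and completing the lemma.
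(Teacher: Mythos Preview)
Your proof is correct and follows essentially the same approach as the paper: hereditariness is trivial, Conditions~\ref{def:prop:nc}, \ref{def:prop:tw}, \ref{def:prop:sparse} are delegated to \autoref{lm:sstringtreeneigh}, \autoref{th:tw}, and \autoref{th:Lee}, and the only substantive work is the contraction condition, handled by building $\mS_w$ as a subarc of $\mS_u\cup\mS_v$. The one minor difference is how simplicity of $\mS_w$ is ensured: the paper picks the subarcs so that the interior of $a_u\subseteq\mS_u$ is hit by no string (hence $a_v\subseteq\mS_v$ cannot recross it) and obtains the sharper bound $|\mS_w\cap\mS_a|=|\mS_w\cap\mS_b|=1$, whereas you use a loop-removal shortcut and the weaker bound $\le s$; both suffice.
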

\begin{proof}
Again most of the conditions were already stated with \autoref{th:tw}, \autoref{th:Lee} and \autoref{lm:sstringtreeneigh}. The only remaining property to prove is that given an $s$-string graph $G$, contracting an edge between degree-two vertices $u$ and $v$ that do not belong to a triangle preserves being a $s$-string graph. We do this by replacing in a $s$-string representation $(\mS_q)_{q\in V(G)}$ of $G$ the strings $\mS_u$ and $\mS_v$ with a new string $\mS_w$ while preserving the global property of being an $s$-string family.  $\mS_u$ contains a Jordan arc $a_u$ linking its two neighbors, without being intersected by any string in its interior. Let denote $p_v$ the extremity of $a_u$ in $\mS_v$. Then $\mS_v$ contains a Jordan arc $a_v$ linking $p_v$ to the string representing its second neighbor, without being intersected by any other string than $\mS_u$. The string $\mS_w$ is then obtained by taking the union of $a_u$ and $a_v$. This construction ensures that the number of crossings between two strings other than $\mS_u$ and $\mS_v$ of the initial representation remain the same. We now have to bound the number of crossing of $\mS_w$ with others strings.
Recalling that $u$ and $v$ do not share a common neighbor, and that by construction $w$ has $2$ crossings (one for each distinct neighbor), $w$ does not have more than one crossing with some other string. So after replacing $\mS_u$ and $\mS_v$ by $\mS_w$ we still have a $s$-string family.
\end{proof}
\begin{figure}
    \centering
    \includegraphics[scale=.6]{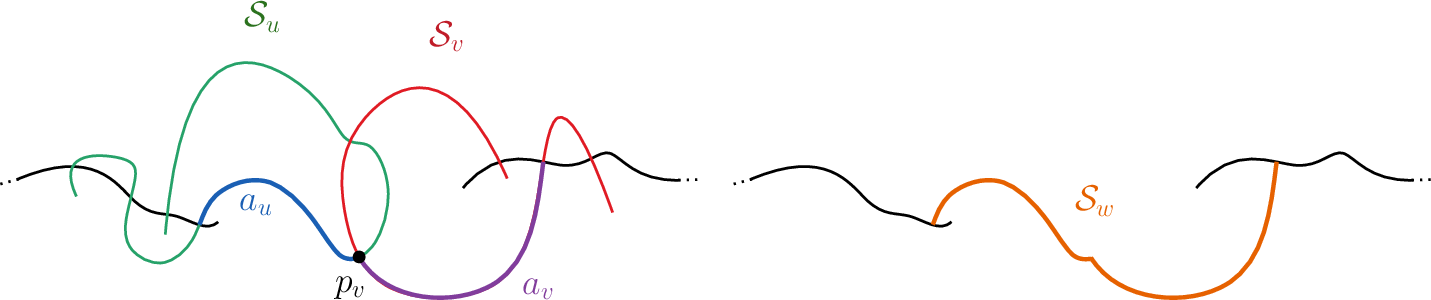}
    \caption{Illustration of the construction used in the proof of \autoref{lm:sstringnice} for contracting an edge between adjacent degree-two vertices without a common neighbor.}
    \label{fig:rule2}
\end{figure}

And so applying our main theorem we obtain:
\maincor*
More precisely, we can take $\eta=\frac{52}{53}$.

\section{Conclusion}\label{sec:ccl}

In this paper we gave general sufficient conditions for the existence of subexponential parameterized algorithms for the \fvs problem. Our main theorem unifies the previously known results on several classes of graphs such as planar graphs, map graphs, unit-disk graphs, disk graphs, or more generally pseudo-disk graphs, and string graphs of bounded degree.
It also applies to classes where such algorithms were not known to exist, notably intersection graphs of thin objects such as segment graphs and more generally $s$-string graphs.

However, we have so far no evidence that our concept of nice class could be an answer to \autoref{q:main}. So a natural direction for future research would be to investigate more general graph classes than those listed above or to discover new classes that fit in our framework.
There are two natural candidates:
\begin{itemize}
    \item Intersection graphs of objects in higher dimensions.  However, as proved in \cite{fomin2018excluded}, under ETH intersection graphs of unit balls in $\RR^3$ do not admit a subexponential parameterized algorithm for \FVS.
    \item Graph classes excluding an induced minor.\footnote{A graph $H$ is said to be an \emph{induced minor} of a graph $G$ if it can be obtained from $G$ by deleting vertices and contracting edges. Otherwise, $G$ is said to \emph{exclude} $H$ as induced minor or to be \emph{$H$-induced minor free}.} This is more general than string graphs (which exclude a subdivided $K_5$ as an induced minor). In such classes, Korhonen and Lokshtanov \cite{korhonen2023induced} recently provided an algorithm solving \FVS in time $2^{\O_H(n^{2/3}\log n)}$, that is, subexponential in the input size~$n$. So far a subexponential FPT algorithm (parameterized by $k$) is still missing.
    \end{itemize}
    
    Less general than the previous item is the class of string graphs, which is nevertheless the most general class of intersection graphs of (arc-connected) objects in the Euclidean plane.  This class is not a nice class, as we are missing the property of bounded tree neighborhood complexity. In fact, even the more restricted case of the class of outerstring graphs\footnote{Outerstring graphs are the string graphs with a representation such that there is a circle containing at least one endpoint of each string.} does not have bounded (tree) neighborhood complexity. 
    This is witnessed by a construction we describe now, that is illustrated in~\autoref{fig:counterNeigh}. For $r>3$, construct an outerstring graph $G$ as follows: consider a clique of size $r$ obtained as an outerstring graph by taking $r$ strings with one extremity on a circle and then permuting the order of the strings in order to obtain the $r!$ permutations. It is then not hard to see that it is possible to add $2^{r}$ disjoint strings with one endpoint on the circle all crossing distinct subsets of the $r$ initial strings. The graph $G$ is then the intersection graph of those $r+2^r$ strings. Observe that $G$ has only $r+1$ vertices with degree at least $r$, and so does not have $K_{r,r}$ as a subgraph. By taking $A\subseteq V(G)$ the vertices associated to the $r$ first strings, we have $|\{N(v)\cap A: v\in V(G)\}|=2^r$. A bound of the form $\O(r^c\cdot |A|)$ for some $c$ is thus impossible.
    Note that we can create examples of arbitrarily large size by taking disjoint copies of this construction. 
    
    \begin{figure}[ht]
    \centering
    \includegraphics[width=0.35\textwidth]{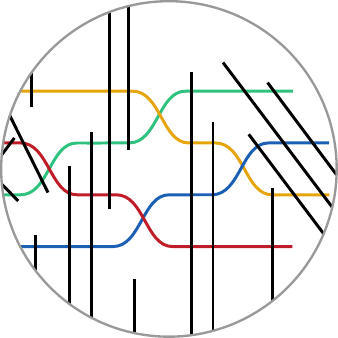}
    \caption{The class of outerstring graphs does not have bounded tree neighborhood complexity, as in a set $A$ of $t$ outerstrings crossing each other in a way their top to bottom order goes through many permutations (the strings in color in the figure, with $t=4$), it is possible to generate $2^t$ distinct neighborhoods in $A$ (in black).}
    \label{fig:counterNeigh}
\end{figure}

    As \autoref{lem:findkrr} allows reducing to $K_{t,t}$-free subgraphs, one way to deal with string graphs would be proving that $K_{t,t}$-free string graphs are $t^c$-string graphs for some constant $c$. Unfortunately, this fails as there are string graphs with $n$ vertices, hence $K_{n,n}$-free, that require $2^{cn}$ crossings for some constant~$c$~\cite{stringexpo}.

\bigskip
\bibliography{biblio} \bibliographystyle{alpha}

\end{document}